\documentclass{statsoc}
\usepackage[a4paper]{geometry}
\usepackage[textwidth=8em,textsize=small]{todonotes}
\usepackage{amsmath}
\usepackage{amssymb}
\usepackage{bm}
\usepackage{enumitem}
\usepackage{graphicx,xcolor}
\usepackage{natbib}

\usepackage{xr}
\makeatletter
\newcommand*{\addFileDependency}[1]{
  \typeout{(#1)}
  \@addtofilelist{#1}
  \IfFileExists{#1}{}{\typeout{No file #1.}}
}
\makeatother
\newcommand*{\myexternaldocument}[1]{
    \externaldocument{#1}
    \addFileDependency{#1.tex}
    \addFileDependency{#1.aux}
}
\myexternaldocument{supplementary2}


\newcommand{\E}{\ensuremath{\mathrm{E}}}

\newcommand{\Ink}[1]{\overline{I}_{\bn_k}(\bk;#1)}
\newcommand{\In}[1]{\overline{I}_{\bn}(\bk;#1)}
\newcommand{\Jx}{\mathcal{J}_{\mathbf{n}}}

\newcommand{\N}{\mbox{I\hspace{-.15em}N}}
\newcommand{\R}{\mbox{I\hspace{-.15em}R}}
\newcommand{\T}{\mathcal{T}}

\newcommand{\Z}{\mbox{Z\hspace{-.48em}Z}}

\newcommand{\bX}{\mathbf{X}}

\newcommand{\bdelta}{\boldsymbol{\delta}}
\newcommand{\bgamma}{\boldsymbol{\gamma}}
\newcommand{\blambda}{\boldsymbol{\lambda}}
\newcommand{\bg}{\mathbf{g}}

\newcommand{\bk}{\boldsymbol{\omega}}
\newcommand{\bn}{\mathbf{n}}
\newcommand{\bomega}{\boldsymbol{\omega}}
\newcommand{\bs}{\mathbf{s}}
\newcommand{\btheta}{\boldsymbol{\theta}}

\newcommand{\bu}{\mathbf{u}}

\newcommand{\cov}{\ensuremath{\mathrm{cov}}}

\newcommand{\n}{|\bn|}
\newcommand{\revision}[1]{#1}

\newcommand{\var}{\ensuremath{\mathrm{var}}}
\newcommand{\pq}{{(q)}}
\newcommand{\pr}{{(r)}}
\newcommand{\pqr}{{(qr)}}

\newcommand{\cum}{\text{cum}}

\newtheorem{assumption}{Assumption}
\newtheorem{definition}{Definition}
\newtheorem{lemma}{Lemma}
\newtheorem{corollary}{Corollary}
\newtheorem{proposition}{Proposition}
\newtheorem{theorem}{Theorem}

\title{The Debiased Spatial Whittle Likelihood}
\author[Arthur P.~Guillaumin et al.]{Arthur P.~Guillaumin}
\address{Queen Mary University of London, United Kingdom}
\email{ag7531@nyu.edu}
\author{Adam M.~Sykulski}
\address{Lancaster University, United Kingdom}
\author{Sofia C.~Olhede}
\address{\'Ecole Polytechnique F\'ed\'erale de Lausanne, Switzerland\\ 
         University College London, United Kingdom}
\author[Arthur P.~Guillaumin et al.]{Frederik J.~Simons}
\address{Princeton University, USA}
\begin{document}

\begin{abstract}
We provide a computationally and statistically efficient method for
estimating the parameters of a stochastic covariance model observed on a
regular spatial grid in any number of dimensions. Our proposed method, which
we call the Debiased Spatial Whittle likelihood, makes important corrections
to the well-known Whittle likelihood to account for large sources of bias
caused by boundary effects and aliasing. We generalise the approach to
flexibly allow for significant volumes of missing data including those with
lower-dimensional substructure, and for irregular sampling boundaries. We
build a theoretical framework under relatively weak assumptions which
ensures consistency and asymptotic normality in numerous practical settings
including missing data and non-Gaussian processes. We also
extend our consistency results to multivariate processes. We provide
detailed implementation guidelines which ensure the estimation procedure can
be conducted in $\mathcal{O}(n\log n)$ operations, where $n$ is the
number of points of the encapsulating rectangular grid, thus keeping the
computational scalability of Fourier and Whittle-based methods for large
data sets. We validate our procedure over a range of simulated and real-world settings, and compare with state-of-the-art alternatives,
demonstrating the enduring practical appeal of Fourier-based methods,
provided they are corrected by the procedures developed in this paper.
\end{abstract}

\noindent
\textit{Keywords}: Random fields; Missing data; Irregular boundaries;
Aliasing; Whittle likelihood


\section{Introduction}
\label{sec:Intro}

Among the challenges of modern data analysis is making sense of large
volumes of spatial and spatiotemporal data. State-of-the-art parameter
estimation methods are based on various likelihood approximations
designed to combine statistical and computational efficiency. Such approximations
are primarily reliant on spatial/pixel models
\citep{anitescu2017,guinness2017circulant,katzfuss2017multi,stroud2017bayesian},
spectral/Fourier understanding
\citep{kaufman2008covariance,matsuda2009fourier,shaby2012tapered,guinness2017spectral},
or other methods of likelihood approximation
\citep{stein2004approximating,banerjee2008gaussian,lee2013locally,sang2012full}.
On the one hand, Fourier-based methods, typically based on the Whittle likelihood, are fast and
scale well to massive data sets. On the other hand, Fourier-based methods
are known to engender large sources of bias, particularly in dimensions
greater than one~\citep{dahlhaus1987edge}, in the presence of missing data,
or under irregular sampling
\citep{fuentes2007approximate,matsuda2009fourier}. In this paper we propose
a novel methodology that simultaneously addresses these challenges for
spatial data observed on a regular grid, with potentially missing data or
irregular sampling boundaries, and in any number of dimensions.

The bias which we remove is due to finite-domain effects, the
multidimensional boundary, and aliasing. Much of the
literature on Whittle estimation has focused on modifications to the
periodogram to reduce bias, such as tapering~\citep{dahlhaus1987edge},
edge-effect estimation~\citep{robinson2006modified}, or accounting for
non-standard sampling scenarios
\citep{fuentes2007approximate,matsuda2009fourier,rao2018statistical}. The
solution we propose is simple yet effective: determine the true expectation
of the periodogram under the proposed model and sampling regime, and construct a quasi-likelihood using this quantity
rather than the true spectrum---further developing and generalizing a
procedure recently proposed by \cite{sykulski2019debiased} for
one-dimensional completely observed time series. We shall show that the
Debiased Spatial Whittle likelihood almost completely removes estimation
bias in spatial inference, even in the presence of significant amounts of
missing data, while leaving estimation variance essentially unaffected. We
also establish a convergence rate under very general sampling and model
assumptions.

Debiasing Whittle estimates using the expected periodogram has been
notionally investigated in various more restrictive frameworks by
\cite{fernandez2010spatial}, \cite{simons2013maximum}, and
\cite{deb2017asymptotic}. This article, however, is the first to formalize
the estimation procedure by providing theoretical guarantees that apply in
any number of dimensions, allow for missing and/or
  non-Gaussian data, and account for aliasing and
irregular sampling boundaries. To achieve this we introduce the concept of
\emph{significant correlation contribution}, which provides weak conditions
on sampling regimes that allow for consistent parameter
estimation---leveraging ideas from modulated time series proposed by
\cite{guillaumin2017analysis}. Boundary effects play a significant role as
$d$, the dimensionality of the sampling domain, increases: the bias for a
$d$-dimensional cube with side $l$ scales like $1/l$ while the standard
deviation scales like $1/l^{d/2}$. Thus for $d>2$ the bias is of primary
significance, and it is important even for $d=2$. This paper is also the
first to provide fast $n \log n$ computational implementation, including for
missing data and higher dimensions.  We also prove consistency for
multivariate processes which may exhibit different missingness patterns
across components.

We establish the choice of notation and assumptions in
Section~\ref{sec:notations}. We propose our spatial quasi-likelihood in
Section~\ref{sec:Meth}. In Section~\ref{sec:scc} we introduce significant
correlation contribution, with conditions guaranteeing consistent estimation
under a wide range of sampling schemes. Section~\ref{sec:theory} develops
our theoretical results which include consistency, convergence rates, and asymptotic normality of parameter estimates in a wide range of settings. Section~\ref{sec:sim} shows the improved
performance on simulated data, and on actual data of Venus' topography. We
conclude with discussion in Section~\ref{sec:discussion}.

\section{Notation and assumptions}
\label{sec:notations}

Consider a finite-variance and zero-mean random field $X(\bs)$, for
$\bs\in\R^d$, where $d\geq 1$ is a positive integer. Under the assumption of
homogeneity, we denote the covariance function of $X(\bs)$ by $c_X(\bu)$,
$\bu\in\R^d$, and assume the existence of a positive piecewise-continuous
Riemann-integrable spectral density function $f_X(\bk)$, such that
$\forall\bu,\bs\in\R^d,$
\begin{equation}\label{eq:defSpectralDensity}
c_X(\bu)
=\E\left\{X(\bs)X(\bs+\bu)\right\} 
=\int_{\R^d}{f_X(\bk)\exp(i\bk\cdot\bu)\,d\bk}
,
\end{equation}
and
$f_X(\bk)=(2\pi)^{-d}\int_{\R^d}{c_X(\bu)\exp(-i\bk\cdot\bu)\,d\bu}$. We
shall assume the spectral density belongs to a parametric family indexed by
the parameter $\bgamma\in \Theta$, with $f_X(\bk) = f(\bk;\btheta)$, denoting
the true parameter value by $\btheta\in\Theta\subset\R^{p_\theta}, 
p_\theta\geq 1$.
Equivalently, we write $c_X(\bu;\bgamma), \gamma\in\Theta$
for the parametric family of covariances. The random field $X(\bs)$ is
taken to be homogeneous but not necessarily isometric. We denote
$\mathbf{n}=(n_1, \ldots, n_d)\in(\N^+)^d$, with $\N^+$ the set of positive
integers, the dimensions of an orthogonal regular and rectangular
\emph{bounding grid}, defined by
\begin{equation}\label{eq:completelattice}
\Jx = \left\{\bdelta \circ [x_1, \ldots, x_d]^T: 
(x_1,\ldots,x_d)\in\N^d, \ 0\leq x_i \leq n_i-1, i=1,\ldots, d\right\},
\end{equation}
and denote by $\n=\prod_{i=1}^d{n_i}$ the total number of points of this
grid. We denote by $X_{\bs}, \ \bs\in\mathcal{J}_\bn$ the values of the
process on the grid. In~\eqref{eq:completelattice}, the quantity
$\bdelta\in(\R^+)^d$ indicates the regular spacing along each axis, with
$\R^+$ the set of positive real numbers, and $\circ$ denotes the pointwise
Hadamard product between two vectors. We always take
$\bdelta=[1,\ldots,1]^T$ for simplicity, yet without loss of generality. We
write $f_{X,\bdelta}(\bk)$ for the spectral density of the sampled process,
the \emph{aliased} spectral density, defined by
\begin{equation}\label{aliased}
f_{X,\bdelta}(\bk) = \sum_{\bu\in\Z^d}{f_X\left(\bk + 2\pi \bu\right)},
\qquad\bk\in\R^d,
\end{equation}
which is a Fourier pair with
$c_X(\bu)=\int_{\T^d}{f_{X,\delta}(\bk)\exp(i\bk\cdot\bu)\,d\bk}$,
$\forall\bu\in\Z^d$, and $\T=[0,2\pi)$, with $\Z$ the set of
  integers.

To account for irregular domain shapes and missing data, we define a
deterministic modulation value $g_\bs$ at each location of the grid
$\mathcal{J}_\bn$. If a point on the regular grid is missing then $g_\bs =
0$, otherwise $g_\bs = 1$. By convention, $g_\bs$ is extended to the whole
set $\Z^d$, defining $g_\bs = 0$ if $\bs\notin\mathcal{J}_\bn$. Using this
notation, the periodogram of the observed data takes the form of
\begin{equation}\label{eq:definitionPeriodogram}
I_{\bn}(\bk) = \frac{(2\pi)^{-d}}{\sum_{\bs\in\mathcal{J}_\bn}{g_\bs}^2}
\left| \sum_{\bs\in\Jx}{g_\bs X_\bs
  \exp(-i\bk\cdot\bs}) \right|^2, 
\qquad\bk\in\R^d,
\end{equation}
where normalizing by $\sum_{\bs\in\mathcal{J}_\bn}{g_\bs}^2$ rescales the
periodogram for missing data, as performed by
\cite{fuentes2007approximate}. Note that, despite this similarity, our
approach is fundamentally different to that
of~\cite{fuentes2007approximate} who uses this extended definition of the
periodogram in the Whittle procedure to address missing data. While
this uniform rescaling is central to the method proposed
by~\cite{fuentes2007approximate}, it is merely a convention in our case. In
practice, this rescaling is not actually required in our implementation, as it will be cancelled out by the rescaling in the expected periodogram, as we shall shortly show.
Evaluating the periodogram on the multidimensional Fourier grid
\begin{equation*}
\prod_{j=1}^d{\left \{2\pi kn_j^{-1}:k=0,\ldots,n_j-1\right\}}
\end{equation*}
associated with the spatial grid $\Jx$ requires $\mathcal{O}(\n\log \n)$
elementary operations using the Fast Fourier Transform (FFT). If a taper is
used in the spectral estimate of~\eqref{eq:definitionPeriodogram}, then the
values of the taper are directly incorporated into $g_\bs$, such that
$g_\bs$ is proportional to the taper at locations where data are observed
(and still set to zero otherwise). We shall assume that $g_\bs$ takes values
in the interval $[0,1]$ as would be the case when using the periodogram, however this condition could be relaxed to assuming
an upper-bound for the absolute value.

\section{Methodology}
\label{sec:Meth}


We shall now introduce the Debiased Spatial Whittle likelihood
and an algorithm for its computation that only
requires FFTs, even in the scenario of missing data and general
boundaries. Thus our estimation method retains the
$\mathcal{O}(\n\log \n)$ computational cost of frequency-domain approaches
for regular grids.

\subsection{Estimation procedure}
\label{sec:parametricestimator}

Exact likelihood has optimal statistical properties in the framework
of an increasing domain~\citep{Mardia1984}, however it is computationally inadequate
for large data sets of spatial observations due to the determinant
calculation and linear system that needs to be solved. A common approach is
to trade off computational cost with statistical efficiency by using
approximations of the likelihood function
\citep{fuentes2007approximate,Varin2011,guinness2017circulant}. Such
functions are commonly called quasi-likelihood methods. Our proposed
estimation method uses the following quasi-likelihood, which we call
the Debiased Spatial Whittle Likelihood,
\begin{equation}\label{eq:pseudolkh}
\ell(\bgamma) = \n^{-1}\sum_{\bk\in\Omega_{\bn}}\left\{\log
\overline{I}_{\bn}(\bk;\bgamma) +
\frac{I_{\bn}(\bk)}{\overline{I}_{\bn}(\bk;\bgamma)}\right\}
\end{equation}
where, for all $\bgamma\in\Theta$, 
\begin{equation}
\overline{I}_{\bn}(\bk;\bgamma) = \E_{\bgamma}\{I_{\bn}(\bk)\}, \quad
\forall\bk\in\T^d,
\end{equation}
is the expected periodogram given the modulation values $g_\bs$, under the
mean-zero distribution of $X_\bs$ with covariance structure specified by the
parameter vector $\bgamma$---see also \citet{fernandez2010spatial}. In
Section~\ref{sec:multivariate} we describe the multivariate extension
to~\eqref{eq:pseudolkh} in~\eqref{multdebias}.  Note that
in~\eqref{eq:pseudolkh} the summation is over $\Omega_\bn\subset\T^d$.  It
is common to use the~\emph{natural} set of Fourier frequencies
$\Omega_{\bn}^{(1)}\equiv\prod_{j=1}^d\{2k\pi n_i^{-1}:k=0,\ldots, n_i-1\}$
for $\Omega_\bn$ in Whittle estimation, or a subset of these for
semi-parametric modelling.  To ensure identifiability in degenerate sampling
scenarios, when one or more of the dimensions of the domain are not growing
to infinity, we shall set $\Omega_\bn$ to be the set of Fourier frequencies
$\Omega_{\bn}^{(2)}\equiv\prod_{j=1}^d\{k\pi n_i^{-1}:k=0,\ldots, 2n_i-1\}$
in our theoretical developments. In practice, we shall use the natural set
of Fourier frequencies $\Omega_{\bn}^{(1)}$ in our simulations and real-data
example, as this is computationally faster and the practical difference was found to be insignificant.

Replacing $\overline{I}_{\bn}(\bk;\bgamma)$ with $f_X(\bk;\bgamma)$
in~\eqref{eq:pseudolkh} yields the discretised form of the standard Whittle
likelihood. Note however that, unlike the spectral density $f_X(\bk)$, the
expected periodogram $\overline{I}_{\bn}(\bk;\bgamma)$ directly accounts
for the sampling, as it depends on the dimensions of the lattice $\bn$ and
on the modulation values $g_\bs$ that account for missing data. We
minimize~\eqref{eq:pseudolkh} over $\Theta$ to obtain our estimate
\begin{equation}\label{eq:DebiasedWhittleLKH}
\widehat{\btheta} = \arg\min_{\bgamma\in\Theta}\{\ell(\bgamma)\}.
\end{equation}

By minimizing~\eqref{eq:pseudolkh}, we find the maximum-likelihood estimate
of the data under the following parametric model,
\begin{equation}\label{eq:approxmodel}
I_\bn(\bk) \stackrel{i.i.d.}{\sim}
\text{Exp}\left\{\overline{I}_\bn^{-1}(\bk;\btheta)\right\},
\qquad\bk\in\Omega_\bn,
\end{equation}
where $\text{Exp}(\lambda)$ stands for the exponential distribution with
parameter $\lambda$. Hence the quantity given in~\eqref{eq:pseudolkh} can be
seen as a composite likelihood~\citep{Varin2011,Bevilacqua2015comparing}. We
also observe that $\nabla_{\btheta}\ell(\btheta) = \mathbf{0}$ such that our
method fits within the general theory of estimating
equations~\citep{Heyde1997,Jesus2017}.

\subsection{Computation of the expected periodogram}
\label{sec:computeExpectedPeriodogram}

In this section we show how the expected periodogram in~\eqref{eq:pseudolkh}
can be computed using FFTs such that our quasi-likelihood remains an
$\mathcal{O}(\n\log \n)$ procedure, for any dimension $d$, and independently
of the missing data patterns. Direct calculations show that the expected
periodogram is the convolution of the spectral density of the process with
the multi-dimensional kernel $\mathcal{F}_{\bn}(\bk)$,
\begin{equation}\label{barI}
\overline{I}_{\bn}(\bk;\bgamma) = 
\left\{f_X(\ \cdot \ ;\bgamma) \ast
\mathcal{F}_{\bn}(\cdot)\right\}(\bk) \nonumber = 
	\int_{\T^d}{
  f_{X,\bdelta}(\bk-\bk';\bgamma)\mathcal{F}_{\bn}(\bk')\,d\bk'},
\end{equation}
where
\begin{equation}
\label{eq:generalFejerKernel}
\mathcal{F}_{\bn}(\bk) = \frac{(2\pi)^{-d}}{\sum{g_\bs^2}}
\left|\sum_{\bs\in\Jx}{g_\bs \exp(i\bk\cdot\bs)}\right|^2, 
\qquad\bk\in\R^d.
\end{equation}
When $g_\bs=1, \ \forall\bs\in\mathcal{J}_\bn$, $\mathcal{F}_{\bn}(\bk)$ is
simply the multi-dimensional rectangular F\'ejer kernel, i.e. a separable
product of one-dimensional F\'ejer kernels. For this reason we call
$\mathcal{F}_{\bn}(\bk)$ a \emph{modified} F\'ejer kernel. We now provide
two lemmata stating that the expected periodogram can be computed via FFTs
for any value of the modulation $g_\bs$ on the grid $\mathcal{J}_{\bn}$.
\begin{lemma}[Expected periodogram as a Fourier series]
\label{eq:computeExpectedPeriodogram}
The expected periodogram can be written as the following Fourier series,
\begin{equation}
\label{eq:perFourierSeries}
\In{\bgamma} = (2\pi)^{-d}\sum_{\bu\in\Z^d}{\overline{c}_\bn(\bu;\bgamma) \exp(-i\bk\cdot\bu)},
\quad\forall\bk\in\T^d, \forall\bgamma\in\Theta,
\end{equation}
where $\overline{c}_\bn(\bu;\bgamma)$ is defined by
\begin{equation}
  \overline{c}_\bn(\bu;\bgamma) = c_{g,\bn}(\bu)c_X(\bu; \bgamma), 
\qquad\bu\in\Z^d,\quad\text{with}
\end{equation}
\begin{equation}\label{eq:cgdef}
  c_{g,\bn}(\bu) = \frac{\sum_{\bs\in\mathcal{J}_\bn}{g_\bs g_{\bs+\bu}}}
	{\sum_{\bs\in\mathcal{J}_\bn}{g_\bs^2}}, 
\qquad\bu\in\Z^d.
\end{equation}
\end{lemma}
\begin{proof}
Direct calculation upon taking the expectation of the periodogram as defined
in~\eqref{eq:definitionPeriodogram}.
\end{proof}
Note that, having set $g_\bs$ to take value zero outside of the sampling
domain, we can rewrite \eqref{eq:cgdef} as
\begin{equation}
c_{g,\bn}(\bu) = \frac{\sum_{\bs\in\Z^d}{g_\bs g_{\bs+\bu}}}
{\sum_{\bs\in\Z^d}{g_\bs^2}}, 
\qquad\bu\in\Z^d.
\end{equation}
In practice we can evaluate the expected periodogram at the set of Fourier
frequencies through a multidimensional FFT, as detailed in the following
lemma.
\begin{lemma}[\revision{Computation of the expected periodogram via FFT}]
\label{lemmacompfft}
The expected periodogram can be expressed as
\begin{equation}
\overline{I}_{\bn}(\bk;\bgamma) = (2\pi)^{-d}
\sum_{u_1=0}^{n_1 - 1}
\ldots
\sum_{u_d=0}^{n_d - 1}
{
\widetilde{c}_{\bn}(\bu)\exp(-i\bk_k\cdot\bu)
}, \quad \forall\bk\in\Omega_\bn^{(1)},
\label{Ibar}
\end{equation}
where
\begin{equation}
	\label{eq:defctilde}
	\widetilde{c}_{\bn}(\bu) = \sum_{\mathbf{q}}{
		\overline{c}_\bn\left(\bu - \mathbf{q} \circ \bn;\bgamma\right) ,
	} \quad \bu\in\Z^d,
\end{equation}
and where the sum over $\mathbf{q}$ ranges over all vectors of size~$d$ with
elements in the set $\{0, 1\}$ (hence, $2^d$ of them), and where $\circ$
denotes the Hadamard product. \revision{Thus the expected periodogram can be
  computed via FFT.  Note that $\widetilde{c}_{\bn}$ is a \emph{periodized}
  version of $\overline{c}_{\bn}$
  as $\widetilde{c}_{\bn}(\bu -\mathbf{q} \circ \bn)=
 \widetilde{c}_{\bn}(\bu).$ }
\end{lemma}
\begin{proof}
Please see the Supplementary Material.
\end{proof}

As an example, in dimension $d=2$, $q$ takes values in $\left\{[0 \ 0]^T,
[1\ 0]^T, [0\ 1]^T, [1\ 1]^T\right\}$, and~\eqref{Ibar} therefore takes the
form
\begin{align*}
\overline{I}_{\bn}(\bk;\bgamma) = (2\pi)^{-d}\sum_{u_1=0}^{n_1-1}
\sum_{u_2=0}^{n_2-1}
&\left\{
\overline{c}_\bn\left(u_1,  u_2; \ \bgamma\right) +
\overline{c}_\bn\left(u_1-n_1,  u_2-n_2; \ \bgamma\right)\right.\\\nonumber 
&
+{}\left.\overline{c}_\bn\left(u_1, u_2-n_2; \ \bgamma\right)+
\overline{c}_\bn\left(u_1-n_1, \ u_2; \ \bgamma\right)
\right\}
\exp(-i\bk_k\cdot\bu).
\end{align*}
We remind the reader that $g_\bs$ is defined to be zero outside
$\mathcal{J}_\bn$. Hence, in the case of no tapering, $c_{g,\bn}(\bu)$ in~\eqref{eq:cgdef} is
the ratio of the number of pairs of observations \emph{separated} by the
vector $\bu$ over the total number of \emph{observed} points of the
rectangular grid $\mathcal{J}_\bn$. In the special case of complete
observations on the rectangular grid, \eqref{eq:cgdef} simplifies to
\begin{align}\label{eq:cgcomplete}
  c_{g,\bn}(\bu) = 
  \begin{cases} 
    \n^{-1}\prod_{i=1}^d \left(n_i-|u_i|\right) =
    \prod_{i=1}^d{\left(1-\frac{|u_i|}{n_i}\right)}& \ \ \text{if }
    |u_i|\leq n_i-1, i = 1,\ldots,d,\\ 
    0 & \ \ \text{otherwise},
  \end{cases}
\end{align}
which is a multidimensional form
of the triangle kernel found 
in \citet[p.198]{percival1993spectral} for the expected periodogram of regularly sampled time series.
In the general case, $c_{g,\bn}(\bu)$ is precomputed for all relevant values
of $\bu$ via an FFT independently of the parameter value $\bgamma$, such
that our method can be applied to scenarios of missing data without loss of
computational efficiency. Similarly, we can combine our debiasing procedure
with tapering by using a tapered spectral estimate for $I_{\bn}(\bk)$
in~\eqref{eq:pseudolkh} with adjusted values for $g_\bs$ (as discussed at
the end of Section~\ref{sec:notations}). The expected periodogram,
$\overline{I}_{\bn}(\bk;\bgamma)$, is then computed on $\Omega_{\bn}$ by
using these values of $g_\bs$ in the formulation of $c_{g,\bn}(\bu)$
in~\eqref{eq:cgdef}. Combining debiasing and tapering therefore remains an
$\mathcal{O}(\n\log \n)$ procedure.  The procedure of~\eqref{Ibar}
automatically incorporates sampling effects (geometry of the observation
region, missing observations), aliasing, and boundary effects, in one
$\mathcal{O}(\n\log \n)$ operation.  Note that merely calculating the
aliased spectral density, and using this in the Whittle likelihood, requires
knowledge of the full decay of the spectrum, and deciding on how many
aliased terms to include; a procedure that in general requires non-automatic
intervention and is not guaranteed to be $\mathcal{O}(\n\log \n)$.

\section{Properties of sampling patterns}
\label{sec:scc}

To account for missing observations on the rectangular grid $\Jx$, we
replace missing values with zeros via the modulation function
$g_\bs$. Depending on $g_\bs$ this may result in losing identifiability of
the parameter vector from the second-order moment quantities available from
the data. More generally, we wish to understand how the sampling pattern
affects the consistency of our estimation procedure. To this end, we define
the notion of significant correlation contribution for spatial random
fields, which determines whether the sampling pattern samples
a sufficient number of \emph{spatial lags} where information about the model lies. This
generalizes ideas from modulated time series
\citep{guillaumin2017analysis}. Following three simple lemmata on some
properties of $c_{g,\bn}(\bu)$, we shall provide the formal definition of
Significant Correlation Contribution (SCC), and follow with some general
cases and an example with an isometric model family to provide more
intuition and demonstrate the generality of our framework.

\subsection{Basic properties of $c_{g,\bn}(\bu)$ and $\mathcal{F}_\bn(\bk)$}

We state three basic properties of the introduced quantity $c_{g,\bn}(\bu)$
 in order to provide more intuition, but also for further use in this
paper.
\begin{lemma}We have
  \begin{equation}
    0\leq c_{g,\bn}(\bu)\leq 1, \quad \forall\bu\in\Z.
  \end{equation}
\end{lemma}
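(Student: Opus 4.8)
The plan is to treat the two inequalities separately, both following directly from the definition~\eqref{eq:cgdef}. For the lower bound, I would note that the standing hypothesis $g_\bs\in[0,1]$ forces every factor $g_\bs$ and $g_{\bs+\bu}$ to be nonnegative, so each summand $g_\bs g_{\bs+\bu}$ in the numerator is nonnegative, while the denominator $\sum_{\bs\in\Jx} g_\bs^2$ is strictly positive (assuming at least one observed location, so that the ratio is well defined). Hence $c_{g,\bn}(\bu)\geq 0$ is immediate.

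For the upper bound the natural tool is the Cauchy--Schwarz inequality. First I would use the convention that $g_\bs=0$ for $\bs\notin\Jx$ to rewrite the numerator as a sum over all of $\Z^d$, namely $\sum_{\bs\in\Jx} g_\bs g_{\bs+\bu} = \sum_{\bs\in\Z^d} g_\bs g_{\bs+\bu}$, since extending the range only adds vanishing terms. Cauchy--Schwarz on $\ell^2(\Z^d)$ then gives
\[
\sum_{\bs\in\Z^d} g_\bs g_{\bs+\bu}
\leq \Bigl(\sum_{\bs\in\Z^d} g_\bs^2\Bigr)^{1/2}\Bigl(\sum_{\bs\in\Z^d} g_{\bs+\bu}^2\Bigr)^{1/2}.
\]
The key observation is that the second factor is translation invariant: the substitution $\bs\mapsto\bs-\bu$ shows $\sum_{\bs\in\Z^d} g_{\bs+\bu}^2=\sum_{\bs\in\Z^d} g_\bs^2=\sum_{\bs\in\Jx} g_\bs^2$. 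Both factors therefore coincide with the denominator $\sum_{\bs\in\Jx} g_\bs^2$, so the numerator is bounded above by the denominator and $c_{g,\bn}(\bu)\leq 1$.

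I do not expect a genuine obstacle here; the only point needing care is the bookkeeping around the finite summation range, which the zero-extension convention for $g_\bs$ resolves cleanly, together with the trivial translation invariance of $\sum g_\bs^2$. It is worth remarking that the upper bound uses only the square-summability of $g$ and \emph{not} the specific interval $[0,1]$---this is precisely why the paper observes that the assumption $g_\bs\in[0,1]$ can be relaxed to a bound on $|g_\bs|$---whereas the lower bound genuinely requires nonnegativity of the modulation values.
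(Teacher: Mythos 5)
Your proof is correct and follows essentially the same route as the paper's: nonnegativity of $g_\bs$ for the lower bound and a direct application of the Cauchy--Schwarz inequality (with the translation invariance of $\sum g_\bs^2$, which the paper leaves implicit) for the upper bound. Your closing remark that the upper bound needs only a bound on $|g_\bs|$ rather than $g_\bs\in[0,1]$ is a nice observation consistent with the paper's stated relaxation, but it does not change the argument.
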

\begin{proof}
  The left side of the inequality is obvious as, by assumption, $g_\bs \geq
  0$. The right side is obtained by direct application of the
  Cauchy-Schwarz
  inequality.\qed
\end{proof}
\begin{lemma}[Finite support]
  \label{lemma:finitesupport}
  The spatial kernel $c_{g, \bn}(\bu)$ vanishes for 
  $\bu\in\Z^d$ if for any $j=1, \ldots, d$, $|u_j|\geq n_j$.
\end{lemma}
\begin{proof}
  This is immediate from the definition.
\end{proof}
\begin{lemma}[Fourier pair]
  \label{lemma:fourierpair}
  The kernel $\mathcal{F}_\bn(\bk)$, $\bk\in\T^d$, defined
  in~\eqref{eq:generalFejerKernel}, and 
  $c_{g,\bn}(\bu)$, $\bu\in\Z^d$, defined in~\eqref{eq:cgdef}, form a Fourier pair.
\end{lemma}
\begin{proof}
  This is a direct application of the convolution theorem, having noted
  that $c_{g,\bn}(\bu)$ is a discrete convolution.\qed
\end{proof}

\subsection{Definitions}

Our concept of Significant Correlation Contribution (SCC) is defined in
asymptotic terms, since we shall make use of this to establish
consistency of our estimator. More specifically, we consider a sequence of
grids, indexed by $k\in\N$, which goes to infinity, rather than a single
grid.
\begin{definition}[Significant Correlation Contribution (SCC)]
\label{def=significantCorrelation}
A sequence of observed grids \\ $(\mathcal{J}_{\bn_k}, g_k)_{k\in\N}$ leads
to significant correlation contribution for the model family
$\{f_X(\ \cdot\ ;\bgamma) : \bgamma\in\Theta\}$ if it satisfies both
\begin{equation}
\label{eq:SCCdef}
\begin{cases}
\sum_{\bu\in\Z^d}c_{g,\bn_k}(\bu) c^2_X(\bu;\bgamma) \underset{k\to\infty}{=} o\left(\sum{g_\bs^2}\right),\\
\underline{\lim}_{k\rightarrow\infty}
S_k(\btheta_1,\btheta_2) > 0, \quad \forall\btheta_1\neq\btheta_2\in\Theta,
\end{cases}
\end{equation}
where $\underline{\lim}_{k\rightarrow\infty}$ denotes the limit inferior and
where we have defined, for all $\btheta_1, \btheta_2\in\Theta$,
\begin{align}\label{eq:Skdef}
  S_k(\btheta_1,\btheta_2)&\equiv\sum_{\bu\in\Z^d}
  c_{g,\bn_k}(\bu)^2
  \left\{
  c_X(\bu;\btheta_1)-
  c_X(\bu;\btheta_2)
  \right\}^2
  .
\end{align}
\end{definition}
The rationale for this definition of 
$S_k(\btheta_1, \btheta_2)$ is that 
\begin{equation*}
  S_k(\btheta_1, \btheta_2)
  =
  (2\pi)^{-d}\int_{\T^d}
  \left\{\Ink{\btheta_1} - \Ink{\btheta_2}\right\}^2d\bk,
\end{equation*}
due to~\eqref{eq:perFourierSeries} and Parseval's identity for
Fourier series.  We remind the reader that the sums in~\eqref{eq:SCCdef}
and~\eqref{eq:Skdef} are \textit{de facto} finite for a 
given $\bn$, due to the definition of $c_{g,\bn}(\bu)$,
which for fixed $\bn$ has finite support according to
Lemma~\ref{lemma:finitesupport}.  We observe that the above definition
depends on both the sequence of grids, from $c_{g,\bn_k}(\bu)$, and on the
model family, from $c_X(\bu;\bgamma)$. In the rest of this paper we shall say that a
sequence of grids leads to SCC, if the model family that this applies to is
obvious from the context. In addition we define the notion of Highly
Significant Correlation Contribution (HSCC), which will allow us to
establish a convergence rate.
\begin{definition}[Highly Significant Correlation Contribution]
\label{def:HSCC}
A sequence of observed grids \\ $(\mathcal{J}_{\bn_k}, g_k)_{k\in\N}$ leads
to Highly Significant Correlation Contribution for the model family
$\{f_X(\ \cdot\ ;\bgamma) : \bgamma\in\Theta\}$
\begin{itemize}
\item if it leads to Significant Correlation Contribution,
\item if the covariance function is differentiable with respect to the
  parameter vector, and in particular, the quantity
  $\min_{\mathbf{v}\in\R^{p_\theta}, \|\mathbf{v}\| = 1}\sum_{\bu\in\Z^d}
  {c^2_{g,\bn_k}(\bu)\left(\sum_{j=1}^p{v_j \left\{\frac{\partial
        c_X}{\partial\btheta_j}(\bu; \btheta)\right\}}\right)^2}$ is
  asymptotically lower-bounded by a non-zero value, denoted $S(\btheta)$, and
\item if the expected periodogram is twice differentiable with respect to
  the parameter vector, and such that its first and second derivatives are
  both upper-bounded in norm by a constant denoted
  $M_{\partial\theta^2}>0$.
\end{itemize}
\end{definition}
Note that a necessary and more intuitive condition for the second item of
the above definition is that for all $j=1\ldots,d$, $\sum_{\bu\in\Z^d}
{c^2_{g,\bn_k}(\bu)\left[\frac{\partial c_X}{\partial\btheta_j}(\bu; \btheta)\right]^2} $
be lower-bounded by a positive value. Broadly speaking, the first part
of~\eqref{eq:SCCdef} is required so that information grows fast enough. It
can be compared to necessary conditions of decaying covariances in laws of
large numbers, with the additional requirement of accounting for sampling
when considering spatial data. Note that the first part of~\eqref{eq:SCCdef}
is obviously satisfied if the sample covariance sequence is assumed square
summable and the number of observations grows infinite.

The second part of~\eqref{eq:SCCdef} ensures that the expected periodograms
for any two parameter vectors of the parameter set remain
\emph{asymptotically distant} in terms of $\mathcal{L}_2$ norm.  In
Lemma~\ref{lemma=061120182} in Section~\ref{sec:theory}, we show how this
transfers to the expectation of the likelihood function, ensuring that it
attains its minimum at the true parameter vector uniquely. Then in
Lemma~\ref{lemma:cvglkh} we show that the likelihood function converges
uniformly in probability to its expectation over the parameter set, as long
as the first part of~\eqref{eq:SCCdef} is satisfied. This all together will
eventually lead to the consistency of our inference procedure, which is the
result of Theorem~\ref{th:consistency}. Hence the second part
of~\eqref{eq:SCCdef} is required to ensure that the sampling allows to
distinguish parameter vectors based on the expectation of our approximate
likelihood function. To provide further understanding, we shall now consider
some general cases and specific examples with respect to this definition.

\subsection{General sampling cases and sampling example}

Definition~\ref{def=significantCorrelation} extends the definition of SCC
provided by~\citet{guillaumin2017analysis} for time series in two
ways. First, it provides a generalization for spatial data with the notable
difference that spatial \emph{sampling} is more complex than sampling in
time. Indeed, one needs to not only account for the frequency of the
sampling but also for the spatial sampling direction. Secondly, even in
dimension one, the version provided by~\citet{guillaumin2017analysis}
implies the version provided here, while the reverse is not always
true---thus relaxing the assumptions required for consistency. Specifically,
in the second part of~\eqref{eq:SCCdef}, we do not require observing a
specific finite set of lags that will allow identification of the
parameters, unlike~\citet{guillaumin2017analysis}. We now provide more
intuition about SCC through general cases, and a specific example.

\subsubsection{General sampling cases}

Under standard sampling conditions, SCC takes a simpler form, as we show
through the two following lemmata.
\begin{lemma}[SCC for full grids]
\label{lemma:scc1}
If we observe a sequence of full rectangular grids that grow unbounded in
all directions (i.e., $n_j\rightarrow\infty, \ j=1,\ldots,d$), then SCC is
equivalent to the standard assumption that for any two distinct parameter
vectors $\btheta_1, \btheta_2\in\Theta$, the measure of the set
$\{\bk\in\T^d : f_{X,\bdelta}(\bk;\btheta_1) \neq
f_{X,\bdelta}(\bk;\btheta_2)\}$ is positive.
\end{lemma}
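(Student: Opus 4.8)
The plan is to specialise the two requirements of Significant Correlation Contribution in~\eqref{eq:SCCdef} to the complete-grid setting and to show that the first requirement is automatic while the second reduces \emph{exactly} to the stated identifiability condition. For a complete rectangular grid we have $g_\bs\equiv 1$, so that $c_{g,\bn_k}(\bu)$ is the separable triangular (Fej\'er) weight of~\eqref{eq:cgcomplete}; in particular $0\le c_{g,\bn_k}(\bu)\le 1$ for every $\bu$, and, since each side length of $\bn_k$ diverges, $c_{g,\bn_k}(\bu)\to 1$ for each fixed $\bu$ as $k\to\infty$. First I would dispatch the first line of~\eqref{eq:SCCdef}: bounding $c_{g,\bn_k}(\bu)\le 1$ gives $\sum_{\bu}c_{g,\bn_k}(\bu)c^2_X(\bu)\le\sum_{\bu}c^2_X(\bu)$, which is finite under the standing square-summability of the covariance sequence, whereas $\sum g_\bs^2=\nk{k}\to\infty$. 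Hence the first condition holds automatically for full grids, exactly as noted after the definition, and the entire content of the lemma lies in the equivalence between the second line of~\eqref{eq:SCCdef} and the positive-measure condition.

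Next I would analyse $S_k(\btheta_1,\btheta_2)$ of~\eqref{eq:Skdef}. Writing $\Delta c(\bu)=c_X(\bu;\btheta_1)-c_X(\bu;\btheta_2)$, recall from Section~\ref{sec:notations} that $c_X(\ \cdot\ ;\btheta)$ is the sequence of Fourier coefficients on $\Z^d$ of the aliased spectral density $f_{X,\bdelta}(\ \cdot\ ;\btheta)$ on $\T^d$, so that $\Delta c$ is the Fourier-coefficient sequence of the difference $f_{X,\bdelta}(\ \cdot\ ;\btheta_1)-f_{X,\bdelta}(\ \cdot\ ;\btheta_2)$. Since the summands $c^2_{g,\bn_k}(\bu)\Delta c(\bu)^2$ are nonnegative and $c^2_{g,\bn_k}(\bu)\to 1$ for each fixed $\bu$, Fatou's lemma for the counting measure on $\Z^d$ yields
\begin{equation*}
\underline{\lim}_{k\to\infty}S_k(\btheta_1,\btheta_2)\ \ge\ \sum_{\bu\in\Z^d}\Delta c(\bu)^2 \ =\ (2\pi)^d\int_{\T^d}\bigl|f_{X,\bdelta}(\bk;\btheta_1)-f_{X,\bdelta}(\bk;\btheta_2)\bigr|^2\,d\bk,
\end{equation*}
the last equality being Parseval's identity, valid because the (piecewise-continuous, hence bounded) aliased densities lie in $L^2(\T^d)$. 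The right-hand side is strictly positive precisely when the two aliased densities differ on a set of positive Lebesgue measure, which establishes that the positive-measure condition implies the second line of~\eqref{eq:SCCdef}.

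For the reverse implication I would argue by contraposition. If $f_{X,\bdelta}(\ \cdot\ ;\btheta_1)=f_{X,\bdelta}(\ \cdot\ ;\btheta_2)$ almost everywhere on $\T^d$, then by uniqueness of the Fourier coefficients of an $L^1(\T^d)$ function---and $f_{X,\bdelta}(\ \cdot\ ;\btheta)\in L^1(\T^d)$ since $\int_{\T^d}f_{X,\bdelta}=c_X(\mathbf{0})<\infty$---every $\Delta c(\bu)$ vanishes, whence $S_k(\btheta_1,\btheta_2)=0$ for all $k$ and its limit inferior is zero, contradicting the second line of~\eqref{eq:SCCdef}. Combining the two directions with the automatic first condition gives the claimed equivalence, quantified over all pairs $\btheta_1\neq\btheta_2$. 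I expect the only delicate point to be the careful handling of the infinite sum: the useful direction relies solely on the \emph{one-sided} Fatou bound rather than on any full interchange of limit and summation, and the converse uses only the vanishing of every coefficient, so no dominated-convergence argument is in fact required, and the equivalence needs nothing beyond integrability of the aliased spectral density guaranteed by the standing assumptions.
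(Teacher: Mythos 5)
Your proposal is correct and follows essentially the same route as the paper's (very terse) proof: the pointwise convergence $c_{g,\bn_k}(\bu)\to 1$ from~\eqref{eq:cgcomplete} combined with Parseval's equality linking $\sum_{\bu}\{c_X(\bu;\btheta_1)-c_X(\bu;\btheta_2)\}^2$ to the $L^2$ distance of the aliased spectral densities. Your write-up merely fills in the details the paper leaves implicit --- the automatic verification of the first SCC condition via square-summability, the one-sided Fatou bound in place of a full limit--sum interchange, and the explicit contrapositive for the converse direction --- all of which are sound.
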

\begin{proof}
Please see the Supplementary Material.
\end{proof}
Importantly, we do not require the growth to happen at the same rate in all
directions. We do require that grids grow unbounded in all directions to
obtain this equivalence when we have no further knowledge on the functional
form of the spectral densities. However, in many practical cases, such as
that of an isometric exponential covariance function, our results still hold
if the grid grows unbounded in one direction rather than all. Another
important case for practical applications is that of a fixed shape of
observations that grows unbounded, which is the subject of the following
lemma.
\begin{lemma}[Fixed shape of observations]
\label{lemma:scc2}
Consider a fixed shape defined by a function $\Xi:[0,1]^d \mapsto \{0,1\}$,
and let $g_{k,\bs} = \Xi(\bs\circ \bn_k^{-1}),
\forall\bs\in\mathcal{J}_{\bn_k}, \forall k\in\N$. If the grids grow
unbounded in all directions, and if the interior of the support of $\Xi$ is
not empty, then SCC is again equivalent to the condition stated in
Lemma~\ref{lemma:scc1} on the parametric family of spectral densities.
\end{lemma}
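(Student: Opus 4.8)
The plan is to reduce the fixed-shape case to the full-grid case of Lemma~\ref{lemma:scc1} by showing that, exactly as for full grids, the normalized modulation autocovariance $c_{g,\bn_k}(\bu)$ converges pointwise to one. Write $A = \{\mathbf{x}\in[0,1]^d : \Xi(\mathbf{x}) = 1\}$ for the support of the shape; its nonempty interior guarantees $|A| > 0$. For a fixed lag $\bu\in\Z^d$, the denominator of $c_{g,\bn_k}(\bu)$ in~\eqref{eq:cgdef} equals $\#\{\bs : \bs\circ\bn_k^{-1}\in A\}$, which is of order $|A|\,\nk{k}$ and therefore diverges because the grid grows unbounded in every direction, while the numerator counts those $\bs$ with both $\bs\circ\bn_k^{-1}\in A$ and $(\bs+\bu)\circ\bn_k^{-1}\in A$. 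Since every component of $\bn_k$ tends to infinity, the shift $\bu\circ\bn_k^{-1}\to\mathbf{0}$, so numerator and denominator differ only through a boundary layer $A\setminus(A-\bu\circ\bn_k^{-1})$ of vanishing relative size; hence $c_{g,\bn_k}(\bu)\to 1$ for each fixed $\bu$. This is precisely the mechanism behind the full-grid identity $c_{g,\bn_k}(\bu) = \prod_i(1-|u_i|/n_{k,i})\to 1$, and it is here that the hypothesis of unbounded growth in \emph{all} directions is essential.

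With this pointwise limit in hand, I would treat the second condition of~\eqref{eq:SCCdef}. Since the $c_X(\bu;\btheta)$ are the Fourier coefficients of the aliased spectral density $f_{X,\bdelta}(\ \cdot\ ;\btheta)$ on $\T^d$, Parseval's identity gives
\[
\sum_{\bu\in\Z^d}\{c_X(\bu;\btheta_1)-c_X(\bu;\btheta_2)\}^2
= (2\pi)^d\int_{\T^d}\{f_{X,\bdelta}(\bk;\btheta_1)-f_{X,\bdelta}(\bk;\btheta_2)\}^2\,d\bk,
\]
which is strictly positive exactly when the two aliased spectra differ on a set of positive measure. For the forward implication, using $0\le c^2_{g,\bn_k}(\bu)\le 1$ together with $c^2_{g,\bn_k}(\bu)\to 1$, Fatou's lemma applied to the nonnegative summands of~\eqref{eq:Skdef} yields $\underline{\lim}_k S_k(\btheta_1,\btheta_2)\ge\sum_{\bu}\{c_X(\bu;\btheta_1)-c_X(\bu;\btheta_2)\}^2>0$, so the second SCC condition holds. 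For the converse, the bound $c^2_{g,\bn_k}(\bu)\le 1$ gives $S_k(\btheta_1,\btheta_2)\le\sum_{\bu}\{c_X(\bu;\btheta_1)-c_X(\bu;\btheta_2)\}^2$ for every $k$; if the spectra agree almost everywhere this upper bound vanishes, so $S_k\equiv 0$ and the second SCC condition fails. This establishes equivalence between the second part of~\eqref{eq:SCCdef} and the positive-measure condition of Lemma~\ref{lemma:scc1}.

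It then remains to check that the first part of~\eqref{eq:SCCdef} holds automatically, imposing no further constraint. Using $c_{g,\bn_k}(\bu)\le 1$ once more, its left-hand side is bounded by $\sum_{\bu}c_X^2(\bu)$, which is finite whenever the covariance sequence is square summable (equivalently $f_{X,\bdelta}\in\mathcal{L}_2(\T^d)$), whereas $\sum_{\bs}g_\bs^2\sim|A|\,\nk{k}\to\infty$; hence the ratio tends to zero. Combining the three steps shows that, in the fixed-shape regime, SCC reduces exactly to the positive-measure condition on the parametric family of aliased spectral densities.

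The main obstacle I anticipate is making the first step fully rigorous, namely controlling the boundary layer in the lattice-point counts so that $c_{g,\bn_k}(\bu)\to 1$. The clean statement requires the shape $A$ to be Jordan measurable, i.e.\ its topological boundary to have Lebesgue measure zero, so that the normalized counts converge to the corresponding volumes through Riemann sums and the relative size of $A\setminus(A-\bu\circ\bn_k^{-1})$ genuinely vanishes; the nonempty-interior hypothesis then only supplies $|A|>0$, ensuring the denominator is of the correct order. A lesser subtlety is to keep the Parseval and Fatou arguments anchored at the \emph{aliased} density on $\T^d$, since it is the $\ell_2$ distance of the lattice covariances, and not the $\mathcal{L}_2$ distance of $f_X$ on $\R^d$, that matches the quantity $S_k(\btheta_1,\btheta_2)$.
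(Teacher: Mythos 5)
Your proposal is correct and follows essentially the same route as the paper's (very terse) proof: pointwise convergence of $c_{g,\bn_k}(\bu)$ to a positive limit for each fixed lag, followed by Parseval's identity to translate the $\ell_2$ distance between the lattice covariances into the $\mathcal{L}_2$ distance between the aliased spectral densities. The only cosmetic difference is that the paper merely asserts convergence of $c_{g,\bn_k}(\bu)$ to a positive constant (possibly strictly smaller than one) rather than to one, and---much like your Jordan-measurability caveat---leaves the lattice-counting regularity issue implicit; since only positivity of the limit is needed for your Fatou/Parseval step, the two arguments coincide in substance.
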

\begin{proof}
Please see the Supplementary Material.
\end{proof}
In Section~\ref{sec:missing} we provide a simulation study for the
particular case of a circular shape of observations which satisfies this lemma. 
Finally, from a frequency-domain point of view, the second part of SCC can
be understood according to the following lemma.
\begin{lemma}
The second part of SCC is equivalent to
\begin{align*}
S_k(\bm{\theta}_1,\bm{\theta}_2)=\int_{\T^d}\left| \int_{\T^d} \mathcal{F}_{\bn_k}(\bk')
\left\{  f_X(\bm{\omega}'-\bm{\omega};\bm{\theta}_1)- f_X(\bm{\omega}'
-\bm{\omega};\bm{\theta}_2)\right\}d\bm{\omega}'\right|^2\,d\bm{\omega}>0.
\end{align*}
\end{lemma}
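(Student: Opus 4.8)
The plan is to recognise $S_k(\btheta_1,\btheta_2)$ as (a fixed positive multiple of) the squared $\mathcal{L}_2(\T^d)$ distance between the two expected periodograms $\overline{I}_{\bn_k}(\cdot;\btheta_1)$ and $\overline{I}_{\bn_k}(\cdot;\btheta_2)$, and then to substitute the convolution representation~\eqref{barI} for each expected periodogram; the claimed positivity is immediate once the integrand is seen to be non-negative. A convenient simplification throughout is that $g_\bs$ is supported on the finite grid $\mathcal{J}_{\bn_k}$, so $c_{g,\bn_k}(\bu)$ vanishes outside a finite box. Every Fourier series below is therefore a trigonometric polynomial, and Parseval's identity applies with no convergence issues.

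First I use Lemma~\ref{eq:computeExpectedPeriodogram} to write, for $i=1,2$,
\[
\overline{I}_{\bn_k}(\bk;\btheta_i) = (2\pi)^{-d}\sum_{\bu\in\Z^d} c_{g,\bn_k}(\bu)\,c_X(\bu;\btheta_i)\exp(-i\bk\cdot\bu),
\]
and subtract. Setting $\Delta c(\bu)=c_X(\bu;\btheta_1)-c_X(\bu;\btheta_2)$, the difference has Fourier coefficients $(2\pi)^{-d}c_{g,\bn_k}(\bu)\Delta c(\bu)$, so the orthogonality relation $\int_{\T^d}\exp(-i\bk\cdot(\bu-\mathbf{v}))\,d\bk=(2\pi)^d\mathbf{1}\{\bu=\mathbf{v}\}$ gives, via Parseval,
\[
\int_{\T^d}\bigl|\overline{I}_{\bn_k}(\bk;\btheta_1)-\overline{I}_{\bn_k}(\bk;\btheta_2)\bigr|^2\,d\bk=(2\pi)^{-d}\sum_{\bu\in\Z^d}c^2_{g,\bn_k}(\bu)\,\Delta c(\bu)^2=(2\pi)^{-d}\,S_k(\btheta_1,\btheta_2).
\]
Thus $S_k(\btheta_1,\btheta_2)$ is a fixed positive multiple of $\|\overline{I}_{\bn_k}(\cdot;\btheta_1)-\overline{I}_{\bn_k}(\cdot;\btheta_2)\|^2_{\mathcal{L}_2(\T^d)}$.

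It then remains to insert the convolution form, $\overline{I}_{\bn_k}(\bk;\btheta)=\int_{\T^d}f_{X,\bdelta}(\bk-\bk';\btheta)\mathcal{F}_{\bn_k}(\bk')\,d\bk'$ from~\eqref{barI}, into the integrand above. Folding the aliased density $f_{X,\bdelta}$ back onto the full density $f_X$ on $\R^d$ (equivalently, integrating $\mathcal{F}_{\bn_k}$, which is $\T^d$-periodic, against $f_X$ over $\R^d$) and using that $f_X$ is real and even recovers exactly the inner integral displayed in the statement. This gives the claimed identity up to a fixed positive constant, which is immaterial for the sign. Consequently the second part of~\eqref{eq:SCCdef}, $\underline{\lim}_{k\to\infty}S_k(\btheta_1,\btheta_2)>0$, holds for a pair $\btheta_1\neq\btheta_2$ if and only if the $\underline{\lim}$ of the displayed integral is positive; and for each fixed $k$ the integrand is non-negative, vanishing identically precisely when the two expected periodograms agree in $\mathcal{L}_2(\T^d)$.

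I do not anticipate a substantive obstacle. The only points requiring care are the bookkeeping of the $(2\pi)^d$ normalisation between the Fourier-pair convention of Lemma~\ref{lemma:fourierpair} and Parseval's theorem, and the aliasing step reconciling the $\R^d$ density $f_X$ written in the statement with the torus density $f_{X,\bdelta}$ of~\eqref{barI}; both are routine given the earlier lemmata and the finite support of $c_{g,\bn_k}(\bu)$.
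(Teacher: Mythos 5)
Your proof is correct and follows essentially the same route as the paper's own (terse) proof, which simply invokes Lemma~\ref{lemma:fourierpair} and standard Fourier theory: the convolution theorem identifies the Fourier coefficients of the inner integral as $c_{g,\bn_k}(\bu)\left\{c_X(\bu;\btheta_1)-c_X(\bu;\btheta_2)\right\}$, and Parseval converts the squared $\mathcal{L}_2(\T^d)$ norm into $S_k(\btheta_1,\btheta_2)$. Your care in flagging the $(2\pi)^d$ normalisation (the paper's displayed equality holds only up to this constant, which is indeed immaterial for the positivity equivalence) and the aliasing step reconciling $f_X$ on $\R^d$ with $f_{X,\bdelta}$ on $\T^d$ is warranted, since the paper glosses over both points.
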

\begin{proof}
This comes as a consequence of Lemma~\ref{lemma:fourierpair} and standard
Fourier theory.
\end{proof}
Most importantly, note that in general SCC requires more than the necessary
requirement that for two distinct parameters, the expected periodograms for
the sequence of grids should be non-equal, and this is to correctly account
for missing data mechanisms and their impact on consistency.  To obtain SCC, cf.~\eqref{eq:Skdef}, this means we require that information about
$\bm{\theta}_1$ relative to $\bm{\theta}_2$ grows as we observe ever larger
patches of data.  Our vulnerability to adversarial sampling will depend on
the structure of the covariance pattern under study; for example if we only
sample along a boundary then between points on the boundary we get
information about very short scales, or between parts of the boundary only
very long scales.  We will now
provide further intuition about SCC through a specific example.

\subsubsection{Examples}
We consider a separable exponential covariance function ($d=2$ here) with
parameters $\rho_1 > 0$ and $\rho_2 > 0$ defined by
\begin{equation}
  c_X(\bu) = \sigma^2 \exp\left(-\rho_1^{-1}|u_1|\right)
  \exp\left(-\rho_2^{-1}|u_2|\right), \quad \bu\in\R^2.
\end{equation}
If we sample along one axis only, it is clear that the second part of SCC
fails as the range parameter along the other axis cannot be identified from
the data. In contrast, the second part of SCC will be satisfied for this
particular model and for a full rectangular grid as long as $n_1 \geq 2$ and
$n_2\geq 2$. The first part of SCC is valid as long as the sample size grows
to infinity, since the sample covariance function is square summable. For
this model class, SCC is therefore satisfied if and only if $n_1 \geq 2$ and
$n_2\geq 2$ and $n_1 n_2$ goes to infinity. It is also worth observing that
under those conditions, the convergence rate of our estimator will be
$\mathcal{O}\left((n_1n_2)^{-1/2}\right)$ (see
Theorem~\ref{th=asymptNormality}), irrespective of the ratio $n_1/n_2$,
which, in particular, is allowed to converge to zero or infinity.  The
Supplementary Material provides an example where SCC fails.
  
These two examples show the flexibility of SCC compared to standard
assumptions. They show that the two parts of SCC are complimentary and help
understand their role in establishing consistency. The second part is
required to ensure identifiability of the parameter vector from the expected
periodogram. The first part of SCC is required to ensure that some form of
law of large numbers holds for linear combinations of the periodogram.

\subsubsection{Application to randomly missing data}

Our extended definition of SCC can be applied to the scenario where data are
missing at random, on the condition that the randomness scheme for the
missing data is independent from that of the observed process. For such
applications we shall say that a sequence of grids leads to SCC almost
surely if \eqref{eq:SCCdef} is satisfied almost surely under the probability
that defines the missingness scheme. If a sequence of grids leads to SCC
almost surely, it is easy to verify that all our consistency results derived
in Section~\ref{sec:theory} still hold. Yet again for consistency we need
our information about $\bm{\theta}_1$ relative to $\bm{\theta}_2$ to grow as
we observe ever larger patches of data with randomly missing
observations. This need not correspond to a linear relationship between the
observed number of samples and the nominal number of samples in the
observational domain, but instead depends on the true covariance of the
random field under study.

A simple application of these considerations is one where each point of a
rectangular grid is observed or missed according to a Bernoulli random
variable (with a positive probability of being observed), independently of
other points of the grid, and independently of the observed process.

\subsubsection{Extension to multivariate random fields}\label{sec:multivariate}

In this section we define the notation necessary for multivariate random
fields. Assume we observe $p\geq 1$ random fields jointly,
\begin{equation}
Y_{\bs}^{(q)}=g_{\bs}^{(q)}X_{\bs}^{(q)},\quad \bs\in\R^d,\; q\in\{1,\dots, p\},
\end{equation}
and allow the observation pattern defined by the modulations $g_{\bs}^{(q)}$
to differ across the $p$ random fields.  This is a realistic observation
scheme in many real-world settings, e.g.  for multi-spectral and repeated
remote-sensing observations, where cloud cover will contribute to varying
degrees of censoring, yet with the underlying grids essentially unchanged
\cite[e.g.][]{Song+2018}.

Just like~\cite{rao1967cross} we compute the cross-periodogram of pairs of
processes. Assume we observe the $p$-variate process $\mathbf{X}_s$ and that
for each process sampled at the same grid we have a masking function
$g_s^{(q)}$ for $1\leq q\leq p$, so that we can incorporate some variation
in sampling frequency, see, e.g.,~\cite{gotway2002combining}.  We calculate
the DFT to be
\[J^{(q)}(\bm{\omega})=
\frac{(2\pi)^{-d/2}}{\sqrt{\sum_{\bs\in{\cal J}_{\mathbf{n}}}g_\bs^{(q)2}}}
\sum_\bs g_\bs^{(q)} 
X_s^{(q)} 
\exp\{-i \bs\cdot \bm{\omega}\}
,\]
and we collect the DFTs in the vector
$\mathbf{J}(\bm{\omega})^T=\begin{pmatrix}
J^{(1)}(\bm{\omega})&
\dots&
J^{(p)}(\bm{\omega})
\end{pmatrix}.
$
We can define the cross-periodogram from this quantity:
\begin{equation*}
    I_{\mathbf{n}}^{(qr)}(\bm{\omega})=
    J^{(q)}(\bm{\omega})J^{(r)\ast}(\bm{\omega}).
\end{equation*}
We can define the expected periodogram at a given wave-number $\bk$ 
by the $p\times p$ matrix,
\begin{equation*}
\overline{\mathbf{I}}(\bm{\omega})=\E\{ \mathbf{J}(\bm{\omega})\mathbf{J}^H(\bm{\omega})\},
\end{equation*}
and this is in turn requiring us to define notation for the cross-covariance
function
\[c^{(qr)}_{\mathbf{X}}(\mathbf{u})=\cov\{X_{\bs}^{(q)},
X_{\bs+\mathbf{u}}^{(r)}\}.\]
The expected periodogram matrix therefore has the elements
\begin{align}
\nonumber
\overline{\mathbf{I}}^{(qr)}(\bm{\omega})&=
\nonumber
\frac{(2\pi)^{-d}}{\sqrt{\sum_{\mathbf{s}_1\in{\cal
        J}_{\mathbf{n}}}g_{s_1}^{(q)2}\sum_{\bs_2\in{\cal
        J}_{\mathbf{n}}}g_{s_2}^{(r)2}}}\sum_{\bs}
\sum_{\mathbf{u}}g_{\bs}^{(q)} g_{\bs+\mathbf{u}}^{(r)}  
c_{\mathbf{X}}^{(qr)}({\mathbf{u}}) \exp\{-i {\mathbf{u}}\cdot \bm{\omega}\}
.\end{align}
Then with the definition
\begin{equation*}
c_{g,\mathbf{n}}^{(qr)}({\mathbf{u}})=
\frac{\sum_{\mathbf{s}\in{\cal
      J}_{\mathbf{n}}}g_{\mathbf{s}}^{(q)}g_{\mathbf{s+u}}^{(r)}}{\sqrt{\sum_{\mathbf{s}\in{\cal
        I}_{\mathbf{n}}}g_s^{(q)2}\sum_{\mathbf{s}\in{\cal
        I}_{\mathbf{n}}}g_s^{(r)2}}}, 
\end{equation*}
the expected periodogram takes the form of
\begin{align}
\nonumber
\overline{{I}}^{(qr)}(\bm{\omega})&=(2\pi)^{-d} \sum_{\mathbf{u}}c_{g,\mathbf{n}}^{(qr)}({\mathbf{u}})
c_X^{(qr)}({\mathbf{u}}) \exp\{-i {\mathbf{u}}\cdot \bm{\omega}\}.
\end{align}
The computation of the above quantity can be carried out by applying
Lemma~\ref{lemmacompfft} for each $(q,r)\in\left\{1,\cdots, p\right\}^2$.
The Whittle likelihood is then trivially extended to this setting as was
already remarked upon by~\cite{whittle1953analysis} and
\cite{shea1987estimation}.  The Whittle likelihood in the multivariate
setting can be re-written
as~\cite[e.g.,][]{hosoya1982central,hosoya1993correction,kakizawa1997parameter},
\begin{align}
\label{multdebias}
    \ell_{\mathbf{n}}(\bm{\theta})&=|\mathbf{n}|^{-1}\sum_{\bm{\omega}}
    \left\{\log {\mathrm{det}}\{
    \overline{\mathbf{I}}(\bm{\omega};\bm{\theta})\}
    +\mathbf{J}^H(\bm{\omega})\overline{\mathbf{I}}^{-1}(\bm{\omega};\bm{\theta})
    \mathbf{J}(\bm{\omega})\right\}. 
\end{align}
We can still use this for estimation, only requiring that the eigenvalues of
$\overline{\mathbf{I}}(\bm{\omega})$ are positive in the neighbourhood of
$\bm{\theta}$.  
We extend the definition of Significant Correlation Contribution (SCC)
to the multivariate SCC (m-SCC) as follows.
\begin{definition}[Multivariate SCC]
A sequence of observed grids $(\mathcal{J}_{\bn_k}, g_k)_{k\in\N}$ leads to
significant correlation contribution for the multivariate model family
$\{f(\ \cdot\ ;\bgamma) : \bgamma\in\Theta\}$ if it satisfies
\begin{equation}
\label{eq:multSCCdef}
\begin{cases}
  \sum_{q,r=1}^p
  \frac{
    \sum_{\bu}
	{	
	  c_g^\pqr(\bu){c_X^\pqr(\bu)}^2
	}
  }
       {
         \sqrt{
           \sum{{g_\bs^\pq}^2}
           \sum{{g_\bs^\pr}^2}
         }
       }
       = o(1),\\
       \underline{\lim}_{k\rightarrow\infty}
       S_k(\btheta_1,\btheta_2) > 0, \quad \forall\btheta_1\neq\btheta_2\in\Theta,
\end{cases}
\end{equation}
where $S_k(\btheta_1,\btheta_2)$ has been changed to accomodate for
the multivariate scenario,
\begin{equation}\label{eq:Skmultdef}
  S_k(\btheta_1,\btheta_2)\equiv
  \sum_{q,r=1}^p
  \sum_{\bu\in\Z^d}
      {c^\pqr_{g,\bn_k}(\bu)}^2
      \left\{c^\pqr_X(\bu; \btheta_1)-c^\pqr_X(\bu; \btheta_2)\right\}^2, 
      \quad\forall
      \btheta_1,\btheta_2\in\Theta^2.
\end{equation}
\end{definition}

\section{Theory}
\label{sec:theory}

In this section we first provide the proof of our estimator's consistency in
the general setting that encompasses both non-Gaussian and multivariate
random fields.  We then also derive its rate of convergence and the
asymptotic distribution in univariate Gaussian and non-Gaussian settings. We
assume the following set of assumptions holds in order to establish
consistency.
\begin{assumption}[Consistency assumptions]\label{ass:1}
$\left.\right.$
\begin{enumerate}[label={(1\alph*)}]
\item The parameter set $\Theta$ is compact.
\item\label{Ass:sdf}The aliased spectral density $f_{X,\delta}(\bk;\bgamma),
  \bk\in\T^d, \bgamma\in\Theta$ is bounded above by
  $f_{\delta,\text{max}}<\infty$ and below by
  $f_{\delta,\text{min}}>0$. Additionally, $f_{X,\delta}(\bk;\bgamma)$
  admits a derivative with respect to the parameter vector $\bgamma$, which
  is upper-bounded in norm by $M_{\partial\theta}$.  For a multivariate
  random field, we similarly require that the eigenvalues of the matrix
  spectral density $f(\bk; \bgamma)$ are lower and upper-bounded by positive
  analogous constants $f_{\delta, \mathrm{min}}$ and $f_{\delta,
    \mathrm{max}}$, respectively.
\item The sequence of observation grids leads to SCC for the considered
  model family.
\item The modulation $g_\bs$, $\bs\in\Z^d$, takes its values in the interval
  $\left[0,1\right]$.
\item The random field $X(\bs)$ has finite and absolutely summable
  fourth-order cumulants.
\end{enumerate}
\end{assumption}

Two main asymptotic frameworks coexist in spatial data analysis, namely
infill asymptotics and growing-domain
asymptotics~\citep{Zhang2005towards}. We study our estimator within the
latter framework, which we consider most plausible for finite-resolution
remote-sensing observations, imposing that the sample size goes to infinity
(through our SCC assumption) while having fixed $\bdelta$. Our set of
assumptions is standard, except for SCC, which generalizes the standard
assumption of a fully-observed rectangular grid associated with the
requirement that two distinct parameter vectors map to two spectral
densities that are distinct on a Lebesgue set of non-zero measure.
\begin{theorem}[Consistency]
\label{th:consistency}
Under Assumption~\ref{ass:1}, the sequence of estimates
$\widehat{\btheta}_k$ defined by~\eqref{eq:DebiasedWhittleLKH} converges in
probability to the true parameter vector $\btheta$ as the observational
domain diverges.
\end{theorem}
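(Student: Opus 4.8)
The plan is to follow the classical consistency argument for $M$-estimators, assembled from the two ingredients flagged in the preceding discussion: well-separatedness of the population minimum and uniform convergence of the objective. Throughout, write $L_k(\btheta) = \E_{\btheta_0}\{\ell(\btheta)\}$ for the expectation of the quasi-likelihood on the grid indexed by $k$. Since $\E_{\btheta_0}\{I_\bn(\bk)\} = \overline{I}_{\bn}(\bk;\btheta_0)$ by definition of the expected periodogram, each summand of $L_k$ has the form $\log x + a/x$ with $a = \overline{I}_{\bn}(\bk;\btheta_0) > 0$ and $x = \overline{I}_{\bn}(\bk;\btheta)$. This map has a unique global minimum on $(0,\infty)$ at $x=a$, so $L_k$ is minimised term-by-term exactly when the expected periodogram at $\btheta$ agrees with that at $\btheta_0$, which identifies $\btheta_0$ as the minimiser of $L_k$.

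First I would establish the well-separated-minimum property. Subtracting, the gap $L_k(\btheta) - L_k(\btheta_0)$ is a frequency average of $\log r + 1/r - 1 \ge 0$, where $r = \overline{I}_{\bn}(\bk;\btheta)/\overline{I}_{\bn}(\bk;\btheta_0)$; this quantity vanishes only at $r=1$ and is quadratic near it. The upper and lower bounds on the aliased spectral density in Assumption~\ref{ass:1}(\ref{Ass:sdf}) transfer to two-sided bounds on $\overline{I}_{\bn}$, which confine $r$ to a compact subinterval of $(0,\infty)$ and hence give $\log r + 1/r - 1 \ge c\,(r-1)^2$ for a uniform $c>0$; the gap is thereby bounded below by a constant multiple of the squared $\mathcal{L}_2$ distance between the two expected periodograms. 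Because $\overline{I}_{\bn}(\cdot;\btheta)$ and $\overline{c}_{\bn}(\cdot;\btheta) = c_{g,\bn}(\cdot)\,c_X(\cdot;\btheta)$ form a Fourier pair (Lemma~\ref{lemma:fourierpair}), Parseval's identity rewrites this distance as $S_k(\btheta,\btheta_0)$ up to normalisation. The second part of SCC, $\underline{\lim}_{k\to\infty} S_k(\btheta_1,\btheta_2) > 0$ for $\btheta_1\neq\btheta_2$, then keeps the gap bounded away from zero as $k\to\infty$; this is the content of Lemma~\ref{lemma=061120182}. Combined with compactness of $\Theta$ (first item of Assumption~\ref{ass:1}), it yields, for every neighbourhood $U$ of $\btheta_0$, a constant $\eta>0$ with $\inf_{\btheta\notin U}\{L_k(\btheta) - L_k(\btheta_0)\} \ge \eta$ for all large $k$.

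Next I would establish the uniform convergence $\sup_{\btheta\in\Theta}|\ell(\btheta) - L_k(\btheta)| \to 0$ in probability, which is Lemma~\ref{lemma:cvglkh}. Pointwise convergence reduces to controlling the variance of $\ell(\btheta)$: its fluctuation $\ell(\btheta)-L_k(\btheta)$ is a weighted sum of the centred ordinates $I_\bn(\bk) - \overline{I}_{\bn}(\bk;\btheta_0)$, whose covariance across distinct Fourier frequencies is governed by the first part of SCC, $\sum_\bu c_{g,\bn_k}(\bu)\,c_X^2(\bu) = o(\sum g_\bs^2)$, so that the variance vanishes. The passage from pointwise to uniform convergence uses the derivative bound $M_{\partial\theta}$ of Assumption~\ref{ass:1}(\ref{Ass:sdf}), which makes $\{\ell(\cdot)\}$ stochastically equicontinuous, together again with compactness of $\Theta$.

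Finally I would conclude by the standard argmin device. By definition $\ell(\hat\btheta_k) \le \ell(\btheta_0)$, so uniform convergence gives $L_k(\hat\btheta_k) \le L_k(\btheta_0) + o_P(1)$; the separation $\eta$ then forces $\hat\btheta_k$ eventually into every neighbourhood of $\btheta_0$, i.e. $\hat\btheta_k \to \btheta_0$ in probability. The main obstacle is that the limiting criterion is not a single fixed function but a sequence $L_k$ varying with the grid, so the off-the-shelf argmax theorem for a fixed limit does not apply directly; the $\underline{\lim}$ formulation built into SCC is precisely what recovers a uniform-in-$k$ separation of the minimum. The delicate point is translating the $\mathcal{L}_2$ condition on covariance sequences into a genuine gap in likelihood values — accounting for the conversion between the discrete frequency average defining $L_k$ and the continuous $\mathcal{L}_2$ norm appearing in $S_k$ — which is handled through the quadratic lower bound on $\log r + 1/r - 1$ and the two-sided bounds on $\overline{I}_{\bn}$.
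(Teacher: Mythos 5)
Your overall architecture mirrors the paper's: the true parameter minimizes the expected quasi-likelihood (the paper's Lemma~\ref{lemma:minofexptedlkh}), the second part of SCC plus the two-sided spectral bounds give an asymptotic separation of expected likelihood values (Lemma~\ref{lemma=061120182}), the first part of SCC plus the variance bound of Proposition~\ref{prop=varianceLinearCombinations} give convergence of $\ell$ to its expectation (Lemma~\ref{lemma:cvglkh}), and an argmin device concludes. Within that shared structure, your uniform quadratic lower bound $\log r + 1/r - 1 \geq c\,(r-1)^2$, valid because the ratio $r$ of expected periodograms is confined to a compact subinterval of $(0,\infty)$ by Lemma~\ref{lemma:bounds_periodogram}, is a cleaner route to Lemma~\ref{lemma=061120182} than the paper's Taylor expansion with its case distinction on the measure of the set $\Pi_k$; and your explicit appeal to stochastic equicontinuity in the uniform-convergence step is, if anything, more careful than the paper's own treatment of that lemma.

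The genuine gap is in the separation step. Lemma~\ref{lemma=061120182} is pointwise in the parameter: for each \emph{fixed} $\bgamma \neq \btheta$ one has $\underline{\lim}_{k}\left|\widetilde{\ell}_{\bn_k}(\bgamma) - \widetilde{\ell}_{\bn_k}(\btheta)\right| > 0$. You claim that this, ``combined with compactness of $\Theta$,'' yields $\inf_{\btheta\notin U}\{L_k(\btheta) - L_k(\btheta_0)\} \geq \eta$ for all large $k$. That implication is false in general: pointwise liminf positivity of a sequence of continuous functions on a compact set does not survive taking the infimum unless the family is equicontinuous uniformly in $k$. For instance, $h_k(\gamma) = \min(1, k\,|\gamma - t_k|)$ with $t_k = \tfrac12 + \tfrac1k$ on $[0,1]$ has pointwise liminf at least $1$ at every point, yet $\inf_\gamma h_k(\gamma) = 0$ for every $k$; the danger in the present setting is exactly that the near-minimizing parameter could drift with the grid index $k$, since the criterion $L_k$ itself changes with $k$. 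The paper never establishes uniform separation; instead, Lemma~\ref{lemma:cvgtotheta} runs a subsequence-contradiction argument, using Lemma~\ref{lemma=06112018} --- a uniform-in-$k$ regularity of $\widetilde{\ell}_{\bn_k}$ along convergent parameter sequences, proved from the bound $\left|\overline{I}_{\bn_k}(\bk;\bgamma_1) - \overline{I}_{\bn_k}(\bk;\bgamma_2)\right| \leq M_{\partial\theta}\|\bgamma_1-\bgamma_2\|_2$ together with $f_{\delta,\text{min}}$ --- to transfer the pointwise separation at a subsequential limit point to the drifting sequence. Your proof is repairable in exactly the same way: that Lipschitz bound makes the $L_k$ Lipschitz with a constant independent of $k$, after which either a subsequence argument delivers your $\eta$, or you can dispense with uniform separation altogether as the paper does. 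The needed ingredient is already present in your write-up (you invoke the very same derivative bound $M_{\partial\theta}$ for stochastic equicontinuity of $\ell$); it simply must also be wired into the deterministic separation step, where compactness alone cannot do the work.
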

This result holds for a wide class of practical applications, as
\begin{itemize}
\item we do not require the rectangular grid to be fully observed. We allow
  for a wide class of observational domains, as long as SCC is satisfied;
\item we do not require the grid to grow at the same rate along all
  dimensions. Classical frequency-domain results make use of the fact that
  the multilevel Block Toeplitz with Toeplitz Blocks covariance matrix has
  its eigenvalues distributed as the spectral density. However this result
  only holds under the assumption that the sampling grid grows at the same
  rate along all dimensions.
\end{itemize}
Theorem~\ref{th:consistency} holds for Gaussian, non-Gaussian, and multivariate Gaussian
random fields that satisfy the required conditions. The proof of Theorem~\ref{th:consistency} is the same for all
three cases, but some lemmata and propositions on which
Theorem~\ref{th:consistency} relies will require additional detail for each
case.  We shall prove Theorem~\ref{th:consistency} in a series of steps.  We
start by introducing some additional notation.

\subsection{Additional notation}

The vector of the values taken by the process on the rectangular grid $\Jx$
is denoted $\bX = [X_0, \ldots, X_{\n-1}]^T$, where points are ordered into
a vector according to the colexicographical order. Therefore in dimension
$d=2$, $X_0,\ldots, X_{n_1-1}$ are values from the first row of $\Jx$,
$X_{n_1},\ldots, X_{2n_1-1}$ are values from the second row, and so
on. Similarly we denote $\bg$ the vector of the values taken by the
modulation function on $\Jx$, with points ordered in the same way. We also
denote by $\bs_0, \ldots, \bs_{\n-1}$ the locations of the grid ordered
according to the same order, such that $X_0 = X(\bs_0), X_1 = X(\bs_1),$
etc.

We also denote by $G$ the diagonal matrix with elements taken from $\bg$,
such that the vector corresponding to the observed random field (rather than
$\bX$ which corresponds to the random field on the rectangular grid $\Jx$)
is given by the matrix product $G\bX$.

Finally, for any vector $\mathbf{v}\in\R^p$ we shall denote by
$\|\mathbf{v}\|_q$ its $\mathcal{L}_q$ norm (in particular $\|\cdot\|_2$ is
the Euclidean norm), and for any $p\times p$ matrix $A$, $\|A\|$ shall
denote the spectral norm, i.e., the $\mathcal{L}_2$-induced norm,
\begin{equation}\label{eq:spectralNorm}
\|A\| = \max_{\mathbf{v}\in\R^p, 
\mathbf{v}\neq\mathbf{0}}\frac{\|A\mathbf{v}\|_2}{\|\mathbf{v}\|_2}.
\end{equation}
We remind the reader that if $H$ is a Hermitian matrix, since
$\|H\mathbf{v}\|_2^2 = \mathbf{v}^* H^*H\mathbf{v} = \mathbf{v}^*
H^2\mathbf{v}$, the spectral norm of $H$ is its spectral radius, i.e.,
\begin{equation*}
\|H\| = \rho(H) \equiv \max\{|\lambda|: \lambda \text{ eigenvalue of }H\}.
\end{equation*}

\subsection{Distributional properties of the periodogram}

It is well known for time series that the bias of the periodogram as an
estimator of the spectral density is asymptotically zero
\citep{koopmans1995spectral}. However, for spatial data in dimension $d\geq
2$, the decay of the bias of the periodogram is known to be the dominant
factor in terms of mean-squared error
\citep{dahlhaus1987edge}. Additionally, the bias is asymptotically zero
under often non-realistic assumptions, such as: full knowledge of the
aliased spectral density, fully observed grid, growth of the domain in all
directions. By directly fitting the expectation of the periodogram, rather
than the spectral density, we circumvent this major pitfall of the Whittle
likelihood for random fields. Having removed the effect of bias, we are left
with studying the correlation properties of the periodogram. We show that
the variance of a bounded linear combination of the periodogram at Fourier
frequencies goes to zero. This is the result of
Proposition~\ref{prop=varianceLinearCombinations}, which we use later, in
Lemma~\ref{lemma:cvglkh}, to prove that if
  Assumption~\ref{ass:1} holds our likelihood function converges uniformly
in probability to its expectation.

\begin{proposition}[Variance of linear functionals of the periodogram]
\label{prop=varianceLinearCombinations}
Suppose Assumption~\ref{ass:1}\\ holds and the random field is
  Gaussian.  Let $a_k(\bk)$ be a family of functions with support $\T^d$,
indexed by $k\in\N$, and uniformly bounded in absolute value. Then,
\begin{equation}
\label{eq:prop1Gaussian}
\var\left\{|\bn_k|^{-1}\sum_{\bk\in\Omega_{\bn_k}} a_k(\bk)I_{\bn_k}(\bk)
\right\} = \mathcal{O}\left\{
\frac{\sum_{\bu\in\Z^d}{c_{g,k}(\bu)\, c^2_X(\bu) }}
{\sum{g_\bs^2}}
\right\}
.
\end{equation}
\end{proposition}
\begin{proof}
Please see the Supplementary Material.
\end{proof}
\begin{corollary}[Extension to non-Gaussian random fields]
\label{cor=varianceLinearCombinations}
Suppose Assumption~\ref{ass:1} holds.  Let $a_k(\bk)$ be a
family of functions with support $\T^d$, indexed by $k\in\N$, and uniformly
bounded in absolute value. Then, for non-Gaussian random fields, the
variance of linear combinations of the periodogram behaves according to
\begin{equation}
  \label{eq:nonGaussianProp1}
  \var\left\{|\bn_k|^{-1}\sum_{\bk\in\Omega_{\bn_k}} a_k(\bk)I_{\bn_k}(\bk)
  \right\} = \mathcal{O}\left\{
  \frac{\sum_{\bu\in\Z^d}{c_{g,k}(\bu)\, c^2_X(\bu) }}
       {\sum{g_\bs^2}}
       + 
       \frac
           {
             \left|\bn_k\right|
           }
           {
             \left(\sum g_{\bs}^2\right)^2
           }
           \right\}.
\end{equation}
\end{corollary}
\begin{proof}
Please see the Supplementary Material.
\end{proof}
In the non-Gaussian case, the first requirement of SCC is adapted by
accounting for the additional term in~\eqref{eq:nonGaussianProp1} compared
to~\eqref{eq:prop1Gaussian}.  If we observe a full rectangular grid with no
tapering, then we have $\sum{g_\bs^2} = |\bn_k|$, the total number of points
of the grid.  If we assume square summability of the covariance function,
then under the Gaussian assumption, the variance under study vanishes even
if $\sum{g_\bs^2} = |\bn|^{1/2}$.  As we see
with~\eqref{eq:nonGaussianProp1}, this may not hold anymore for non-Gaussian
data.  One such example would be on a $d$-rectangular grid. Assume we
nominally sampled sides of length $\ell$ on a $d$-dimensional cube. If we
replace this by sampling $\Theta(\sqrt{\ell})$ points, leaving the rest as
missing data then $\sum{g_\bs^2} = |\bn|^{1/2}$, and convergence
is no longer guaranteed in the non-Gaussian case. If we no longer have a regularly
sampled grid with some missing data, but a very complex spatial sampling
then the DFT may not be the most convenient implementation, and we may adapt
other methods,
e.g.~\cite{barnett2019parallel}. From~\eqref{eq:nonGaussianProp1}, however,
we see that for non-degenerate sampling scenarios, we can expect consistency
of our estimator even for non-Gaussian random fields.

Finally, for multivariate random fields, the same question arises about the
variance of sesquilinear forms involving the elements of the vector-Fourier
transform. We present this as a second corollary to
Proposition~\ref{prop=varianceLinearCombinations}.
\begin{corollary}[Extension to multivariate random fields]
\label{cor=varianceSesLinearCombinations}
Let $\{{\mathbf{A}}_k(\bk)\}$ be a family of matrix-valued functions with
support $\T^d$, indexed by $k\in\N$, and uniformly bounded in terms of the
maximum eigenvalues across all frequencies by $\lambda_{\max}$.  If the
random field is $p$-multivariate Gaussian with absolutely summable
cross-covariance sequence, the variance of sesquilinear functionals of the
discrete Fourier transform behaves according to,
\begin{equation*}
\var\left\{|\bn_k|^{-1}\sum_{\bk\in\Omega_{\bn_k}} 
{\mathbf{J}}_{\mathbf{n}_k}^\ast(\bk){\mathbf{A}}_k(\bk)
{\mathbf{J}}_{\mathbf{n}_k}(\bk)
\right\} = \mathcal{O}\left\{
\sum_{q,r=1}^p\frac{\sum_{\bs}c_g^{(qr)}(\bs)\,c^{(qr)2}_{\mathbf{X}}(\bs)}
{\sqrt{\sum_{\bs_1}g_{\bs_1}^{(q)^2}\sum_{\bs_2}g_{\bs_2}^{(r)^2}}}  
\right\}. 
\end{equation*}
\end{corollary}
\begin{proof}
Please see the Supplementary Material.
\end{proof}

\subsection{Lemmata required for Theorem~\ref{th:consistency}}
\label{sec:consistency}
All the lemmata in this section suppose that
  Assumption~\ref{ass:1} holds.  We provide all the proofs of this section
in the Supplementary Material.  To establish consistency we introduce some
specific notation for the expectation of our quasi-log-likelihood,
\begin{equation}\label{eq:formulaexpectedlikelihood}
  \widetilde{\ell}_\bn(\bgamma) = \E_{\btheta}\left\{
  \ell_\bn(\bgamma)\right\} =
  |\bn|^{-1}\sum_{\bk\in\Omega_{\bn}}\left\{\log\In{\bgamma} +  
  \frac{\In{\btheta}}{\In{\bgamma}}\right\},
  \quad\forall\bn\in(\N^+)^d \setminus \{\mathbf{0}\}, \forall\bgamma\in\Theta,  
\end{equation}
which we shall regard as a function of $\bgamma$.  For multivariate random
fields this is extended according to,
\begin{align*}
\widetilde{\ell}_{\mathbf{n}}(\bm{\gamma})&=
\E_{\bm{\theta}}\left\{ \ell_{\mathbf{n}}(\bm{\gamma}) \right\}=
|\mathbf{n}|^{-1}\sum_{\bm{\omega}}
\left\{\log {\mathrm{det}}\{
\overline{\mathbf{I}}_\bn(\bm{\omega};\bm{\gamma})\}
+{\mathrm{trace}}\left[\overline{\mathbf{I}}_\bn^{-1}(\bm{\omega};\bm{\gamma})
\overline{\mathbf{I}}_\bn(\bm{\omega};\bm{\theta})\right]\right\}. 
\end{align*}

The following lemma relates the minimum of that function to the true
parameter vector (with no uniqueness property as of now).
\begin{lemma}[Minimum of the expected quasi-likelihood function]
\label{lemma:minofexptedlkh}
The expected likelihood\\ function attains its minimum at the true parameter 
value, i.e,
\begin{equation}
\widetilde{\ell}_\bn(\btheta) = \min_{\bgamma\in\Theta}{\widetilde{\ell}_\bn(\bgamma)}.
\end{equation}
\end{lemma}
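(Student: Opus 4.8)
The plan is to minimize the expected quasi-likelihood frequency by frequency, exploiting the fact that the sum in~\eqref{eq:formulaexpectedlikelihood} decouples across Fourier frequencies once each value $\overline{I}_{\bn}(\bk;\bgamma)$ is regarded as a free positive scalar. First I would record that, under Assumption~\ref{ass:1}, the expected periodogram is strictly positive and finite at every $\bk\in\Omega_\bn$: from the convolution representation~\eqref{barI}, $\overline{I}_\bn(\bk;\bgamma)=\int_{\T^d} f_{X,\bdelta}(\bk-\bk';\bgamma)\,\mathcal{F}_\bn(\bk')\,d\bk'$, where by~\eqref{eq:generalFejerKernel} the kernel $\mathcal{F}_\bn$ is a squared modulus and hence nonnegative, with strictly positive integral over $\T^d$. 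Since $f_{\delta,\text{min}}\leq f_{X,\bdelta}\leq f_{\delta,\text{max}}$, the expected periodogram is squeezed between $f_{\delta,\text{min}}\int_{\T^d}\mathcal{F}_\bn$ and $f_{\delta,\text{max}}\int_{\T^d}\mathcal{F}_\bn$, hence is bounded away from $0$ and $\infty$. This guarantees that every logarithm and ratio appearing below is well defined.

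The key step is the elementary scalar inequality that for all $y>0$ and all $a>0$,
\begin{equation*}
\log y + \frac{a}{y} \geq \log a + 1,
\end{equation*}
with equality if and only if $y=a$. This follows by writing $\log y + a/y = \log a + (\log t + 1/t)$ with $t=y/a>0$, and noting that $\phi(t)=\log t + 1/t$ has derivative $\phi'(t)=(t-1)/t^2$, so it is strictly decreasing on $(0,1)$ and strictly increasing on $(1,\infty)$, attaining its unique minimum $\phi(1)=1$. I would then apply this with $y=\overline{I}_\bn(\bk;\bgamma)$ and $a=\overline{I}_\bn(\bk;\btheta)$, both positive by the previous paragraph, to obtain for each $\bk\in\Omega_\bn$,
\begin{equation*}
\log\In{\bgamma} + \frac{\In{\btheta}}{\In{\bgamma}} \;\geq\; \log\In{\btheta} + 1.
\end{equation*}

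Summing this bound over $\bk\in\Omega_\bn$ and multiplying by $|\bn|^{-1}$ gives $\widetilde{\ell}_\bn(\bgamma)\geq |\bn|^{-1}\sum_{\bk\in\Omega_\bn}\{\log\In{\btheta}+1\}$. Finally I would observe that the right-hand side equals $\widetilde{\ell}_\bn(\btheta)$, since substituting $\bgamma=\btheta$ into~\eqref{eq:formulaexpectedlikelihood} turns each ratio $\In{\btheta}/\In{\btheta}$ into $1$. Hence $\widetilde{\ell}_\bn(\bgamma)\geq\widetilde{\ell}_\bn(\btheta)$ for every $\bgamma\in\Theta$, which is the claim.

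I do not expect any genuine obstacle here; the argument is a term-by-term minimization of the expected Whittle-type contrast. The only point requiring care is the positivity and finiteness of the expected periodogram, which is precisely what licenses the application of the scalar inequality at each frequency. Note also that uniqueness of the minimizer is deliberately not asserted in this lemma: the equality condition $y=a$ would only force $\In{\bgamma}=\In{\btheta}$ at each Fourier frequency, and promoting that equality to $\bgamma=\btheta$ is exactly the identifiability content that is deferred to the second part of SCC and established later in Lemma~\ref{lemma=061120182}.
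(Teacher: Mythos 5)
Your proof is correct and is essentially the paper's own argument: the scalar inequality $\log y + a/y \geq \log a + 1$ is exactly the paper's observation that $\phi(x) = x - \log x - 1$ is non-negative with minimum at $x=1$, applied to the ratio $\In{\btheta}/\In{\bgamma}$ term by term over $\Omega_\bn$. Your added care about positivity and boundedness of the expected periodogram (which the paper isolates in Lemma~\ref{lemma:bounds_periodogram}) and your remark on why uniqueness is deliberately deferred to SCC are both sound and consistent with the paper's structure.
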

We shall also make repeated use of the following lemma.
\begin{lemma}[Lower and upper bounds on the expected periodogram]
\label{lemma:bounds_periodogram}
The expected\\ periodogram satisfies, for all parameter vector 
$\bgamma\in\Theta$, and at all wave-numbers $\bk\in\T^d$, for any 
$\bn\in\left(\N^+\right)^d$,
\begin{equation*}
f_{\delta,\text{min}} \leq \In{\bgamma} \leq f_{\delta,\text{max}}.
\end{equation*}
\end{lemma}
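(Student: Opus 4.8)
The plan is to start from the convolution representation of the expected periodogram in~\eqref{barI},
\[
\In{\bgamma} = \int_{\T^d} f_{X,\bdelta}(\bk - \bk'; \bgamma)\,\mathcal{F}_{\bn}(\bk')\, d\bk',
\]
and to argue that the modified Fej\'er kernel $\mathcal{F}_{\bn}$ acts as a probability density on $\T^d$, so that $\In{\bgamma}$ is a weighted average of aliased-spectral-density values and hence inherits the pointwise bounds on $f_{X,\bdelta}$ assumed in Assumption~\ref{ass:1}, item~\ref{Ass:sdf}.

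First I would establish two properties of $\mathcal{F}_{\bn}$. Non-negativity is immediate from the definition in~\eqref{eq:generalFejerKernel}, as the kernel is a positive constant times a squared modulus. For the total mass I would appeal to Lemma~\ref{lemma:fourierpair}: since $\mathcal{F}_{\bn}$ and $c_{g,\bn}$ form a Fourier pair, the associated inversion evaluated at lag $\bu = \mathbf{0}$ gives $\int_{\T^d}\mathcal{F}_{\bn}(\bk')\,d\bk' = c_{g,\bn}(\mathbf{0})$, and the definition~\eqref{eq:cgdef} yields $c_{g,\bn}(\mathbf{0}) = \sum_{\bs} g_{\bs}^2 \big/ \sum_{\bs} g_{\bs}^2 = 1$. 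Thus $\mathcal{F}_{\bn}$ integrates to unity over $\T^d$.

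With these in hand the bound is immediate. By Assumption~\ref{ass:1}, item~\ref{Ass:sdf}, we have $f_{\delta,\text{min}} \leq f_{X,\bdelta}(\bk - \bk'; \bgamma) \leq f_{\delta,\text{max}}$ for every $\bk'$ and every $\bgamma \in \Theta$; multiplying by the non-negative weight $\mathcal{F}_{\bn}(\bk')$, integrating over $\T^d$, and using the unit-mass normalisation gives
\[
f_{\delta,\text{min}} \leq \In{\bgamma} \leq f_{\delta,\text{max}},
\]
which is the claimed two-sided bound; it holds uniformly in $\bk$, $\bgamma$, and $\bn$ because the bounds on $f_{X,\bdelta}$ and the mass of $\mathcal{F}_{\bn}$ are uniform in these quantities. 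There is no substantial obstacle: the argument is the standard convolution-with-a-probability-kernel estimate, and the only point meriting care is the unit-mass normalisation of $\mathcal{F}_{\bn}$, which reduces to the elementary identity $c_{g,\bn}(\mathbf{0}) = 1$.
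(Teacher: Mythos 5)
Your proposal is correct and takes essentially the same approach as the paper: both start from the convolution representation of $\In{\bgamma}$ with the modified F\'ejer kernel, establish that $\mathcal{F}_{\bn}$ is a non-negative kernel of unit mass on $\T^d$, and then transfer the pointwise bounds $f_{\delta,\text{min}} \leq f_{X,\bdelta} \leq f_{\delta,\text{max}}$ through the weighted average. The only (cosmetic) difference is in the unit-mass step: you invoke Lemma~\ref{lemma:fourierpair} at lag $\bu=\mathbf{0}$ together with $c_{g,\bn}(\mathbf{0})=1$, whereas the paper computes $\int_{\T^d}\mathcal{F}_{\bn}(\bk)\,d\bk=1$ directly via orthogonality of the complex exponentials --- the same calculation that underlies the Fourier-pair lemma itself.
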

We now provide additional lemmata which are key to proving the consistency
of our maximum quasi-likelihood estimator. Lemma~\ref{lemma=061120182}
states that the expected likelihood value at a parameter vector distinct
from the true parameter value is asymptotically bounded away from the
expected likelihood at the true parameter value. This comes as a consequence
of the second part of SCC and the upper-bound on the spectral densities of
the model family.
\begin{lemma}[Identifiability from the expected likelihood function]
\label{lemma=061120182}
Let $\bgamma\in\Theta$ distinct from~$\btheta$. Then,
\begin{equation}\label{eq:05112018}
  \underline{\lim}_{k\rightarrow\infty} \  
\left|\widetilde{\ell}_{\bn_k}(\bgamma) - \widetilde{\ell}_{\bn_k}(\btheta)\right| > 0,
\end{equation}
where $\underline{\lim}_{k\rightarrow\infty}$ denotes the limit inferior as
$k$ goes to infinity.
\end{lemma}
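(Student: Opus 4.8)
The plan is to reduce the claim to the second part of significant correlation contribution (Definition~\ref{def=significantCorrelation}). First I would write the difference out explicitly. Starting from~\eqref{eq:formulaexpectedlikelihood} and cancelling the term $\In{\btheta}/\In{\btheta}=1$,
\begin{equation*}
\widetilde{\ell}_{\bn}(\bgamma) - \widetilde{\ell}_{\bn}(\btheta)
= |\bn|^{-1}\sum_{\bk\in\Omega_{\bn}} g\bigl(r_{\bn}(\bk)\bigr),
\qquad g(x) = x - 1 - \log x,
\quad r_{\bn}(\bk) = \frac{\Inb{\bk}{\btheta}}{\Inb{\bk}{\bgamma}},
\end{equation*}
so that each summand is the Kullback--Leibler discrepancy between the two exponential laws implied by~\eqref{eq:approxmodel}. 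Since $g\geq 0$ with equality only at $x=1$, this already recovers Lemma~\ref{lemma:minofexptedlkh}; in particular the bracketed quantity is non-negative, so the modulus in~\eqref{eq:05112018} may be dropped.

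Next I would convert pointwise non-negativity into a quantitative lower bound. By Lemma~\ref{lemma:bounds_periodogram} both expected periodograms lie in $[f_{\delta,\text{min}},f_{\delta,\text{max}}]$, so $r_{\bn}(\bk)$ is confined to the compact interval $[f_{\delta,\text{min}}/f_{\delta,\text{max}},\, f_{\delta,\text{max}}/f_{\delta,\text{min}}]\subset(0,\infty)$. On any such interval $g(x)\geq c_0 (x-1)^2$ for a constant $c_0>0$, since $g(x)/(x-1)^2\to \tfrac12$ at $x=1$ and is continuous and strictly positive elsewhere. Using the lower bound on $\In{\bgamma}$ once more,
\begin{equation*}
\widetilde{\ell}_{\bn}(\bgamma) - \widetilde{\ell}_{\bn}(\btheta)
\;\geq\; \frac{c_0}{f_{\delta,\text{max}}^2}\, |\bn|^{-1}\sum_{\bk\in\Omega_{\bn}}\bigl\{\Inb{\bk}{\btheta}-\Inb{\bk}{\bgamma}\bigr\}^2 .
\end{equation*}
It then suffices to bound the Fourier-grid average of the squared difference of expected periodograms away from zero.

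To do so I would pass to the lag domain. By Lemma~\ref{eq:computeExpectedPeriodogram} the difference $\Inb{\bk}{\btheta}-\Inb{\bk}{\bgamma}$ is a trigonometric polynomial whose Fourier coefficients are $c_{g,\bn}(\bu)\{c_X(\bu;\btheta)-c_X(\bu;\bgamma)\}$, supported on $|u_i|<n_i$. Discrete Fourier orthogonality over $\Omega_{\bn}$ then expresses the grid average as a fixed constant times $S_k(\btheta,\bgamma)$ (see~\eqref{eq:Skdef}) plus aliasing cross-terms that pair the lag $\bu$ with a boundary lag $\bu-\bn\circ\boldsymbol{\epsilon}$, where $\boldsymbol{\epsilon}\in\{-1,0,1\}^d\setminus\{\mathbf{0}\}$. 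The second part of SCC supplies $\underline{\lim}_{k\rightarrow\infty} S_k(\btheta,\bgamma)>0$, which (by the frequency-domain reformulation of $S_k$ as the squared $\mathcal{L}_2(\T^d)$ distance $\|\Inkb{\cdot}{\btheta}-\Inkb{\cdot}{\bgamma}\|_{\mathcal{L}_2(\T^d)}^2$) says that the continuous $\mathcal{L}_2$ distance between the expected periodograms remains bounded below.

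The main obstacle is exactly this passage from the discrete grid average to the continuous quantity $S_k$: the squared difference has per-coordinate degree up to $2(n_i-1)$ while the grid carries only $n_i$ nodes per axis, so exact quadrature fails and the aliasing cross-terms are genuinely present and can a priori be negative. I would control them by observing that each such term couples the covariance difference at a lag $\bu$ with its value at the far lag $\bu-\bn\circ\boldsymbol{\epsilon}$ near the grid boundary; the decay (square-summability) of the model covariances forces these cross-terms to be asymptotically negligible relative to $S_k$, so the grid average and $S_k$ share the same strictly positive limit inferior. Combining this with the two displays above yields $\underline{\lim}_{k\rightarrow\infty}\{\widetilde{\ell}_{\bn_k}(\bgamma)-\widetilde{\ell}_{\bn_k}(\btheta)\}>0$, which is the assertion of~\eqref{eq:05112018}.
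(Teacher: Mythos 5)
Your overall route is the same as the paper's: write the difference as an average of $\phi(x)=x-\log x-1$ evaluated at the ratio of expected periodograms, lower-bound that by the Fourier-grid average of $\left\{\Ink{\btheta}-\Ink{\bgamma}\right\}^2$, pass to the lag domain, and invoke the second part of SCC. Within that route you make two genuine improvements. First, your uniform bound $\phi(x)\geq c_0(x-1)^2$ on the compact interval $[f_{\delta,\text{min}}/f_{\delta,\text{max}},\,f_{\delta,\text{max}}/f_{\delta,\text{min}}]$ (available thanks to Lemma~\ref{lemma:bounds_periodogram}) replaces the paper's local Taylor expansion of $\phi$ together with its two-case analysis on the Lebesgue measure of the set $\Pi_k$ where the ratio is close to one; your version is shorter and sidesteps the subsequence bookkeeping that the paper's first case leaves implicit. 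Second, you are more careful than the paper at the decisive step: the paper invokes ``Parseval's equality'' to identify $\nk{k}^{-1}\sum_{\bk\in\Omega_{\bn_k}}\left\{\Ink{\btheta}-\Ink{\bgamma}\right\}^2$ with $S_k(\btheta,\bgamma)$, but that identity is exact only for the continuous integral over $\T^d$; on the discrete grid the squared difference has per-coordinate degree up to $2(n_i-1)$ against $n_i$ nodes, so exactly the aliasing cross-terms you describe are present. The paper passes over them in silence; you name them and propose a remedy.

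The remedy is, however, where your proof is not yet closed. Writing $d_\bu = c_{g,\bn_k}(\bu)\left\{c_X(\bu;\btheta)-c_X(\bu;\bgamma)\right\}$, your negligibility argument for the cross-terms $\sum_{\bu}d_\bu d_{\bu-\bn_k\circ\boldsymbol{\epsilon}}$ relies on the lag shift $n_{k,i}$ being large in every coordinate with $\epsilon_i\neq 0$: then one of the two paired lags has a coordinate of size at least $n_{k,i}/2$, and square-summability of the covariance differences (which does follow from Assumption~\ref{ass:1}.\ref{Ass:sdf}) kills the tails via a splitting and Cauchy--Schwarz estimate. This works when $\min_i n_{k,i}\to\infty$, but bare SCC does not force the grid to grow in all directions---the paper's own separable-exponential example satisfies SCC with $n_1\geq 2$ fixed and $n_2\to\infty$. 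In such a regime a lag and its alias differ by a bounded shift, the cross-terms need not vanish, and the folded coefficients $\sum_{\bq}d_{\bv+\bq\circ\bn_k}$ can even cancel exactly, so the grid average can sit strictly below $S_k$. To finish, you should either add the (standard) hypothesis that the grid grows unboundedly in every direction and carry out the tail estimate explicitly, or find an argument that genuinely covers bounded directions; as written, your last sentence is a claim rather than a proof. In fairness, this gap is shared with---indeed inherited from---the paper, whose ``Parseval'' step asserts an equality that is false on the discrete Fourier grid.
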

For multivariate random fields, the proof of Lemma~\ref{lemma=061120182}
requires an additional simple lemma,
\begin{lemma}
\label{lemma=traceinequality}
Let $H_1, H_2$ be two Hermitian positive definite Hermitian matrices. Then,
\begin{equation}
  \mathrm{trace}
  \left[
    H_1 H_2
    \right]^2
  \geq
  \left(\min \mathrm{sp}\left(H_1\right)\right)^2 \mathrm{trace}\left[H_2\right]^2,
\end{equation}
where $\mathrm{sp}\left(H_1\right)$ denotes the set of eigenvalues of $H_1$,
which are all positive.
\end{lemma}
Lemma~\ref{lemma=06112018} now states a form of regularity of our expected
likelihood functions. It relies on our regularity assumption on the
spectral model family, where we have assumed the existence and boundedness
of the partial derivatives with respect to the parameter vector, per Assumption~\ref{Ass:sdf}.
\begin{lemma}
\label{lemma=06112018}
Let $\bgamma\in\Theta$ and let $(\bgamma_k)_{k\in\N}$ be a sequence of
parameter vectors that converges to $\bgamma$. Then,
\begin{equation}
\widetilde{\ell}_{\bn_k}(\bgamma_k)-\widetilde{\ell}_{\bn_k}(\bgamma)
\longrightarrow 0, \ \ (k\longrightarrow\infty).
\end{equation}
\end{lemma}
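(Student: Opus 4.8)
The plan is to bound each summand of $\widetilde{\ell}_{\bn_k}$ in a manner that is Lipschitz in the parameter, uniformly over the Fourier grid and over $k$, and then to exploit the fact that $|\Omega_{\bn_k}|$ is comparable to $|\bn_k|$ so that the prefactor $|\bn_k|^{-1}$ keeps the averaged difference controlled by $\|\bgamma_k-\bgamma\|$. I would begin by writing the difference as
\begin{equation*}
\widetilde{\ell}_{\bn_k}(\bgamma_k) - \widetilde{\ell}_{\bn_k}(\bgamma)
= |\bn_k|^{-1}\sum_{\bk\in\Omega_{\bn_k}}\left\{\log\frac{\Ink{\bgamma_k}}{\Ink{\bgamma}} + \Ink{\btheta}\left(\frac{1}{\Ink{\bgamma_k}}-\frac{1}{\Ink{\bgamma}}\right)\right\},
\end{equation*}
so that it suffices to dominate the bracketed expression, uniformly in $\bk$ and $k$, by a constant multiple of $\|\bgamma_k-\bgamma\|$.

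The crucial ingredient is a uniform bound on the $\btheta$-derivative of the expected periodogram. By Lemma~\ref{lemma:fourierpair} the kernel $\mathcal{F}_{\bn_k}$ is the Fourier transform of $c_{g,\bn_k}$; since $c_{g,\bn_k}(\mathbf{0})=1$ by~\eqref{eq:cgdef} and $\mathcal{F}_{\bn_k}\geq 0$ as a squared modulus, $\mathcal{F}_{\bn_k}$ is a probability density on $\T^d$, in particular $\int_{\T^d}\mathcal{F}_{\bn_k}(\bk')\,d\bk'=1$. Differentiating the convolution representation~\eqref{barI} under the integral sign (justified by dominated convergence using the bound below) gives $\partial_{\theta_j}\Ink{\btheta} = \int_{\T^d}\partial_{\theta_j}f_{X,\bdelta}(\bk-\bk';\btheta)\,\mathcal{F}_{\bn_k}(\bk')\,d\bk'$, so that Assumption~\ref{ass:1}(\ref{Ass:sdf}) together with $\mathcal{F}_{\bn_k}\geq 0$ yields
\begin{equation*}
\left|\partial_{\theta_j}\Ink{\btheta}\right|\leq M_{\partial\theta}\int_{\T^d}\mathcal{F}_{\bn_k}(\bk')\,d\bk' = M_{\partial\theta},
\end{equation*}
uniformly in $\bk$, $\btheta$, $j$ and $k$. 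Hence $\btheta\mapsto\Ink{\btheta}$ is Lipschitz with constant $\sqrt{p}\,M_{\partial\theta}$ in the Euclidean norm ($p=\dim\Theta$), so $\bigl|\Ink{\bgamma_k}-\Ink{\bgamma}\bigr|\leq \sqrt{p}\,M_{\partial\theta}\,\|\bgamma_k-\bgamma\|$ for $k$ large enough that the segment $[\bgamma,\bgamma_k]$ lies in $\Theta$.

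To finish, I would invoke Lemma~\ref{lemma:bounds_periodogram}, which confines every $\Ink{\cdot}$ to $[f_{\delta,\text{min}},f_{\delta,\text{max}}]$, an interval on which $x\mapsto\log x$ and $x\mapsto 1/x$ are Lipschitz with constants $f_{\delta,\text{min}}^{-1}$ and $f_{\delta,\text{min}}^{-2}$, and on which $\Ink{\btheta}\leq f_{\delta,\text{max}}$. Composing these with the Lipschitz estimate of the previous step bounds each summand by $K\|\bgamma_k-\bgamma\|$, where $K = \sqrt{p}\,M_{\partial\theta}\bigl(f_{\delta,\text{min}}^{-1}+f_{\delta,\text{max}}\,f_{\delta,\text{min}}^{-2}\bigr)$ does not depend on $\bk$ or $k$. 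Since $|\Omega_{\bn_k}|=\prod_i(n_{k,i}+1)\leq 2^d|\bn_k|$, the factor $|\bn_k|^{-1}$ leaves $\bigl|\widetilde{\ell}_{\bn_k}(\bgamma_k)-\widetilde{\ell}_{\bn_k}(\bgamma)\bigr|\leq 2^d K\,\|\bgamma_k-\bgamma\|\to 0$, as required. I expect the main obstacle to be the second step, namely transferring the derivative bound from the aliased spectral density to the expected periodogram; this rests entirely on the probabilistic nature of $\mathcal{F}_{\bn_k}$ (nonnegativity and unit mass). The only other delicate point is the mean-value/fundamental-theorem step, which needs $[\bgamma,\bgamma_k]\subset\Theta$; this is automatic for large $k$ when $\bgamma$ is interior, and otherwise follows from convexity of $\Theta$, so it is a mild technicality rather than a genuine difficulty.
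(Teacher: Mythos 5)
Your proof is correct, and its crux coincides with the paper's: both arguments hinge on transferring the derivative bound $M_{\partial\theta}$ of Assumption~\ref{ass:1}.\ref{Ass:sdf} through the convolution representation~\eqref{barI}, using that $\mathcal{F}_{\bn_k}$ is nonnegative with unit mass, to obtain $\bigl|\Ink{\bgamma_k}-\Ink{\bgamma}\bigr| = \mathcal{O}\left(\|\bgamma_k-\bgamma\|_2\right)$ uniformly in $\bk$ and $k$, and both then control the summands via the bounds of Lemma~\ref{lemma:bounds_periodogram}. Where you part ways is the finish: the paper treats this estimate only as giving pointwise (in $\bk$) convergence of $\Ink{\bgamma_k}$ to $\Ink{\bgamma}$ and then invokes the Dominated Convergence Theorem, whereas you compose the uniform Lipschitz estimate with the Lipschitz constants of $\log x$ and $1/x$ on $[f_{\delta,\text{min}},f_{\delta,\text{max}}]$ and simply average over the Fourier grid. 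Your finish is more elementary and buys a quantitative rate, $|\widetilde{\ell}_{\bn_k}(\bgamma_k)-\widetilde{\ell}_{\bn_k}(\bgamma)| = \mathcal{O}\left(\|\bgamma_k-\bgamma\|_2\right)$; it also sidesteps the slightly delicate point that the paper applies dominated convergence to finite sums whose summands, and whose number of terms, both change with $k$ --- a step that is only legitimate because the convergence is in fact uniform, which is precisely what your bound makes explicit. The caveat you flag yourself (the mean-value step needs the derivative bound to hold along a path from $\bgamma$ to $\bgamma_k$, which requires either $\bgamma$ interior, convexity of $\Theta$, or the bound extending to a neighbourhood) is equally present, and equally glossed over, in the paper's own proof, so it is a shared technicality rather than a gap in your argument relative to the paper.
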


\begin{lemma}
\label{lemma:cvgtotheta}
Let $\bgamma_k\in\Theta^{\N}$ be a sequence of parameter vectors such that
$\widetilde{\ell}_{\bn_k} (\bgamma_k)-\widetilde{\ell}_{\bn_k}(\btheta)$
converges to zero as $k$ tends to infinity. Then $\bgamma_k$ converges to
$\btheta$.
\end{lemma}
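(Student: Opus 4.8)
The plan is to argue by contradiction using the standard compactness-plus-identifiability scheme for consistency. Suppose $\bgamma_k$ does \emph{not} converge to $\btheta$. Then there exist $\epsilon>0$ and a subsequence $(\bgamma_{k_j})_j$ with $\|\bgamma_{k_j}-\btheta\|\geq\epsilon$ for all $j$. Since $\Theta$ is compact by Assumption~\ref{ass:1}, this subsequence admits a further subsequence, which I still denote $(\bgamma_{k_j})_j$ for brevity, converging to some limit $\bgamma^\ast\in\Theta$. The lower bound on the distance passes to the limit, so $\|\bgamma^\ast-\btheta\|\geq\epsilon$, and in particular $\bgamma^\ast\neq\btheta$.

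The next step is to show that the expected-likelihood gap at the fixed limit $\bgamma^\ast$ must vanish along the selected subsequence, which will contradict identifiability. First, restricting the hypothesis to the subsequence gives $\widetilde{\ell}_{\bn_{k_j}}(\bgamma_{k_j})-\widetilde{\ell}_{\bn_{k_j}}(\btheta)\to 0$. Second, since $\bgamma_{k_j}\to\bgamma^\ast$, I apply the regularity result of Lemma~\ref{lemma=06112018} to this convergent sequence of parameters to obtain $\widetilde{\ell}_{\bn_{k_j}}(\bgamma_{k_j})-\widetilde{\ell}_{\bn_{k_j}}(\bgamma^\ast)\to 0$. Subtracting these two statements yields
\begin{equation*}
\widetilde{\ell}_{\bn_{k_j}}(\bgamma^\ast)-\widetilde{\ell}_{\bn_{k_j}}(\btheta)\longrightarrow 0 \qquad (j\to\infty).
\end{equation*}
On the other hand, Lemma~\ref{lemma=061120182} applied at the fixed parameter $\bgamma^\ast\neq\btheta$ gives $\underline{\lim}_{k\to\infty}\,|\widetilde{\ell}_{\bn_k}(\bgamma^\ast)-\widetilde{\ell}_{\bn_k}(\btheta)|>0$; since the limit inferior does not decrease when passing to a subsequence, the same strict lower bound holds along $(k_j)_j$, contradicting the display above. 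Hence $\bgamma_k\to\btheta$. I would also remark, via Lemma~\ref{lemma:minofexptedlkh}, that the gap $\widetilde{\ell}_{\bn_k}(\bgamma_k)-\widetilde{\ell}_{\bn_k}(\btheta)$ is nonnegative, so the hypothesis is precisely the statement that the likelihood values approach their minimum, which is the natural setting in which one expects the argument to drive the parameters to the minimizer.

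The main obstacle I anticipate is the legitimacy of invoking Lemma~\ref{lemma=06112018} along the extracted subsequence of grids $(\bn_{k_j})_j$ rather than along the original full sequence $(\bn_k)_k$ for which it is literally stated. I expect this to be harmless, because the continuity argument underlying Lemma~\ref{lemma=06112018} relies only on the uniform spectral-density bounds of Assumption~\ref{ass:1}(\ref{Ass:sdf}) and on the SCC structure, all of which are inherited by every subsequence of grids; to make this airtight I would reindex the pair $(\bgamma_{k_j},\bn_{k_j})$ by $j$ so that it matches the hypotheses of Lemma~\ref{lemma=06112018} verbatim, and note that the relevant constants are uniform in $k$. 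The only other point requiring care is the elementary observation that the limit inferior over a subsequence is at least the limit inferior over the whole sequence, which is what secures the contradiction.
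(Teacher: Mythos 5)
Your proof is correct and follows essentially the same route as the paper's: negate the conclusion, extract (via compactness of $\Theta$) a subsequence converging to some $\bgamma^\ast\neq\btheta$, then play Lemma~\ref{lemma=06112018} (regularity along the convergent subsequence) against Lemma~\ref{lemma=061120182} (asymptotic identifiability) to contradict the hypothesis; the paper packages the final step as a reverse triangle inequality rather than subtracting two convergent differences, which is a purely cosmetic difference. Your extra care about subsequences (both in extracting the $\epsilon$-separated subsequence explicitly and in noting that the liminf bound and the regularity lemma survive passage to subsequences) is sound and, if anything, slightly more meticulous than the paper's own write-up.
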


And finally, the following lemma helps us understand how the likelihood
function, as a random element, behaves with regard to the expected
likelihood function.
\begin{lemma}[Uniform convergence in probability of the likelihood function]
\label{lemma:cvglkh}
The log-likelihood function $\ell_{\bn_k}(\cdot)$ converges uniformly in
probability to $\widetilde{\ell}_{\bn_k}(\cdot)$ over the parameter set
$\Theta$ as $k$ goes to infinity.
\end{lemma}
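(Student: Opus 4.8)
The plan is to write the stochastic deviation $D_k(\bgamma) := \ell_{\bn_k}(\bgamma) - \widetilde{\ell}_{\bn_k}(\bgamma)$ in a form that isolates its randomness, and then to combine a pointwise-in-$\bgamma$ variance estimate with a uniform stochastic Lipschitz bound, closing the argument by covering the compact set $\Theta$ with a finite net. Since the term $\log\Ink{\bgamma}$ is deterministic and common to both $\ell_{\bn_k}$ and $\widetilde{\ell}_{\bn_k}$, and since $\Ink{\btheta} = \E_{\btheta}\{I_{\bn_k}(\bk)\}$, the logarithmic contributions cancel and only the ratio terms survive,
\[
D_k(\bgamma) = |\bn_k|^{-1}\sum_{\bk\in\Omega_{\bn_k}}
\frac{I_{\bn_k}(\bk) - \Ink{\btheta}}{\Ink{\bgamma}},
\]
a centred random variable with $\E_{\btheta}\{D_k(\bgamma)\} = 0$ whose sole dependence on $\bgamma$ enters through the deterministic denominator $\Ink{\bgamma}$.

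First I would establish pointwise convergence. For a fixed $\bgamma$, set $a_k(\bk) = \Ink{\bgamma}^{-1}$; by Lemma~\ref{lemma:bounds_periodogram} this is bounded above by $f_{\delta,\text{min}}^{-1}$ uniformly in $\bk$ and $k$, so Proposition~\ref{prop=varianceLinearCombinations} applies to $D_k(\bgamma) = |\bn_k|^{-1}\sum_{\bk\in\Omega_{\bn_k}} a_k(\bk)\{I_{\bn_k}(\bk) - \Ink{\btheta}\}$ and gives $\var\{D_k(\bgamma)\} = \mathcal{O}\{\sum_{\bu} c_X(\bu)^2 c_{g,\bn_k}(\bu)/\sum g_\bs^2\}$. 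The first part of SCC forces this bound to vanish as $k\to\infty$, and Chebyshev's inequality then yields $D_k(\bgamma)\to 0$ in probability for each fixed $\bgamma$, hence simultaneously over any finite subset of $\Theta$.

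Next I would upgrade this to uniformity through a stochastic equicontinuity argument. Differentiating the convolution representation $\Ink{\bgamma} = \int_{\T^d} f_{X,\bdelta}(\bk-\bk';\bgamma)\mathcal{F}_{\bn_k}(\bk')\,d\bk'$ under the integral sign, and using that $\mathcal{F}_{\bn_k}$ is nonnegative with unit mass on $\T^d$ together with the bound $M_{\partial\theta}$ on the $\bgamma$-derivative of $f_{X,\delta}$ from Assumption~\ref{ass:1}, gives $\|\nabla_{\bgamma}\Ink{\bgamma}\|\leq M_{\partial\theta}$ uniformly in $\bk$ and $\bgamma$. Differentiating $D_k$ and inserting the lower bound $f_{\delta,\text{min}}$ from Lemma~\ref{lemma:bounds_periodogram} in the denominators then produces the $\bgamma$-uniform gradient bound
\[
\sup_{\bgamma\in\Theta}\|\nabla_{\bgamma} D_k(\bgamma)\|
\leq \frac{M_{\partial\theta}}{f_{\delta,\text{min}}^2}\,
|\bn_k|^{-1}\sum_{\bk\in\Omega_{\bn_k}} \left|I_{\bn_k}(\bk) - \Ink{\btheta}\right| =: L_k,
\]
so that $D_k$ is Lipschitz on $\Theta$ with the random constant $L_k$.

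Finally I would verify that $L_k$ is stochastically bounded and run the covering argument. Each ordinate $I_{\bn_k}(\bk)$ is a quadratic form in the Gaussian vector $\bX$ with mean $\Ink{\btheta}\leq f_{\delta,\text{max}}$, so $\var_{\btheta}\{I_{\bn_k}(\bk)\} = \mathcal{O}(f_{\delta,\text{max}}^2)$ and $\E_{\btheta}|I_{\bn_k}(\bk) - \Ink{\btheta}| \leq \sqrt{\var_{\btheta}\{I_{\bn_k}(\bk)\}} = \mathcal{O}(f_{\delta,\text{max}})$ uniformly in $\bk$; averaging over the $\Omega_{\bn_k}$ grid (whose cardinality is of order $|\bn_k|$) gives $\E_{\btheta}\{L_k\} = \mathcal{O}(1)$, hence $L_k = O_P(1)$ by Markov's inequality. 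Given $\epsilon,\eta>0$, I would then pick $M$ with $\sup_k P(L_k > M) < \eta/2$, set $\delta = \epsilon/(2M)$, take a finite $\delta$-net $\bgamma_1,\ldots,\bgamma_N$ of $\Theta$, and bound $\sup_{\bgamma\in\Theta}|D_k(\bgamma)| \leq \max_{i\leq N}|D_k(\bgamma_i)| + L_k\delta$; the first term tends to $0$ in probability over the finite net by the pointwise result, while $P(L_k\delta > \epsilon/2) \leq P(L_k > M) < \eta/2$, which establishes uniform convergence in probability. The hard part will be this last step---keeping the Lipschitz constant $O_P(1)$ by controlling the mean absolute deviation $\E_{\btheta}|I_{\bn_k}(\bk) - \Ink{\btheta}|$ uniformly in $\bk$ and $k$---whereas the remaining ingredients reduce to the already-established variance Proposition and the spectral-density bounds of Assumption~\ref{ass:1}.
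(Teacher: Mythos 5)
Your proposal is correct, and its first half coincides with the paper's own proof: the same cancellation of the deterministic logarithmic terms leaving $D_k(\bgamma)=|\bn_k|^{-1}\sum_{\bk\in\Omega_{\bn_k}}\left\{I_{\bn_k}(\bk)-\Ink{\btheta}\right\}/\Ink{\bgamma}$, the same application of Proposition~\ref{prop=varianceLinearCombinations} with weights $\Ink{\bgamma}^{-1}$ bounded by $f_{\delta,\text{min}}^{-1}$ via Lemma~\ref{lemma:bounds_periodogram}, and the same appeal to the first part of SCC together with Chebyshev's inequality. Where you genuinely diverge is in the uniformity step. The paper stops after observing that the variance bound, and hence the $\mathcal{O}_P$ rate, does not depend on $\bgamma$; this delivers $\sup_{\bgamma\in\Theta}P\left(|D_k(\bgamma)|>\epsilon\right)\to 0$, i.e.\ pointwise convergence at a $\bgamma$-uniform rate, and the passage to $P\left(\sup_{\bgamma\in\Theta}|D_k(\bgamma)|>\epsilon\right)\to 0$ is left implicit. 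You supply precisely this missing step: a stochastic Lipschitz constant $L_k$ built from the bound $\|\nabla_{\bgamma}\Ink{\bgamma}\|\leq M_{\partial\theta}$ (the same estimate the paper itself uses in the proof of Lemma~\ref{lemma=06112018}) and the lower bound $f_{\delta,\text{min}}$, the moment bound $\E\{L_k\}=\mathcal{O}(1)$ from the Gaussian periodogram variance (indeed $\var\{I_{\bn_k}(\bk)\}\leq 2f_{\delta,\text{max}}^2$ by Isserlis' theorem and Lemma~\ref{lemma:bounds_periodogram}, and $|\Omega_{\bn_k}|=\mathcal{O}(|\bn_k|)$), and a finite net of the compact set $\Theta$. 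This buys a fully rigorous sup-over-$\Theta$ statement, which matters downstream: in the proof of Theorem~\ref{th:consistency} the lemma is invoked at the random point $\hat{\btheta}_k$, a use that requires genuine uniform convergence rather than the pointwise-with-uniform-rate version. The only cost is the extra machinery, plus the minor standing caveat (shared with the paper's Lemma~\ref{lemma=06112018}) that the mean-value bound between net points implicitly uses segments lying in the parameter set.
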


With these lemmata we have all the necessary results to establish
Theorem~\ref{th:consistency}. This theorem is important as it establishes
the consistency of our estimator under a very wide range of sampling schemes
and model families. We contrast our results with those of
\cite{dahlhaus1987edge}, \cite{Guyon2008}, as well as
\cite{fuentes2007approximate}. The insight from
Theorem~\ref{th:consistency}, as compared to the insight of the need for
tapering provided by~\cite{dahlhaus1987edge} is clear. The aim of this paper
is to balance computational tractability with estimation performance. Very
standard assumptions allow us to still derive the results required for
estimation.

\subsection{Convergence rate and asymptotic normality}

We now study the convergence rate and asymptotic distribution of our
estimates within the increasing-domain asymptotics framework. In
Theorem~\ref{th=asymptNormality} we establish a convergence rate in the
general framework of HSCC (Definition~\ref{def=significantCorrelation}) for
both Gaussian and non-Gaussian random fields, and we also establish
asymptotic normality in the scenario of a Gaussian random field
  observed on a full grid.  Under further requirements
  (Assumption~\ref{ass:asymptoticNormality2}), asymptotic normality is shown
  for non-Gaussian random fields in Theorem~\ref{prop:asympvarianceest},
  together with a limiting form of the covariance structure of our
  estimator.

To prove our theorems, we first need to understand better the behaviour of
quantities of the form
$|\bn|^{-1}\sum_{\bk\in\Omega_{\bn_k}}{w_k(\bk)I_{\bn}(\bk)}$, for some
weights~$w_k$. In Proposition~\ref{prop=varianceLinearCombinations}, we had
already showed that under mild conditions, their variance vanished at a rate
driven by the number of observed points. Now in
Proposition~\ref{prop=normalityBoundedLinearCombinationsPeriodogram}, and
under the assumption of a full grid, by writing this quantity as a quadratic
form in the random vector~$\bX$ and extending a result
by~\citet{grenander1958toeplitz}, we show that this quantity is
asymptotically normally distributed, under mild conditions on the family of
functions $w_k(\cdot)$. Before getting there, we need the following
intermediary result, which extends a standard result for Toeplitz matrices
to their multi-dimensional counterpart, Block Toeplitz with Toeplitz Block
matrices.
\begin{lemma}[Upper bound on the spectral norm of the covariance matrix]
\label{proposition=upperboundnormcovmat}
Suppose Assumption 1 holds.
In the case of a full grid, the spectral norm of $C_{\bX}$ and that of its
inverse are upper-bounded according to
\begin{equation*}
\|C_{\bX}\| \leq f_{\delta,\text{max}}, \ \ \ 
\|C_{\bX}^{-1}\| \leq f_{\delta,\text{min}}^{-1}.
\end{equation*}
\end{lemma}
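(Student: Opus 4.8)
The plan is to exploit the fact that, for a fully observed rectangular grid, $\bX$ is the vector of a homogeneous field sampled on a regular lattice, so that $C_{\bX}$ is a real symmetric (indeed multilevel Block-Toeplitz-with-Toeplitz-Block) matrix whose $(\bs,\bs')$ entry depends only on the lag $\bs-\bs'$ and equals $c_X(\bs-\bs';\btheta)$. Being a covariance matrix it is Hermitian and positive semi-definite, so its spectral norm equals its largest eigenvalue and, once we exhibit a strictly positive lower bound on its eigenvalues, $C_{\bX}$ is nonsingular with $\|C_{\bX}^{-1}\|$ equal to the reciprocal of its smallest eigenvalue. It therefore suffices to bound the Rayleigh quotient $\bV^* C_{\bX}\bV/\|\bV\|_2^2$ above and below, uniformly over nonzero $\bV$.

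Next I would pass to the spectral domain using the Fourier representation of the covariance over integer lags, $c_X(\bu;\btheta)=\int_{\T^d} f_{X,\bdelta}(\bk;\btheta)\exp(i\bk\cdot\bu)\,d\bk$ for $\bu\in\Z^d$. Substituting this into the quadratic form and interchanging the (finite) summation with the integral yields
\begin{equation*}
\bV^* C_{\bX}\bV=\int_{\T^d} f_{X,\bdelta}(\bk;\btheta)\,\Big|\sum_{\bs\in\Jx} V_\bs \exp(-i\bk\cdot\bs)\Big|^2\,d\bk.
\end{equation*}
Writing $D_{\bV}(\bk)=\sum_{\bs\in\Jx}V_\bs\exp(-i\bk\cdot\bs)$ for the associated trigonometric polynomial, the orthogonality of the grid exponentials on $\T^d$, with the normalization fixed in Section~\ref{sec:notations}, gives $\int_{\T^d}|D_{\bV}(\bk)|^2\,d\bk=\|\bV\|_2^2$.

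The key step is then to read the Rayleigh quotient as a weighted average of the aliased spectral density,
\begin{equation*}
\frac{\bV^* C_{\bX}\bV}{\|\bV\|_2^2}=\frac{\int_{\T^d} f_{X,\bdelta}(\bk;\btheta)\,|D_{\bV}(\bk)|^2\,d\bk}{\int_{\T^d}|D_{\bV}(\bk)|^2\,d\bk},
\end{equation*}
and to invoke the uniform bounds $f_{\delta,\text{min}}\leq f_{X,\bdelta}(\bk;\btheta)\leq f_{\delta,\text{max}}$ from Assumption~\ref{ass:1}(\ref{Ass:sdf}). A weighted average of a function taking values in $[f_{\delta,\text{min}},f_{\delta,\text{max}}]$ lies in the same interval, so every eigenvalue of $C_{\bX}$ is squeezed between $f_{\delta,\text{min}}$ and $f_{\delta,\text{max}}$. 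This delivers $\|C_{\bX}\|\leq f_{\delta,\text{max}}$ at once; the strict positivity $f_{\delta,\text{min}}>0$ shows that the smallest eigenvalue of $C_{\bX}$ is at least $f_{\delta,\text{min}}$, whence $C_{\bX}$ is invertible and $\|C_{\bX}^{-1}\|\leq f_{\delta,\text{min}}^{-1}$.

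The argument is essentially the multidimensional Grenander--Szeg\H{o} eigenvalue bound, so I expect no serious obstacle beyond bookkeeping. The one point that genuinely requires care is that the symbol governing $C_{\bX}$ is the \emph{aliased} density $f_{X,\bdelta}$ on $\T^d$ rather than $f_X$ on $\R^d$: only integer lags enter $C_{\bX}$, so the relevant object is the folded density, and it is precisely this quantity that Assumption~\ref{ass:1}(\ref{Ass:sdf}) bounds. Keeping track of the $(2\pi)^d$ normalization implicit in~\eqref{eq:defSpectralDensity} and verifying the orthogonality identity under the colexicographic ordering are the only places where constants must be tracked carefully.
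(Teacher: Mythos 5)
Your proposal is correct and follows essentially the same route as the paper's proof: both express the quadratic form $\bV^* C_{\bX}\bV$ via the Fourier representation of the covariance at integer lags as $\int_{\T^d} f_{X,\bdelta}(\bk;\btheta)\,|D_{\bV}(\bk)|^2\,d\bk$, then sandwich the aliased spectral density between $f_{\delta,\text{min}}$ and $f_{\delta,\text{max}}$ and apply Parseval to bound the eigenvalues, hence both spectral norms. The paper states the lower bound (and thus the bound on $\|C_{\bX}^{-1}\|$) follows "in the same way," which is exactly the Rayleigh-quotient argument you spell out explicitly.
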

\begin{proof}
Please see the Supplementary Material.
\end{proof}

\begin{proposition}[Asymptotic normality of linear
combinations of the periodogram]
\label{prop=normalityBoundedLinearCombinationsPeriodogram}
Suppose Assumption~\ref{ass:1} holds and that the random field is Gaussian
and observed on a full grid.  Let $w_k(\cdot), k\in\N$ be a family of
real-valued functions defined on $\mathcal{T}^d$ bounded above and below by
two constants, denoted $M_W, m_W > 0$ respectively.  Then
$|\bn|^{-1}\sum_{\bk\in\Omega_{\bn_k}}{w_k(\bk)I_{\bn}(\bk)}$ is
asymptotically normally distributed.
\end{proposition}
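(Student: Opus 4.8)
The plan is to write the linear functional as a quadratic form in the Gaussian vector $\bX$ and then apply a central limit theorem for quadratic forms of Gaussian random variables, in the spirit of Grenander and Szeg\H{o}'s results for Toeplitz forms, extended here to the Block-Toeplitz-with-Toeplitz-Block (BTTB) setting. Since $g_\bs = 1$ throughout, the periodogram at a Fourier frequency $\bk\in\Omega_{\bn_k}$ is $I_{\bn}(\bk) = (2\pi)^{-d}|\bn|^{-1}|\sum_{\bs}X_\bs \exp(-i\bk\cdot\bs)|^2$, so each $I_{\bn}(\bk)$ is itself a quadratic form in $\bX$. Summing against the weights $w_k(\bk)$, I would write
\begin{equation*}
|\bn|^{-1}\sum_{\bk\in\Omega_{\bn_k}}w_k(\bk)I_{\bn}(\bk) = \bX^T A_k \bX,
\end{equation*}
where $A_k$ is a real symmetric matrix assembled from the discrete Fourier vectors weighted by $w_k(\bk)$. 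Because the frequencies range over the full Fourier grid $\Omega_{\bn_k}$ and the weights depend only on $\bk$, the matrix $A_k$ inherits a BTTB structure; this is the structural observation that makes the Toeplitz machinery applicable.

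The standard route is then to verify a Lyapunov-type or fourth-moment condition for the quadratic form. For a centered Gaussian $\bX$ with covariance $\Cx$, the form $\bX^T A_k \bX$ is asymptotically normal provided the Frobenius norm of $\Cx^{1/2} A_k \Cx^{1/2}$ dominates its spectral norm, i.e.\ $\|\Cx^{1/2} A_k \Cx^{1/2}\| / \|\Cx^{1/2} A_k \Cx^{1/2}\|_F \to 0$. Equivalently, writing $\lambda_j$ for the eigenvalues of $\Cx^{1/2} A_k \Cx^{1/2}$, one needs $\max_j |\lambda_j| = o\big((\sum_j \lambda_j^2)^{1/2}\big)$, which is the classical sufficient condition ensuring no single eigendirection carries a non-vanishing fraction of the variance. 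Here the key inputs are Lemma~\ref{proposition=upperboundnormcovmat}, which bounds $\|\Cx\|$ and $\|\Cx^{-1}\|$ by $f_{\delta,\text{max}}$ and $f_{\delta,\text{min}}^{-1}$, and the boundedness of $w_k$ between $m_W$ and $M_W$, which together control both $\|A_k\|$ and its trace-type norms. The variance normalization is supplied by Proposition~\ref{prop=varianceLinearCombinations}, which identifies the order of $\var\{\bX^T A_k \bX\}$.

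The key steps, in order, are: (i) express the functional as $\bX^T A_k \bX$ and identify the BTTB structure of $A_k$; (ii) bound the spectral norm $\|A_k\|$ using the uniform bound $M_W$ on the weights together with the orthogonality of the Fourier vectors on the full grid, so that $\|A_k\| = \mathcal{O}(|\bn_k|^{-1})$; (iii) combine this with Lemma~\ref{proposition=upperboundnormcovmat} to bound $\|\Cx^{1/2} A_k \Cx^{1/2}\| \leq f_{\delta,\text{max}}\|A_k\|$, while the Frobenius norm is of the order of the square root of the variance from Proposition~\ref{prop=varianceLinearCombinations}, namely $\Theta\big(|\bn_k|^{-1/2}\big)$ in the full-grid case where $\sum g_\bs^2 = |\bn_k|$ and the covariance is square-summable; (iv) verify that the ratio of spectral to Frobenius norm vanishes, so the Lindeberg/fourth-cumulant condition for Gaussian quadratic forms holds, yielding asymptotic normality.

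The main obstacle I expect is step (ii)--(iii): controlling the spectral norm of $A_k$ and, more delicately, showing that the \emph{Frobenius} norm does not degenerate relative to the spectral norm. The spectral-norm bound is the part that genuinely requires the full-grid hypothesis, because it is there that $A_k$ becomes exactly BTTB and one can invoke the extension of the Grenander--Szeg\H{o} eigenvalue-distribution result to multi-dimensional Toeplitz operators; the lower bound $f_{\delta,\text{min}}>0$ and $m_W>0$ are what prevent the variance from collapsing, ensuring the Frobenius norm stays bounded below by a nonzero multiple of $|\bn_k|^{-1/2}$. Establishing the precise matching of these two norms — so that their ratio tends to zero — is the crux; everything else is bookkeeping built on the cited lemma and proposition.
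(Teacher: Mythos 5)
Your proposal is correct and follows essentially the same route as the paper: both write the functional as a quadratic form in $\bX$, bound the eigenvalues of $\Cx W_k$ above by $|\bn_k|^{-1}M_W f_{\delta,\text{max}}$ and below by $|\bn_k|^{-1}m_W f_{\delta,\text{min}}$ using Lemma~\ref{proposition=upperboundnormcovmat} together with the two-sided bounds on the weights, and conclude from the vanishing ratio of the largest eigenvalue to the standard deviation of the quadratic form. The only difference is that where you invoke the packaged Lindeberg-type criterion for Gaussian quadratic forms (spectral norm negligible relative to Frobenius norm), the paper proves that step by hand, expanding the log-characteristic function following Grenander and Szeg\H{o} and showing the third-order terms are $\mathcal{O}(|\bn_k|^{-1/2})$.
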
 
\begin{proof}
Please see the Supplementary Material.
\end{proof}

Before finally establishing our convergence rates, as well as the asymptotic
normality in the case of a Gaussian random field observed on a full grid, we
require one additional set of assumptions.
\begin{assumption}[Assumptions for convergence rate and asymptotic normality]
$\left.\right.$
\label{assumption:rate}
\begin{enumerate}[label={(2\alph*)}]
\item
\label{ass:interior}
The interior of $\Theta$ is non-null 
 and the true length-$p_\theta$ parameter
vector $\btheta$ lies in the interior of $\Theta$.
\item
\label{ass:rate2}
The sequence of observation grids leads to HSCC for the considered model family.
\end{enumerate}
\end{assumption}
The following lemma relates HSCC to the minimum eigenvalue of the
expectation of the Hessian matrix of $l(\cdot)$ at the true parameter
vector.
\begin{lemma}
\label{lemma:mineigenvalueHessian}
Under HSCC, the minimum eigenvalue of the expectation of the Hessian matrix
(with respect to the parameter vector) at the true parameter, given by
\begin{equation}
\left(
|\bn_k|^{-1}
\sum_{\bk\in\Omega_{\bn_k}}
    {
      \overline{I}_{\bn_k}(\bk;\btheta)^{-2}
      \nabla_{\theta}\overline{I}_{\bn_k}(\bk;\btheta)
      \nabla_{\theta}\overline{I}_{\bn_k}(\bk;\btheta)^T
    }
    \right),
\end{equation}
is lower-bounded by $S(\btheta)$, which was defined in
Definition~\ref{def:HSCC}.
\end{lemma}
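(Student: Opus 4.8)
The plan is to realise the claimed lower bound as a constrained Rayleigh quotient and then move from the frequency domain to the lag domain by Parseval, at which point the second item of Definition~\ref{def:HSCC} supplies the bound directly. Write $\mathcal{H}_k$ for the matrix in the statement. Being a sum of outer products $\nabla_{\theta}\overline{I}_{\bn_k}(\bk;\btheta)\nabla_{\theta}\overline{I}_{\bn_k}(\bk;\btheta)^T$ weighted by the positive scalars $\overline{I}_{\bn_k}(\bk;\btheta)^{-2}$, the matrix $\mathcal{H}_k$ is symmetric positive semi-definite, so its smallest eigenvalue equals $\min_{\|\mathbf{v}\|=1}\mathbf{v}^T\mathcal{H}_k\mathbf{v}$. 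First I would expand this quadratic form as
\[
\mathbf{v}^T\mathcal{H}_k\mathbf{v}
= |\bn_k|^{-1}\sum_{\bk\in\Omega_{\bn_k}}
\overline{I}_{\bn_k}(\bk;\btheta)^{-2}
\Bigl(\textstyle\sum_{j=1}^p v_j\,\partial_{\theta_j}\overline{I}_{\bn_k}(\bk;\btheta)\Bigr)^2 .
\]

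Second, I would invoke the upper bound $\overline{I}_{\bn_k}(\bk;\btheta)\le f_{\delta,\text{max}}$ from Lemma~\ref{lemma:bounds_periodogram} to replace the weight by the constant $f_{\delta,\text{max}}^{-2}$; since every summand is non-negative the inequality is preserved, and the task reduces to bounding $|\bn_k|^{-1}\sum_{\bk\in\Omega_{\bn_k}}(\sum_j v_j\,\partial_{\theta_j}\overline{I}_{\bn_k})^2$ from below, uniformly over unit vectors $\mathbf{v}$. The key structural observation is that differentiating the Fourier series of Lemma~\ref{eq:computeExpectedPeriodogram} term by term gives $\partial_{\theta_j}\overline{I}_{\bn_k}(\bk;\btheta)=(2\pi)^{-d}\sum_{\bu}c_{g,\bn_k}(\bu)\,\partial_{\theta_j}c_X(\bu;\btheta)\exp(-i\bk\cdot\bu)$, so $\sum_j v_j\,\partial_{\theta_j}\overline{I}_{\bn_k}(\bk;\btheta)$ is the trigonometric polynomial with lag-domain coefficients $(2\pi)^{-d}c_{g,\bn_k}(\bu)d_{\mathbf{v}}(\bu)$, where $d_{\mathbf{v}}(\bu):=\sum_j v_j\,\partial_{\theta_j}c_X(\bu;\btheta)$. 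Applying the Parseval identity on the Fourier grid $\Omega_{\bn_k}$ turns the frequency average into a lag-domain sum of squares which, up to the factor $(2\pi)^{-2d}$, is $\sum_{\bu}c^2_{g,\bn_k}(\bu)\,d_{\mathbf{v}}(\bu)^2$. This is exactly the quantity whose minimum over unit $\mathbf{v}$ is, by the second item of Definition~\ref{def:HSCC}, asymptotically lower-bounded by $S(\btheta)$. Minimising over $\mathbf{v}$ on both sides and reinstating the constants then yields $\varliminf_k\lambda_{\min}(\mathcal{H}_k)\ge (2\pi)^{-2d}f_{\delta,\text{max}}^{-2}S(\btheta)$, i.e.\ a positive multiple of $S(\btheta)$, the constant being absorbable into the definition of $S(\btheta)$.

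The hard part will be that the discrete Parseval step is not exact in the way the continuous identity is. Because $c_{g,\bn_k}(\bu)$ is supported on $|u_i|\le n_{i}-1$, the polynomial has degree $n_{i}-1$ per coordinate but is sampled on only $n_{i}$ frequencies, so summing its squared modulus over $\Omega_{\bn_k}$ produces \emph{aliased} coefficients, hence cross terms of the form $c_{g,\bn_k}(\bu)d_{\mathbf{v}}(\bu)\,c_{g,\bn_k}(\bu-\mathbf{q}\circ\bn_k)d_{\mathbf{v}}(\bu-\mathbf{q}\circ\bn_k)$ with $\mathbf{q}\in\{0,1\}^d$, precisely the shifts appearing in the FFT computation of the expected periodogram. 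I would control these by exploiting that $c_{g,\bn_k}$ is small near the boundary of its support (for a full grid it is the triangular weight $\prod_i(1-|u_i|/n_{i})$, so a lag and its wrap-around carry complementary, jointly bounded factors), so that a Cauchy--Schwarz estimate keeps the aliased cross-contribution a fraction of $\sum_{\bu}c^2_{g,\bn_k}(\bu)d_{\mathbf{v}}(\bu)^2$ that stays bounded away from one; equivalently, one may pass through the continuous integral $(2\pi)^{-d}\int_{\T^d}|\sum_j v_j\,\partial_{\theta_j}\overline{I}_{\bn_k}|^2\,d\bk$, whose value is fixed by continuous Parseval and whose discrepancy from the Fourier-grid average is of lower order under growing-domain asymptotics. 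The delicate bookkeeping is to make this control \emph{uniform in} $\mathbf{v}$, so that it survives the passage to the minimum and leaves the lower bound $S(\btheta)$ intact.
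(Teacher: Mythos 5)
Your overall strategy is the natural one, and in substance it is all the paper itself offers: the paper's proof of this lemma is a single sentence deferring to Lemma~7 of \citet{guillaumin2017analysis}, and the passage you make from the frequency-domain quadratic form to the lag-domain quantity of Definition~\ref{def:HSCC} is exactly the manipulation the paper performs in its own proof of Lemma~\ref{lemma=061120182}, where the same discrete sum over $\Omega_{\bn_k}$ is equated to a lag-domain sum of squares under the name of Parseval's equality. Your opening steps are correct: the Rayleigh-quotient reduction, the weight bound $\overline{I}_{\bn_k}(\bk;\btheta)^{-2}\geq f_{\delta,\text{max}}^{-2}$ from Lemma~\ref{lemma:bounds_periodogram}, and term-by-term differentiation of the finite Fourier series of Lemma~\ref{eq:computeExpectedPeriodogram}. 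The constant discrepancy you acknowledge (you obtain $(2\pi)^{-2d}f_{\delta,\text{max}}^{-2}S(\btheta)$ rather than $S(\btheta)$ itself) is immaterial, since downstream (proof of Theorem~\ref{th=asymptNormality}) $S(\btheta)$ is only ever used as some positive $k$-independent lower bound.

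The genuine gap is the step you flag as hard but only sketch: control of the aliased cross-terms, and your proposed mechanism does not deliver it. Cauchy--Schwarz applied to $\sum_{\bu}a(\bu)a(\bu-\bq\circ\bn_k)$, with $a(\bu)=c_{g,\bn_k}(\bu)d_{\mathbf{v}}(\bu)$ and $d_{\mathbf{v}}(\bu)=\sum_j v_j\,\partial c_X(\bu;\btheta)/\partial\theta_j$, bounds each such sum by $\sum_{\bu}a(\bu)^2$, i.e.\ by a fraction equal to one, not bounded away from one, because the change of variables $\bu\mapsto\bu-\bq\circ\bn_k$ carries one of the two weighted sums exactly onto a subset of the other; the pointwise bound $(1-u_i/n_i)(u_i/n_i)\leq 1/4$ on the complementary triangular weights does not improve this aggregate bound. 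What actually closes the argument is decay, not weight structure: Assumption~\ref{ass:1}.\ref{Ass:sdf} bounds $\nabla_\theta f_{X,\delta}$, so the sequences $d_{\mathbf{v}}$ are square-summable uniformly over the unit sphere (the map $\mathbf{v}\mapsto d_{\mathbf{v}}$ is linear into $\ell^2$, so the image of the sphere is compact and has uniformly small tails), and for each $\bq\neq\mathbf{0}$ and each admissible $\bu$ at least one of $\bu$, $\bu-\bq\circ\bn_k$ has a coordinate of modulus at least $n_i/2$ in some direction with $q_i=1$; hence the cross-terms are $o(1)$ uniformly in $\mathbf{v}$ \emph{provided every} $n_i\to\infty$. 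That proviso is not implied by HSCC: the paper's framework explicitly allows grids bounded in some directions (its separable exponential example requires only $n_1,n_2\geq 2$ and $n_1n_2\to\infty$), and if, say, $n_2$ stays fixed, the surviving cross-terms pair nearby lags $(u_1,u_2)$ and $(u_1,u_2-n_2)$ carrying comparable weights; nothing in HSCC then prevents, for instance, $d_{\mathbf{v}}(u_1,1)=-d_{\mathbf{v}}(u_1,-1)$ with these lags carrying all of $S(\btheta)$, in which case the wrapped coefficients, and hence the Hessian quadratic form along $\mathbf{v}$, vanish asymptotically even though HSCC holds. So to complete your proof you must either add the hypothesis that the grid grows unboundedly in every direction, or restate $S_k$ and HSCC in terms of the wrapped coefficients $\sum_{\bq}a(\bu-\bq\circ\bn_k)$, under which your Parseval step becomes exact. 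To be fair, the paper's one-line citation proof, and its Parseval step in Lemma~\ref{lemma=061120182}, are exposed to precisely the same objection; your write-up has the merit of making the difficulty visible, but it does not yet resolve it.
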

\begin{proof}
This can be established by a direct adaptation of Lemma~7
of~\citet{guillaumin2017analysis}.
\end{proof}

\begin{theorem}[Convergence rate and asymptotic normality of estimates]
\label{th=asymptNormality}
Suppose Assumptions~\ref{ass:1} and~\ref{assumption:rate} hold.
Our estimate converges in probability with rate
\begin{equation*}
  r_k=
  \left(
  \frac{\sum_{\bu\in\Z^d}{c_{g,k}(\bu)\, c^2_X(\bu) }}
       {\sum{g_\bs^2}}
       + 
       \frac
           {
	     \left|\bn_k\right|
           }
           {
	     \left(\sum g_{\bs}^2\right)^2
           }
           \right)^{1/2}.
\end{equation*}
If the random field is Gaussian the convergence rate simplifies to,
\begin{equation*}
r_k=
\left(
\frac{\sum_{\bu\in\Z^d}{c_{g,k}(\bu)\, c^2_X(\bu) }}
{\sum{g_\bs^2}}
\right)^{1/2}.
\end{equation*}
In addition, if the grid is fully-observed and the random field in Gaussian,
then $\hat{\btheta}$ is asymptotically normally distributed. 
\end{theorem}
\begin{proof}
Please see the Supplementary Material.
\end{proof}
Note that in Theorem~\ref{th=asymptNormality} we do not make
  assumptions about the dimensions of the observation domain, as is usually
  the case for Whittle-type estimators where a common growth rate in all
  directions is typically assumed. Asymptotic normality of our estimate can
  also be established for non-Gaussian random fields under appropriate
  assumptions on high-order cumulants, which we introduce below.
  \begin{assumption}
    \label{ass:asymptoticNormality2}
    $\left.\right.$
    \begin{enumerate}[label={(3\alph*)}]
    \item \emph{Observation domain.} The grid is fully observed, and
    we set $g_{\bs}=1$ on the grid and $0$ otherwise. Additionally,
    we require the domain to be unbounded in all directions for
    asymptotic forms to hold.
    \item \emph{Higher-order homogeneity.} Joint moments of any order are
      finite and for any positive integer $L\geq 2$ and locations $\bs_1,
      \cdots, \bs_L\in\R^d$, for any $\bu\in\R^d$, $\cum\left[X_{\bs_1},
        \cdots, X_{\bs_L}\right]=\cum\left[X_{\bs_1+\bu}, \cdots,
        X_{\bs_L+\bu}\right]$.  If this assumption holds we define for
      $\bu_1, \cdots, \bu_{L-1}\in\R^d$,
      \begin{equation}
	c_L(\bu_1, \cdots, \bu_{L-1}) = \cum\left[X_{\bs_0},
          X_{\bs_0+\bu_1}, \cdots,
          X_{\bs_0+\bu_{L-1}}\right], \forall\bs_0\in\R^d.
      \end{equation}
      In particular $c_2(\cdot)$ is just the autocovariance function
      of the random field.
    \item \emph{Short-length memory.} For any positive integer $L\geq 2$,
      \begin{equation}
	\sum_{\bu_1, \cdots, \bu_{L-1}\in\R^d} 
	\left(1+\|\bu_j\|^d\right)
	\left| c_L(\bu_1, \cdots, \bu_{L-1}) \right| < \infty, \ j=1,\cdots,d.
      \end{equation}
    \end{enumerate}
  \end{assumption}
  \begin{proposition}
    \label{prop:asympNormalNonGaussian}
    Suppose Assumptions~\ref{ass:1} and~\ref{ass:asymptoticNormality2}
    hold. Let $\mathbf{w}_k(\cdot)$ be uniformely-bounded vector-valued
    functions from $\mathcal{T}^d$ to $\R^d$ such that
    $\{\mathbf{w}_k(\cdot)\}$ converges to $\mathbf{w}(\cdot)$ pointwise,
    where $\mathbf{w}(\cdot)$ is a Rieman-integrable function with values in
    $\R^d$.  Then, $|\bn|^{-1}\sum_{\bk\in\Omega_\bn}{
      \mathbf{w}_k(\bk)I_{\bn}(\bk) }$ is asymptotically
    jointly-normal. Additionally, suppose the grid grows to infinity in all
    directions, the asymptotic covariance structure of
    $|\bn|^{-1}\sum_{\bk\in\Omega_\bn}{ \mathbf{w}_k(\bk)I_{\bn}(\bk) }$ is
    then determined by
    \begin{align*}
      2^d\pi^d &
      |\bn|^{-1}\int_{\mathcal{T}^d}{
	\left(\mathbf{w}(\bk)+\mathbf{w}(-\bk)\right)
	\mathbf{w}(\bk)^T f_{X, \bdelta}(\bk)^2d\bk
      }\\
      &+ 
      (2\pi)^d|\bn|^{-1}\int_{\mathcal{T}^d}{
	\int_{\mathcal{T}^d}{
	  \mathbf{w}(\bomega_1)\mathbf{w}(\bomega_2)^T
	  f_{X,4,\delta}(\bomega_1, \bomega_2, -\bomega_1)
	  d\bomega_1
	}
	d\bomega_2,
      }
    \end{align*}
    where $f_{X,4,\delta}(\cdot, \cdot, \cdot)$ is the fourth-order
    cumulant spectral density, i.e.,
    \begin{equation*}
    	f_{X,4,\delta}(\bk_1, \bk_2, \bk_3)
    	=
    	\sum_{\bu_1,\bu_2,\bu_3}
    	{
    		c_4(\bu_1, \bu_2, \bu_3)
    		e^{-i(\bu_1\cdot\bk_1
    		+ \bu_2\cdot\bk_2
    		+\bu_2\cdot\bk_2	
    		)
    		}
    	},
    \end{equation*}
    and where $\mathbf{w}(-\bk)$ is obtained by $2\pi$ periodic extension of 
    $\mathbf{w}$ along all dimensions.
  \end{proposition}
  \begin{proof}
    Please see the Supplementary Material.
  \end{proof}
  Proposition~\ref{ass:asymptoticNormality2} is similar to
  Proposition~\ref{prop=normalityBoundedLinearCombinationsPeriodogram}.  The
  two differ in terms of the assumptions required to prove the result.
  Proposition~\ref{prop=normalityBoundedLinearCombinationsPeriodogram}
  requires the random field to be Gaussian while
  Proposition~\ref{ass:asymptoticNormality2} allows for non-Gaussian random
  fields at the expense of additional constraints on the memory of
  the random field.
\begin{theorem}
  \label{prop:asympvarianceest}
  Suppose Assumptions~\ref{ass:1},~\ref{assumption:rate},
  and~\ref{ass:asymptoticNormality2} hold. Then $\widehat{\btheta}$ is
  asymptotically normally distributed. Additionally, if the observed random
  field is Gaussian and the observation domain grows to infinity in all
  directions, $\widehat{\btheta}$ admits an asymptotic covariance structure
  determined by,
  \begin{equation*}
    2^{d+1}\pi^d|\bn|^{-1}
    \left[
      \int_{\mathcal{T}^d}{
	\nabla_{\btheta}\log f_{X,\bdelta}(\bk;\btheta)
	\nabla_{\btheta}\log f_{X,\bdelta}(\bk;\btheta)^T
	d\bk
      }
      \right]^{-1}.
  \end{equation*}
\end{theorem}
\begin{proof}
This results from combining Proposition~\ref{prop:asympNormalNonGaussian}
and the proof of Theorem~\ref{th=asymptNormality}.
\end{proof}
The asymptotic form of the covariance structure can also be determined for
the non-Gaussian case from Proposition~\ref{prop:asympNormalNonGaussian}.
Theorem~\ref{prop:asympvarianceest} is a generalization of a standard result
in time series analysis~\citep[Thm 10.8.2]{brockwell2009time}.  However, see
e.g.~\cite{simons2013maximum} for a practical large-sample example where the
asymptotic form has not been reached, but is instead dependent on the true
form of the expected periodogram as well as the sample size. This---in
addition to scenarios of incomplete grids---motivates the following section,
where we consider estimation of standard errors in the more general setting
where our asymptotic results do not hold.

\subsection{Estimating standard errors}

We now seek to derive how to estimate the standard error of
$\widehat{\btheta}$ for a given spatial sampling and model family.  Using
equations~\eqref{eq:21001405} and~\eqref{eq:12321806} from the Supplementary
Material, we obtain an approximation for the variance of $\widehat{\btheta}$
in the following proposition, where
$\mathcal{H}$ denotes the Fisher Information
matrix.
\begin{proposition}[Form of the variance]\label{propcov}
The covariance matrix of the quasi-likelihood estimator takes the form of
\begin{equation}\label{varhat}
  \var\left\{\widehat{\btheta} \right\}\approx {\bm{\mathcal{H}}}^{-1}(\btheta)
  \var\left\{\nabla\ell_M(\btheta)\right\}{\bm{\mathcal{H}}}^{-1}(\btheta),
\end{equation}
with the covariance matrix of the score taking the form of
\begin{align}\label{eq:stdformula1p}
  \cov\left\{
  \frac{\partial\ell_M(\bm{\theta})}{\partial\theta_p},
  \frac{\partial \ell_M(\bm{\theta})}{\partial\theta_q}
  \right\}&=\n^{-2}\sum_{\bk_1, \bk_2\in\Omega_{\bn}}
  \frac{\cov\left\{I_{\bn}(\bk_1),I_{\bn}(\bk_2)\right\}}
  {\overline{I}_{\bn}^2(\bk_1;\btheta)
  	\overline{I}_{\bn}^2(\bk_2;\btheta)}
  \frac{\partial\overline{I}_{\bn}(\bk_1;\btheta)}{\partial\theta_p}
  \frac{\partial\overline{I}_{\bn}(\bk_2;\btheta)}{\partial\theta_q}. 
\end{align}
\end{proposition}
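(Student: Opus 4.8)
The plan is to treat $\widehat{\btheta}$ as the solution of the estimating equation $\nabla_{\btheta}\ell(\widehat{\btheta}) = \mathbf{0}$ arising from the quasi-likelihood~\eqref{eq:pseudolkh}, and to apply the standard M-estimator sandwich argument. First I would perform a mean-value expansion of the score about the true parameter $\btheta$: there exists $\btheta^{*}$ on the segment between $\widehat{\btheta}$ and $\btheta$ such that
\begin{equation*}
\mathbf{0} = \nabla\ell(\widehat{\btheta}) = \nabla\ell(\btheta) + \nabla^2\ell(\btheta^{*})\left(\widehat{\btheta}-\btheta\right).
\end{equation*}
By Lemma~\ref{lemma:minofexptedlkh} together with the interiority of $\btheta$ from Assumption~\ref{assumption:rate}, and differentiating under the expectation, the expected score vanishes, $\E\{\nabla\ell(\btheta)\} = \nabla\widetilde{\ell}_{\bn}(\btheta) = \mathbf{0}$, so the score is centred. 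Replacing $\nabla^2\ell(\btheta^{*})$ by the expected Hessian $\bm{\mathcal{H}}(\btheta)$ (justified below) yields $\widehat{\btheta}-\btheta \approx -\bm{\mathcal{H}}^{-1}(\btheta)\nabla\ell(\btheta)$, and taking the variance of both sides gives the sandwich form~\eqref{varhat}.

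It then remains to compute the variance of the score explicitly. Differentiating~\eqref{eq:pseudolkh} with respect to $\theta_p$ and noting that the only random quantity is the periodogram $I_{\bn}(\bk)$, while $\overline{I}_{\bn}(\bk;\btheta)$ and its $\btheta$-derivatives are deterministic, I obtain
\begin{equation*}
\frac{\partial\ell(\btheta)}{\partial\theta_p} = \n^{-1}\sum_{\bk\in\Omega_\bn}\frac{\partial\overline{I}_{\bn}(\bk;\btheta)}{\partial\theta_p}\,\frac{\overline{I}_{\bn}(\bk;\btheta)-I_{\bn}(\bk)}{\overline{I}_{\bn}(\bk;\btheta)^2}.
\end{equation*}
Since the deterministic factors pull out of the covariance and $\cov\{\overline{I}_{\bn}(\bk_1;\btheta)-I_{\bn}(\bk_1),\overline{I}_{\bn}(\bk_2;\btheta)-I_{\bn}(\bk_2)\} = \cov\{I_{\bn}(\bk_1),I_{\bn}(\bk_2)\}$, bilinearity of the covariance over the double sum delivers exactly~\eqref{eq:stdformula1p}.

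The main obstacle is rigorously justifying the replacement of $\nabla^2\ell(\btheta^{*})$ by $\bm{\mathcal{H}}(\btheta)$, which is precisely what the symbol $\approx$ in~\eqref{varhat} encodes. For this I would argue in three steps. First, consistency (Theorem~\ref{th:consistency}) forces $\btheta^{*}\to\btheta$. Second, the entries of $\nabla^2\ell(\btheta)$ are, up to deterministic terms, bounded linear functionals of the periodogram, with coefficients of the form $\partial_{\theta_q}\{\overline{I}_{\bn}^{-2}\,\partial_{\theta_p}\overline{I}_{\bn}\}$; these are uniformly bounded because the first and second $\btheta$-derivatives of $\overline{I}_{\bn}(\bk;\btheta)$ are bounded (the $M_{\partial\theta^2}$ condition of Definition~\ref{def:HSCC}) and $\overline{I}_{\bn}$ is bounded away from zero (Lemma~\ref{lemma:bounds_periodogram}). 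Proposition~\ref{prop=varianceLinearCombinations} then shows that $\nabla^2\ell(\btheta)$ concentrates on its expectation $\bm{\mathcal{H}}(\btheta)$, and equicontinuity in $\btheta$ (again from the bounded derivatives) upgrades this to a neighbourhood of $\btheta$. Third, $\bm{\mathcal{H}}(\btheta)$ is invertible with a uniform lower bound on its smallest eigenvalue by Lemma~\ref{lemma:mineigenvalueHessian} under HSCC, so its inverse is well-behaved. Combining these three facts controls the Taylor remainder and legitimises~\eqref{varhat}; the same ingredients, together with the rate of $\var\{\nabla\ell(\btheta)\}$ implied by Proposition~\ref{prop=varianceLinearCombinations}, reproduce the convergence rate established in Theorem~\ref{th=asymptNormality}.
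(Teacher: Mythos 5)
Your proposal is correct and takes essentially the same route as the paper: the paper likewise obtains \eqref{varhat} from the mean-value expansion $\widehat{\btheta}-\btheta \approx -{\bm{\mathcal{H}}}^{-1}(\btheta)\nabla\ell_M(\btheta)$ (its equations~\eqref{eq:21001405} and~\eqref{eq:12321806}), and derives \eqref{eq:stdformula1p} by writing each score component as a deterministically weighted linear combination of periodogram ordinates and expanding the covariance bilinearly, exactly as you do. Your additional care in justifying the replacement of the observed Hessian by ${\bm{\mathcal{H}}}(\btheta)$ rests on the same ingredients (consistency, Proposition~\ref{prop=varianceLinearCombinations}, Lemma~\ref{lemma:mineigenvalueHessian} under HSCC) that the paper invokes when proving Theorem~\ref{th=asymptNormality}.
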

The computation that appears in~\eqref{eq:stdformula1p} scales like
$|\mathbf{n}|^2$, i.e., not well for large grid sizes. We instead propose a
Monte Carlo implementation to speed this up. The dominant terms
in~\eqref{eq:stdformula1p} correspond to $\bomega_1 = \bomega_2$. We
approximate the sum over the rest of the terms, in the form
\begin{align*}
&\cov\left\{
  \frac{\partial\ell_M(\bm{\theta})}{\partial\theta_p},\frac{\partial \ell_M(\bm{\theta})}
       {\partial\theta_q}\right\}
  = |\bn|^{-2}\sum_{\bk_1\in\Omega_{\bn}}\left\{
  \frac{\partial\overline{I}_{\bn}(\bk_1;\btheta)}{\partial\theta_p}
  \frac{\var\left\{I_{\bn}(\bk_1)\right\}}{\overline{I}_{\bn}^4(\bk_1;\btheta)}
  \frac{\partial\overline{I}_{\bn}(\bk_1;\btheta)}{\partial\theta_q}\right\}\\ 
  &\quad \quad \quad \quad \quad \quad+ \frac{|\bn|^2-|\bn|}{M\n^2}
  \sum_{i = 1\ldots M}
  \frac{\partial\overline{I}_{\bn}(\bk_{1,i};\btheta)}{\partial\theta_p}
  \frac{\cov\left\{I_{\bn}(\bk_{1,i}),I_{\bn}(\bk_{2,i})\right\}}
{\overline{I}_{\bn}^2(\bk_{1,i};\btheta)\overline{I}_{\bn}^2(\bk_{2,i};\btheta)}  
\frac{\partial\overline{I}_{\bn}(\bk_{2,i};\btheta)}{\partial\theta_q},  
\end{align*}
where the $\bomega_{1,i}, \bomega_{2,i}, i = 1\ldots M$ are uniformly
and independently sampled from the set of Fourier frequencies $\Omega_\bn$ under
the requirement $\bomega_{1,i} \neq \bomega_{2,i}$. Note that if
tapering is used, one should consider a few coefficients near the main
diagonal in the above approximation, as tapering generates strong
short-range correlation in the frequency domain.

The covariances of the periodogram at two distinct Fourier frequencies
can be approximated by Riemann approximation of the two integrals that
appear in the expression below, before taking squared absolute values
and summing,
\begin{align*}
  \nonumber
  \cov\left\{I_{\bn}(\bk_{1,i}),I_{\bn}(\bk_{2,i})\right\}
  &=|\bn|^{-1}\left(\left|\int_{\mathcal{T}^d}{\widetilde{f}(\blambda)
    \mathcal{D}_{\bn}(\blambda-\bk_{1,i})
    \mathcal{D}^*_{\bn}(\blambda-\bk_{2,i})
    \,d\blambda}\right|^2\right.\\ 
  &+\left.\left|\int_{\mathcal{T}^d}{\widetilde{f}(\blambda)\mathcal{D}_{\bn}(\blambda-\bk_{1,i})
    \mathcal{D}^*_{\bn}(\blambda+\bk_{2,i})d\blambda}\right|^2\right), 
  \quad i=1,\ldots,M. 
\end{align*}
In the above, $\widetilde{f}$ is the following approximation to the
spectral density, which can be computed by a DFT,
\begin{equation*}
\widetilde{f}(\blambda) =
\sum_{\bu\in\prod_{i=1}^d{[-(n_i-1)\ldots(n_i-1)]}}{c_X(\bu;\btheta)\exp(-i\blambda\cdot\bu)}, 
\end{equation*}
and $\mathcal{D}_{\bn}(\lambda)$ is the non-centred \emph{modified} (due to
the modulation $g_\bs$) Dirichlet kernel of order~$\bn$ given by
\begin{equation*}
\mathcal{D}_{\bn}(\blambda) = \sum_{\bs\in\Jx}{g_\bs \exp(i\blambda\cdot\bs)},
\end{equation*}
where for clarity we omit the dependence on the modulation $g_\bs$ in the notation.
Finally we compute the derivatives of
$\overline{I}_{\bn}(\bk;\btheta)$ as follows,
\begin{equation}
\nabla_{\btheta} \overline{I}_{\bn}(\bk;\btheta) =
\sum_{u\in\Z^d}{\nabla_{\btheta}\overline{c}_X(\bu;\btheta)\exp(-i\bk\cdot\bu)}. 
\end{equation}

\section{Simulation studies and application to the study of planetary topography}
\label{sec:sim}

In this section we present simulation studies and an application to the
study of Venus' topography that demonstrate the performance of the Debiased
Spatial Whittle estimator. We also refer the reader to the Supplementary
Material which contains additional simulation studies. The simulations
presented in Section~\ref{sec:sim1} address the estimation of the range
parameter of a Mat\'ern process, whose slope parameter is known, observed
over a full rectangular grid. These simulations corroborate our theoretical
results on the optimal convergence rate of our estimator despite edge
effects, in contrast to the standard Whittle method. Our second simulation
study in Section~\ref{sec:missing} shows how our estimation procedure
extends the computational benefits of frequency-domain methods to
non-rectangular shapes of data, where we compare parameter estimates with
those of~\cite{guinness2017circulant} in the scenario of a circular shape of
observations. In Section~\ref{sec:strongmissing} we estimate the parameters
of a simulated Mat\'ern process sampled according to a real-world sampling
scheme of terrestrial ocean-floor topography \cite[]{GEBCO2019} with
approximately $72\%$ missing data. Finally, in Section~\ref{sec:venus} we
demonstrate the performance of the Debiased Spatial Whittle estimator when
applied to topographical datasets obtained from Venus \cite[]{Rappaport+99}.

\subsection{Estimation from a fully-observed rectangular grid of data}
\label{sec:sim1}

We simulate from the isotropic Mat\'ern model family, which
corresponds to the following covariance function,
\begin{equation}
\label{eq:materncovfunc}
c_X({\bu})=\sigma^2 \frac{2^{1-\nu}}{\Gamma(\nu)}
\left(\sqrt{2\nu}\frac{\|{\bu}\|}{\rho}\right)^{\nu} 
K_{\nu}\left(\sqrt{2\nu}\frac{\|{\bu}\|}{\rho}\right),
\end{equation}
where $K_{\nu}(x)$ is a Bessel function of the second kind. We
consider the problem of estimating the range parameter $\rho$, which
is fixed to 10 units, while the amplitude $\sigma^2=1$ and the slope
parameter $\nu\in\{\frac{1}{2}, \frac{3}{2}\}$ are fixed and
known. Inference is achieved from simulated data on two-dimensional
rectangular grids of increasing sizes, specifically $\{2^s: s = 4,
\cdots, 8\}$ in each dimension.
\begin{figure}[h!]
\centering
\includegraphics[width=0.8\textwidth]{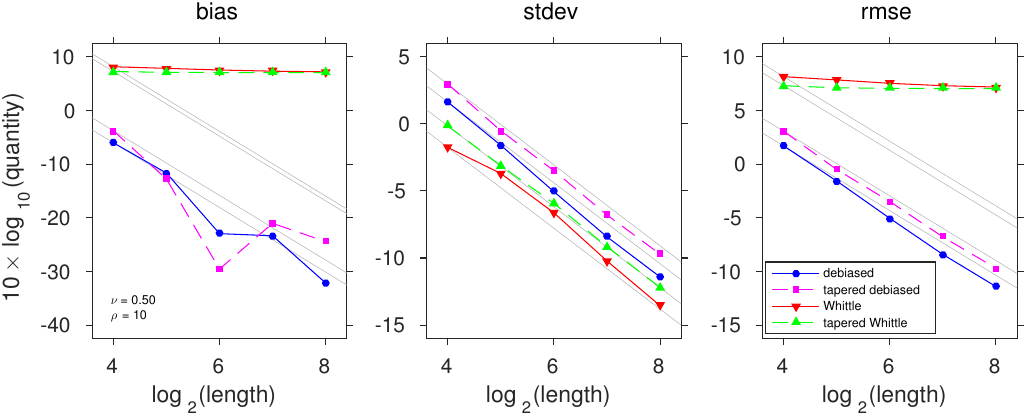}
\caption{\label{fig:rates1}Bias, standard deviation, and root
  mean-squared error of estimates of the range parameter $\rho = 10$
  of a Mat\'ern process~\eqref{eq:materncovfunc} with
  $\nu=1/2,\sigma^2=1$. The estimation method is identified by
  the line style, and gray lines functionally express the theoretical
  dependence on the square root of the sample size. The side length of
  the two-dimensional square grid is indicated by the horizontal axis,
  leading to a sample size of the length squared.}
\vspace{1cm}
\includegraphics[width=0.8\textwidth]{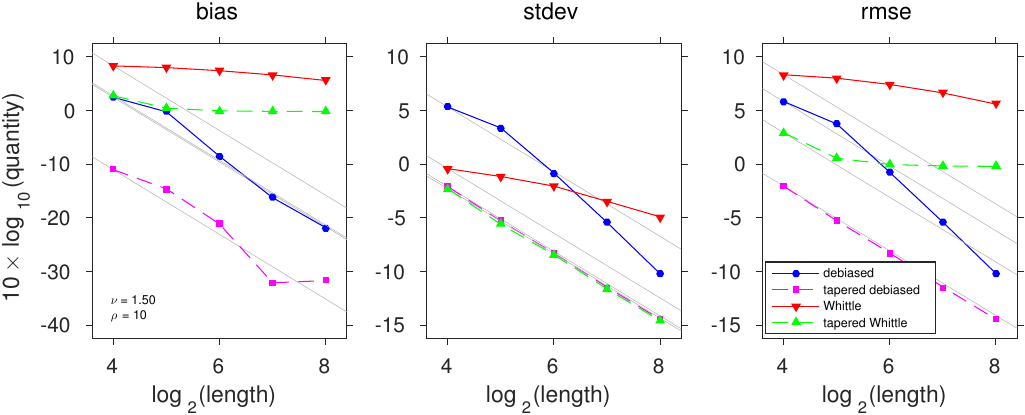}
\caption{\label{fig:rates2}The same simulation setup as in
  Figure~\ref{fig:rates1}, but with $\nu=3/2$. This higher slope parameter
  is associated with smoother realizations, resulting in worsened edge
  effects. This illustrates how our method effectively addresses the edge
  effect issues even in that setting.}
\end{figure}
\begin{figure}[h!]
\centering
\includegraphics[width=0.65\textwidth]{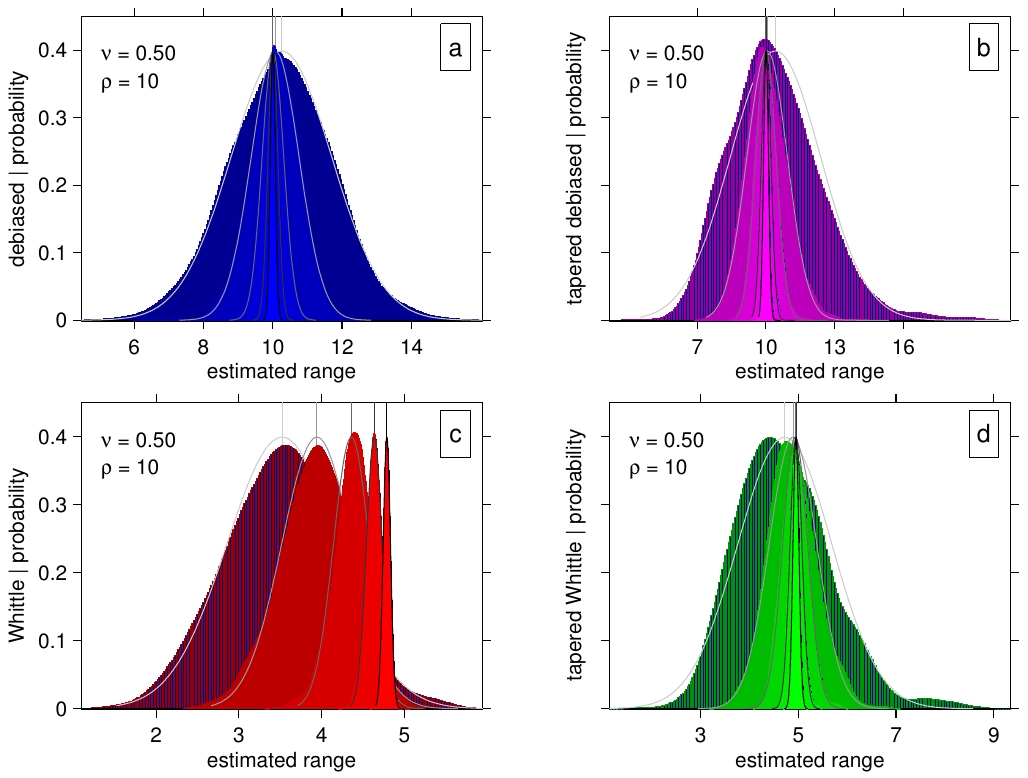}
\caption{\label{fig:histaW}Nonparametric density estimates $\hat{\rho}$ of
  the estimated range parameter $\hat{\rho}$ ($\rho = 10$) for a Mat\'ern
  random field~\eqref{eq:materncovfunc}, with $\sigma^2=1$ and
  $\nu=1/2$. The four subplots show different estimation methods of
  (a)~Debiased Spatial Whittle, (b)~Debiased Spatial Whittle with tapering,
  (c)~standard Whittle, and (d)~standard Whittle with tapering. The density
  estimate is shaded to reflect the size of the random field, with the
  darkest corresponding to total observations $|\mathbf{n}|=(2^4)^2$, and
  the shading incrementally taking a lighter colour for
  $|\mathbf{n}|=(2^5)^2, (2^6)^2, (2^7)^2,(2^8)^2$. Each density estimate is
  complemented by the best fitting Gaussian approximation as a solid black
  or fading gray line (black corresponds to $|\mathbf{n}|=(2^8)^2$ and the
  lightest gray to $|\mathbf{n}|=(2^4)^2$.)}
\end{figure} 
We implement four inference methods: 
\begin{enumerate}[label={(M\arabic*)}]
\item\label{DW} The Debiased Spatial Whittle method, i.e., the estimate
  derived from~\eqref{eq:DebiasedWhittleLKH};
\item\label{DWT} The Debiased Spatial Whittle method combined with a taper,
  specifically the estimate derived from~\eqref{eq:DebiasedWhittleLKH} with
  $g_\bs$ proportional to a Hanning taper;
\item\label{SWL} The standard Whittle likelihood, i.e., estimators obtained by
  replacing $\overline{I}_{\bn}(\bk;\btheta)$ with $f_X(\bk)$
  in~\eqref{eq:pseudolkh} and then
  minimizing~\eqref{eq:DebiasedWhittleLKH};
\item\label{SWLT} The standard Whittle likelihood combined with tapering using a
  Hanning taper, again derived from~\eqref{eq:DebiasedWhittleLKH}
  fitting to $f_X(\bk)$. 
\end{enumerate}
For each configuration of the slope parameter and grid size, we report
summary statistics corresponding to 1,000 independently realised
random fields. We report bias, standard deviation and root
mean-squared error for $\nu=1/2$ and $\nu=3/2$ in
Figures~\ref{fig:rates1} and~\ref{fig:rates2}, respectively.

We first observe that the rate of the Whittle likelihood~\ref{SWL} is very
poor, due to its large bias. It appears that tapering~\ref{SWLT} leads to
improved convergence rates when $\nu=3/2$, although bias remains. In
contrast, the rates of our proposed method~\ref{DW} and its tapered
version~\ref{DWT} do not curb down even with larger grid sizes. This concurs
with the theoretical results on the rate of convergence provided in
Section~\ref{sec:theory}. This example demonstrates that the Debiased
Spatial Whittle method balances the need for computational and statistical
efficiency with large data sets.

In Figure~\ref{fig:histaW} we report the empirical distribution of each
estimator obtained from the 1,000 independent inference procedures for
$\nu=1/2$. The four panels (a), (b), (c) and (d) show the distribution of
estimates from the four methods. The first two panels, (a) and (b), are
broadly unbiased with estimates centred on $\rho=10$ that converge
quickly. The standard Whittle method~(c) has issues with underestimation,
tending towards $\rho=5$. This asymptotic bias is in large part due to
aliasing not being accounted for, combined with the relatively small value
of $\nu=1/2$; these effects are still present in the tapered
estimates~(d). As would be expected, in all four subplots the variance is
decreasing with increasing sample size, at similar rates.  In the
Supplementary Material we present the same study where the Whittle and
tapered Whittle methods use an aliased version of the spectral density. This
largely reduces the bias of these methods. However some asymptotic bias
remains, even for the tapered Whittle method, due to our fixed approximation
to the aliased spectral density owing to computational constraints.

\subsection{Estimation from a circular set of observations}
\label{sec:missing}

In this section, we show how our Debiased Spatial Whittle method extends to
non-rectangular data. More specifically, we assume we only observe data
within a circle with diameter 97 units.  We consider the exponential
covariance kernel given by
\begin{equation}\label{eq:235302072019}
c_X(\bu) = \sigma^2\exp\left(-\frac{\|\bu\|}{\rho}\right), 
\qquad \bu\in\R^2,
\end{equation}
where $\sigma^2 = 1$ is fixed and known and we estimate the range
parameter $\rho$ whose true value is set to 5 units.
We note that the case of a growing circle satisfies SCC, according to
Lemma~\ref{lemma:scc2}, and hence leads to consistency of our estimator.  We
also expect optimal convergence rates, see Theorem~\ref{th=asymptNormality}.

A total number of 1,200 independent simulations are performed. As a
state-of-the-art baseline, we compare to a recent method proposed
by~\citet{guinness2017circulant}, which is an approximation of the circulant
embedding method developed by~\citet{stroud2017bayesian}. These authors
proposed an Expectation Maximization iterative procedure, where the observed
sample is embedded onto a larger grid that makes the covariance matrix
\emph{Block Circulant with Circulant Blocks} (BCCB), which can be
diagonalised fast through the FFT algorithm. \citet{guinness2017circulant}
point out that the size of the embedding grid is very large, making the
imputations costly and the convergence over the iterations slow. To address
this limitation they propose using a periodic approximation of the
covariance function on an embedding grid which is much smaller than that
required for the exact procedure. They show via simulations that using an
embedding grid ratio of $1.25$ along each axis leads to good approximations
of the covariance function on the observed grid.

To implement the method developed by~\citet{guinness2017circulant}, we use
the code provided by the authors. We set a grid ratio of $1.25$ to limit the
computational cost, and implement the method with two choices of the number
of imputations per iteration, $M=1$ and $M=20$. Each implementation is run
for a number of $30$ iterations for all samples.

Both our estimation method and that of~\citet{guinness2017circulant} are
initialised with the estimates provided by the method proposed
by~\citet{fuentes2007approximate}. We show in
Figure~\ref{fig:incompletetradeoff} (left) how the Debiased Spatial Whittle
method achieves computational and statistical efficiency. The 95 per cent
confidence interval of our estimate is similar to that obtained via the
method of~\citet{guinness2017circulant} ($M=1$), however our method, despite
also using an iterative maximization procedure, is significantly faster. As
shown in Figure~\ref{fig:incompletetradeoff} (right panel),
\citet{guinness2017circulant} ($M=20$) leads to lower root mean-squared
error but requires more computational time.

\begin{figure}[h]
\centering{\includegraphics[width=0.85\textwidth]{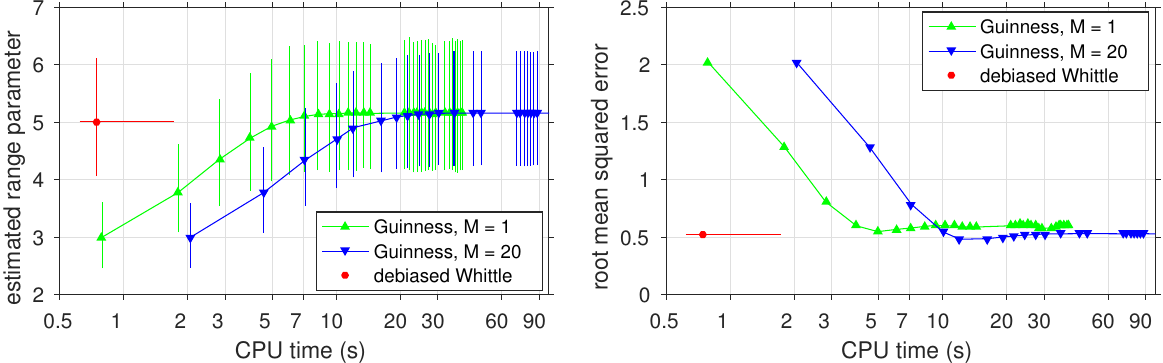}}
\caption{\label{fig:incompletetradeoff} Mean and 95 per cent confidence
  intervals (left) and root mean-squared error (right) of estimates of the
  range parameter $\rho=5$ of an exponential covariance
  model~\eqref{eq:235302072019}. Estimation is performed on a circular set
  of data with diameter 97 units. The converged estimates of the Debiased
  Spatial Whittle method are compared to the iterated estimates of two
  implementations of~\citet{guinness2017circulant}. The horizontal axis in
  both panels corresponds to the average computational time, as performed on
  an Intel(R) Core(TM) i7-7500U CPU 2.7--2.9~GHz processor.}
\end{figure}

\subsection{Application to a realistic sampling scheme of ocean-floor topography}
\label{sec:strongmissing}

In this simulation study we show that our estimator can address complex
lower-dimensional sampling substructure. We apply it to the estimation of a
Matérn process sampled on a real-world observation grid of ocean-bathymetry
soundings, characterised by a very large amount of missing data ($\approx
72\%$). We simulate two Mat\'ern processes, each with slope parameter $0.5$
and with range $20$ and $50$ units respectively. The initial grid is of size
$1081\times1081$. We select a subgrid of size $256\times256$ with similar
missingness properties to those of the whole grid. In
Figure~\ref{fig:frederik_grid1} we plot (left) a simulated Mat\'ern process
on that grid where missing observations have been replaced with zeros. We
note the large amount of missing observations within the bounding
rectangular grid, as well as its complex patterns (i.e. rather than a
uniform missingness scheme). For both these reasons the method proposed
by~\citet{fuentes2007approximate} fails, while our method is still able to
produce useful estimates, as shown in the right panel of
Figure~\ref{fig:frederik_grid1}.
\begin{figure}[h!]
  \centering
  \includegraphics[width=0.30\textwidth]{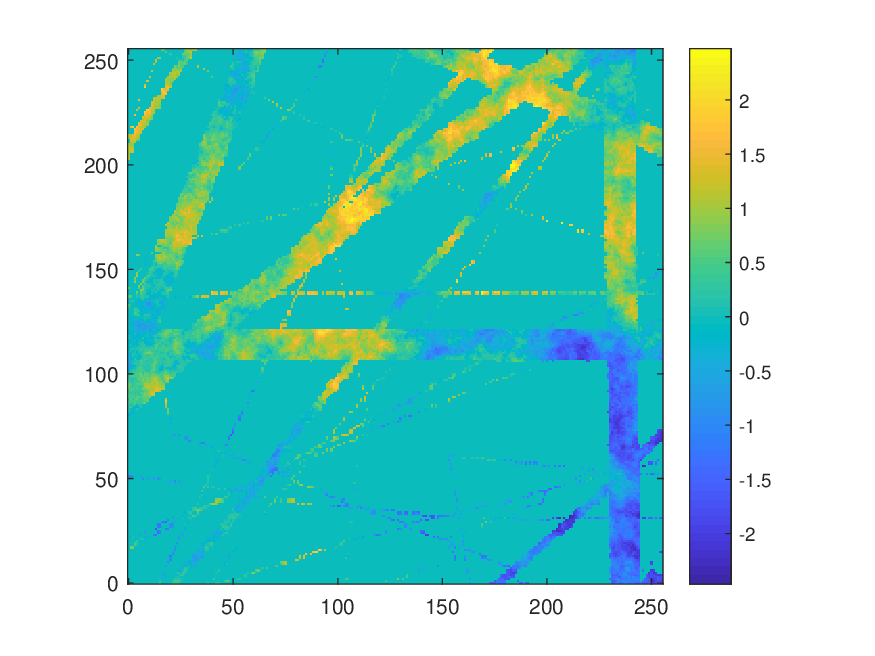}
  \includegraphics[width=0.59\textwidth]{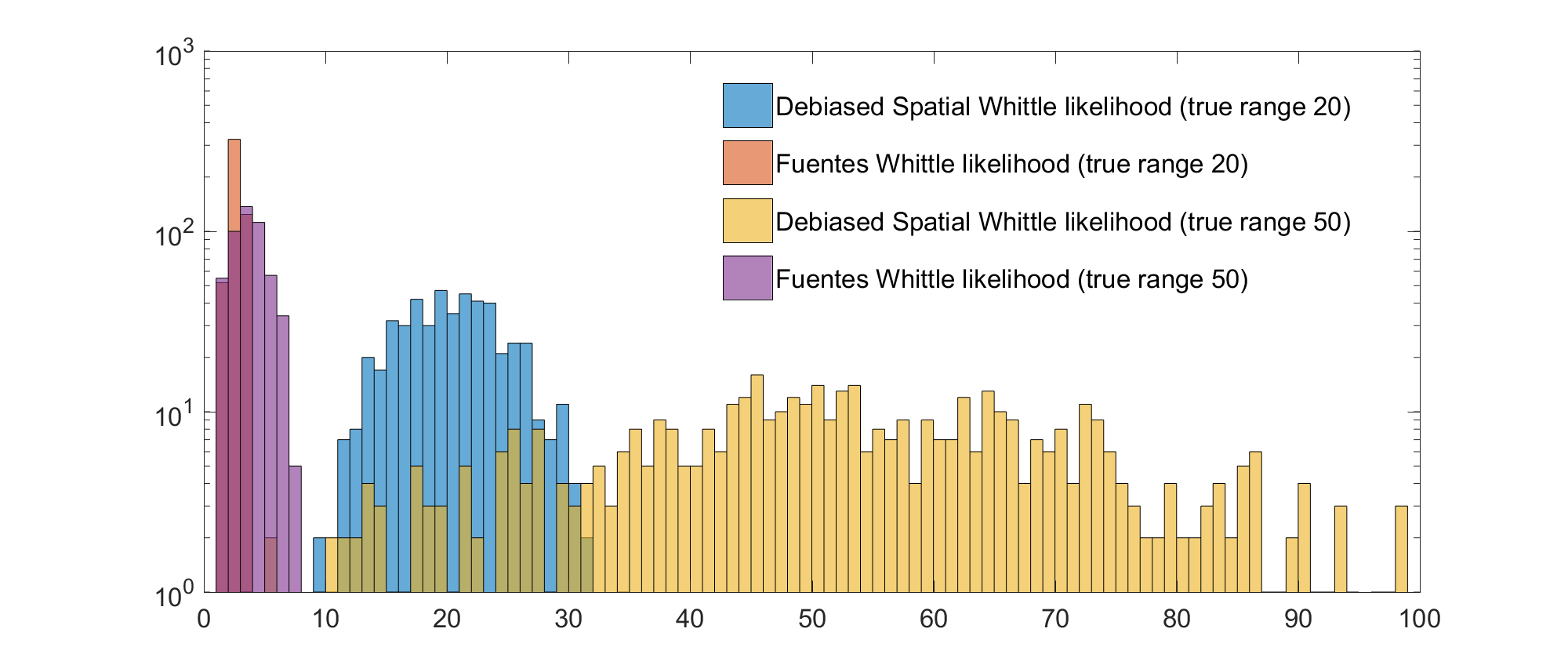}
  \caption{(Left) Simulated Mat\'ern process with slope parameter $0.5$ and
    range parameter $50$ units, on a real-world sampling grid, with missing
    observations replaced by zeros. (Right) Histogram of estimates of the
    range parameter of a simulated Mat\'ern process observed on the
    real-world grid shown in the left panel. We compare our proposed
    estimation method, the Debiased Spatial Whittle likelihood, to the
    method proposed by~\citet{fuentes2007approximate}. The true value of the
    range is fixed to 20 or 50. Despite an increased variance due to the
    complex missing data patterns, our method is still able to produce a
    useful estimate of the range parameter, in comparison to the estimates
    produced by the method proposed by~\citet{fuentes2007approximate}, which
    was not built to address such large and complex patterns of missing
    data.}
    \label{fig:frederik_grid1}
\end{figure}

\subsection{Application to the study of Venus' topography}
\label{sec:venus}

In this section we apply our Debiased Spatial Whittle method to the study of
Venus' topography. The motivation for modelling a planet's topography using
a parametric covariance model such as the Mat\'ern process is
multifaceted. For instance, we may expect that the combination of the slope
and range parameters will carry important information about the
geomorphological process or age of formation of the observed topography,
i.e., it is expected that those parameters will have an interpretable
physical meaning. The slope parameter can be related to the smoothness of
the topography, and the range parameter tells about the typical distance
over which two observed portions are uncorrelated.

\begin{table}
  \caption{\label{table:maternestimates}Estimates of the three parameters of a Mat\'ern
    process, see~\eqref{eq:materncovfunc}.
  }
  \centering
  \begin{tabular}{l|lll|lll|lll|lll}
    & \multicolumn{3}{c|}{\textbf{Patch~1}} & \multicolumn{3}{c|}{\textbf{Patch~2}} & \multicolumn{3}{c|}{\textbf{Patch~3}} & \multicolumn{3}{c}{\textbf{Patch~4}}\\ \hline
    Parameter: & $\sigma$     & $\nu$     & $\rho$     & $\sigma$     & $\nu$     & $\rho$     & $\sigma$     & $\nu$     & $\rho$     & $\sigma$     & $\nu$     & $\rho$\\ \hline
    \textbf{Debiased Spatial Whittle} & 1.2          & 0.5       & 17.7       & 1.2          & 0.7       & 6.8        & 2.1          & 0.5       & 36.5       & 1.5          & 0.6       & 15.0  \\
    \textbf{Standard Whittle} & 1.6          & 0.3       & 62.7       & 1.8          & 0.3       & 73.9       & 1.5          & 0.2       & 77.3       & 1.7          & 0.3       & 87.3  \\
    \textbf{Tapered Whittle}  & 2.0          & 0.4       & 52.0       & 1.7          & 0.2       & 80.6       & 1.2          & 0.2       & 88.1       & 1.9          & 0.4       & 83.7     
  \end{tabular}
\end{table}

Building on the work of \cite{Eggers2013}, we have selected four patches of
data (including that shown in Figure~\ref{fig:realizations} which
corresponds to Patch~3), each sampled regularly on a complete rectangular
grid. We compare three estimation procedures: the Debiased Spatial Whittle
method, the standard Whittle method, and the standard Whittle method with
tapering (again using a Hanning taper). Parameter estimates are reported in
Table~\ref{table:maternestimates}. We also compare the value of the exact
likelihood function taken at the estimated parameters for each estimation
method in Table~\ref{table:exactlkhvalues}. Specifically, if
$\hat{\btheta}_M$ and $\hat{\btheta}_W$ respectively denote the estimates
obtained via the Debiased Spatial Whittle and standard Whittle procedure, we
compare $l_E(\hat{\btheta}_M)$ and $l_E(\hat{\btheta}_W)$, with $l_E(\cdot)$
denoting the exact likelihood function (which is expensive to evaluate but
only needs to be done once for each analyzed method). The results in
Table~\ref{table:exactlkhvalues} show a much better fit of the model
corresponding to the parameters estimated via the Debiased Spatial Whittle
method, in comparison to the parameters estimated via either standard
Whittle or tapered Whittle. The parameter estimates in
Table~\ref{table:maternestimates} should be interpreted with care due to the
challenges inherent in joint estimation of all three parameters of a
Mat\'ern covariance function \cite[see,
  e.g.,][]{zhang2004inconsistent}. However in all four patches we observe
that the standard and tapered Whittle likelihood appear to overestimate the
range while underestimating the smoothness, consistent with results found
by~\cite{sykulski2019debiased} for oceanographic time series.

\begin{figure}[h!]
\centering
\includegraphics[width=0.70\columnwidth]{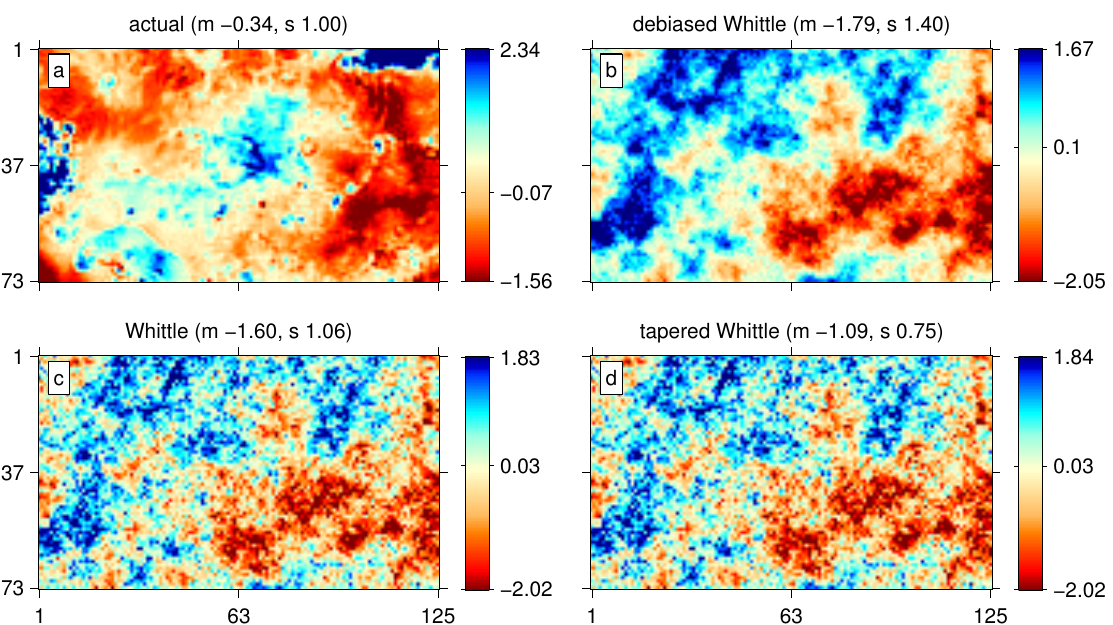}
\vspace{-3mm}
\caption{\label{fig:realizations} (a)~A realised random field from the
  topography of Venus; and simulated random fields from a Mat\'ern model
  with parameters estimated using (b)~Debiased Spatial Whittle estimation,
  (c)~standard Whittle estimation, and (d)~standard Whittle estimation using
  a Hanning taper. Simulated random fields were obtained using the same
  random seed to facilitate comparison.  Parameter values for each method
  are given in Table~\ref{table:maternestimates} (Patch~3) in
  Section~\ref{sec:venus}. Sample means (m) and standard deviations (s) are
  in the titles. Colour bars are marked at the 2.5th, 50th and 97.5th
  quantiles. Axis labels are in pixels.}
\end{figure}

Finally, Figure~\ref{fig:realizations} presents a
comparison of Patch~3 with three simulated samples, obtained using the
Mat\'ern model estimated using the Debiased Spatial Whittle, standard and
tapered Whittle methods, respectively. This analysis supports the conclusion
that the Debiased Spatial Whittle method is able to find more appropriate
parameter values for the model fit.

\begin{table}
\caption{\label{table:exactlkhvalues}Percentage of increase in the exact
  likelihood value at the estimated parameter values from
  Table~\ref{table:maternestimates} in comparison to the minimal value
  obtained among the three methods.}  \centering
\begin{tabular}{l|l|l|l|l}
  & \textbf{Patch~1} & \textbf{Patch~2} & \textbf{Patch~3} & \textbf{Patch~4} \\ \hline
\textbf{Debiased Spatial Whittle}  & 60.60             & 104.80          & 91.60             & 48.40         \\
\textbf{Standard Whittle} & 0                 & 16.10             & 0                 & 0             \\
\textbf{Tapered Whittle}  & 23.20             & 0                 & 53.90             & 25.20
\end{tabular}
\end{table}

\section{Discussion}
\label{sec:discussion}

In this paper we addressed the estimation of parametric covariance models
for Gaussian and non-Gaussian random fields using the discrete Fourier
transform. Key to understanding a random field is its spatial sampling; this
can range from a spatial point process, to regular sampling with an
irregular boundary, to observations missing at random on a grid, to a fully
sampled square regular grid. To maintain computational feasibility, this
paper addresses the analysis of a regularly sampled random field, with
potentially missing observations and an irregular (not cuboid) sampling
domain.

The Whittle likelihood uses the Fast Fourier Transform to achieve
computational efficiency. The approximation is based on results for Block
Toeplitz with Toeplitz Blocks
matrices~\citep{Tyrtyshnikov1998,kazeev2013multilevel}, on (growing-domain)
asymptotics, and on arguments that equate the Gaussian non-diagonal
quadratic form with another Gaussian, nearly diagonal, form. For time series
this argument is relatively straightforward, but is somewhat more complex
for spatial data in higher dimensions, where the bias becomes the dominant
term \citep{Guyon2008}, and the geometry of the sampling process leaves a
strong imprint.

The bias of the periodogram as an estimator of the spectral density (which
drives subsequent bias) decreases with rate
$\mathcal{O}\left(\n^{-1/d}\right)$~\citep{Guyon2008,dahlhaus1987edge} in
the ideal case of a fully-observed rectangular lattice in $d$~dimensions
that grows at the same rate along all directions.
\citet{dahlhaus1983spectral} proposed tapering to remedy this issue.  A more
general result by~\citet{Kent1996} shows that the approximation resulting
from replacing the exact likelihood with the Whittle likelihood in the case
of a full grid is driven by the size of the smallest side of the rectangular
lattice.  Tapering on its own cannot solve this issue.

To address bias in a general setting we proposed replacing the spectral
density by the true expectation of the periodogram.  From the notion of
Significant Correlation Contribution, we can understand the technical
underpinning of this bias removal process and draw a general framework of
sampling schemes and model families for which our estimator is statistically
efficient.

In addition, our Debiased Whittle procedure also explicitly
  accounts for aliasing in the computation of the expected periodogram, thus
  avoiding computationally--expensive wrapping operations to fold in higher
  unobserved frequencies into the likelihood. As would be expected, in
  simulations we found the bias correction from aliasing to be most
  important when the rate of decay in the spectral density in frequency is
  slow (e.g. a Mat\'ern process with small slope parameter). In contrast, we
  found that accounting for finite sampling and boundary effects to be most
  important when the rate of decay is high and the spectrum therefore has a
  large dynamic range (e.g. a Mat\'ern process with large slope
  parameter). Overall, our explicit handling for the effects of missing data
  provided further improvements for all processes studied, regardless of the
  specific form of the spectral density.

For random fields with missing observations, \citet{fuentes2007approximate}
suggested to replace the missing points of a rectangular lattice with zeros,
as we do in~\eqref{eq:definitionPeriodogram}, and correcting uniformly
across frequencies for the amplitude of the periodogram, based on the ratio
of the number of observed points to the total number of points in the
grid. This only partly corrects for the bias of the periodogram that results
from any non-trivial shape of the data, as frequencies are likely to not be
affected uniformly by the sampling scheme; in contrast to our estimation
procedure which directly encodes the observed data, and the observed
missingness pattern. Under relatively weak assumptions, and
  through the notion of Significant Correlation Contribution, we establish
  consistency and asymptotic normality in both Gaussian and non-Gaussian
  settings.

When studying non-Gaussian observations one can take two approaches; either
limiting the effects of the non-Gaussianity on the variance of the
estimator~\citep{giraitis1999whittle,sykulski2019debiased}, or even
permitting Whittle-type estimation based on higher order spectral moments,
see e.g.~\cite{anh2007minimum}. If infill asymptotics are
considered~\citep{bandyopadhyay2009asymptotic}, then the limiting
distribution of the Fourier transform need not be Gaussian. Note that the
aforementioned authors assumed completely random sampling of the fields,
which we do not, as such sampling leads to a ``nugget-effect'' at frequency
zero and beyond.


To treat more general multivariate processes, we defined a multivariate
sampling mechanism that is initially on the same grid, but where the
missingness pattern may be different between processes. To be able to arrive
at consistent estimators, we again use a version of the concept of
Significant Correlation Contribution, but now adapted to the multivariate
nature of the data. Under this assumption, which ensures we gain more
information as our sampling scheme diverges in cardinality, we do achieve
estimation consistency.

\citet{stroud2017bayesian} have proposed an approach that does not require
approximating the multi-level Toeplitz covariance matrix of the rectangular
lattice sample by a multi-level circulant matrix. Instead, their method
finds a larger lattice, termed an embedding, such that there exists a Block
Circulant with Circulant Blocks (BCCB) matrix that is the covariance matrix
of a Gaussian process on this extended lattice, and such that the covariance
matrix of the real process is a submatrix of this extended matrix. One can
then simulate efficiently the missing data on the extended lattice, and
estimate the parameters of the models. This process can be iterated until a
convergence criterion is met. This elegant method still suffers from
computational issues, as the size of the embedding might be quite large. A
solution suggested by~\citet{guinness2017circulant} is to use a circulant
approximation of the covariance on a smaller rectangular lattice. In that
case, the method is no longer exact, but~\citet{guinness2017circulant}
showed via simulations that using small embeddings can in some cases provide
a good compromise between statistical and computational efficiency.

In contrast, in this paper we revisited the root cause of why the
approximation of the likelihood may deteriorate, while continuing to require
that any proposed bias elimination result in a computationally competitive
method. Our method of bias elimination is ``built in'' by fitting the
periodogram to its expectation $\overline{I}_{\bn}(\bk;\btheta)$. This is in
contrast to estimating the bias and removing it, which typically increases
variance, and might lead to negative spectral density estimates.

We have thus proposed a bias elimination method that is data-driven, fully
automated, and computationally practical for a number of realistic spatial
sampling methods, in any dimension. Our methods are robust to huge volumes
of missing data, as backed up by our theoretical analysis, and evidenced by
our practical simulation examples. As a result, our methodology is not only
of great benefit for improved parameter estimation directly, but also has
knock-on benefits in, for example, the problem of prediction. Here a huge
number of methods exist and there is some debate as to which are most
practically useful~\citep{heaton2019case}. The broader point is that many of
these methods are based on Mat\'ern covariance kernels, and therefore our
methods, which we have shown greatly improve Mat\'ern parameter estimation,
can be naturally incorporated to improve the performance of such spatial
methods for prediction. Quantifying this benefit over a range of settings is
a natural line of further investigation.

Within parameter estimation, there are a number of large outstanding
challenges which are nontrivial extensions and merit further investigation
as stand-alone pieces of work: [1]~extensions to fully irregularly sampled
process on non-uniform grids; and [2]~extensions to multivariate processes
with complex sampling patterns. In each case the impact on the Fourier
Transform and the expected periodogram need to be carefully handled to
properly account for the bias of naively using basic Whittle-type
approximations.  We do, however, expect that large improvements are possible
both in terms of bias reduction (vs standard Whittle methods where edge
effect contamination will increase), and in terms of computational speed (vs
exact likelihood and other pseudo-likelihoods which will become increasingly
intractable as assumptions are relaxed).

\section{Acknowledgements}
\label{sec:acknowledgements}
The authors would like to thank the European Research Council under Grant
CoG 2015-682172NETS, within the Seventh European Union Framework Program.
We are extremely grateful to the anonymous referees and Associate Editor
for the Journal of the Royal Statistical Society (Series B) for
their constructive comments on the paper, which have ended up significantly
improving this manuscript. \\The code used in Section~\ref{sec:sim} is
available from https://github.com/arthurBarthe/JRSSB2022.
A python package called debiased-spatial-whittle is available
from Pypi.

\appendix

\makeatletter
\renewcommand{\@seccntformat}[1]{APPENDIX~{\csname the#1\endcsname}.\hspace*{1em}}
\makeatother

\bibliographystyle{rss}
\setlength{\bibsep}{3.2pt}
\bibliography{guillaumin}
\end{document}


\section*{Aliased Whittle likelihood comparison}
{\color{black} In this section, we provide simulation results in the same manner as
  those of Section~\ref{sec:sim1} in the main document, except for the fact
  that here both the Whittle and tapered Whittle estimator use a truncated
  approximation of the aliased spectral density of the sampled process, see 
	Figure~\ref{fig:aliased}. We
  limited the approximation to include the contribution of frequencies from
  $[-3\pi, 3\pi]^2$ to keep computational cost reasonable.
	The fact that we use a fixed approximation to the
  aliased spectral density explains why, despite largely reducing the bias
  for the Whittle and tapered Whittle, in comparison to the version in the
  main document, the efficiency of both the Whittle and tapered Whittle
  estimators appears to saturate for large grid sizes.
\begin{figure}[h!]
	\centering
		\includegraphics[width=\textwidth]{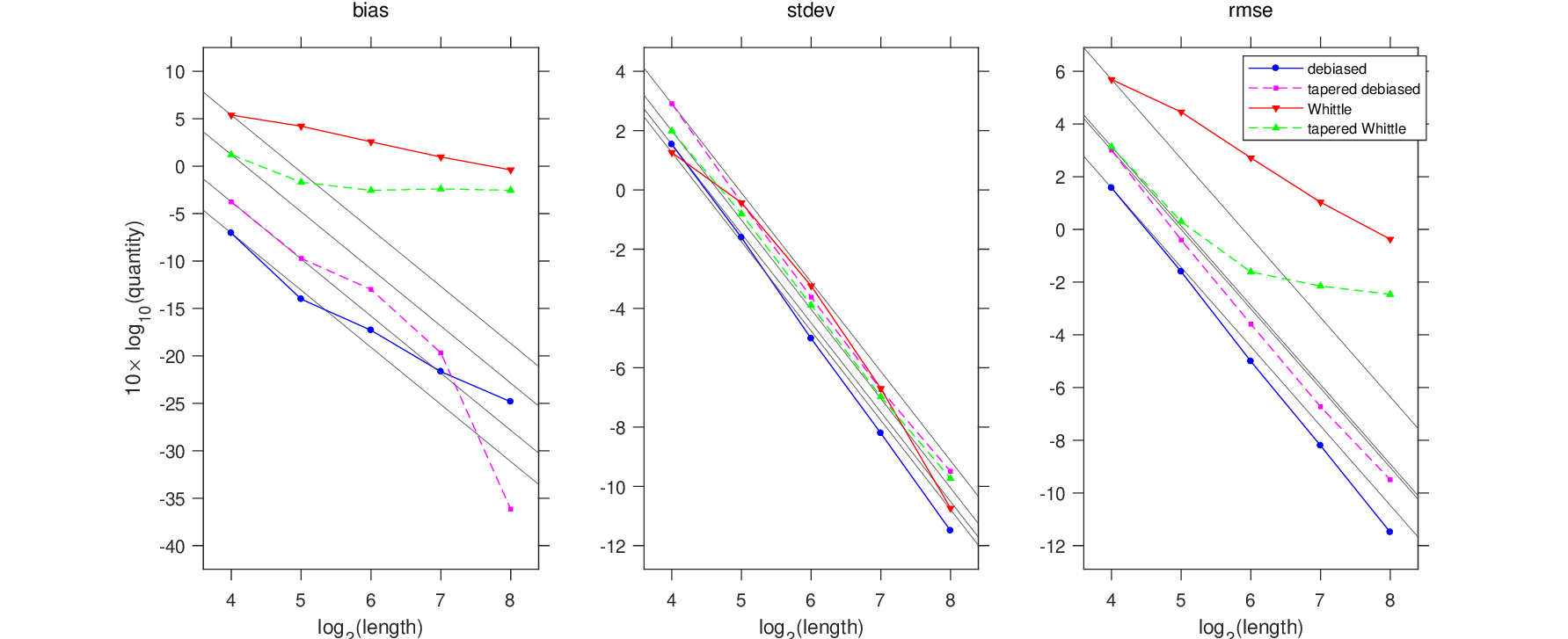}
	\caption{\label{fig:aliased}Bias, standard deviation, and root
  mean-squared error of estimates of the range parameter $\rho = 10$
  of a Mat\'ern process~\eqref{eq:materncovfunc} with
  $\nu=1/2,\sigma^2=1$. Compared to Figure~\ref{fig:rates1} in the
	main document, the Whittle and tapered Whittle estimation methods use an 
	approximation to the aliased spectral density function, by incorporating
	contributions from frequencies within the square domain $[-3\pi, 3\pi]^2$.}
\end{figure}
}
\newpage
\section*{Estimation for a discrete spatial model}
{\color{black} In this section we apply the Spatial Debiased Whittle to the
  estimation of a discrete parametric model. In comparison to continuous
  models, the estimation of the parameters of a discrete spatial model
	is not hindered by aliasing.  The model we consider is defined in the
  frequency domain according to,
\begin{equation}
\label{eq:discrete_spatial}
	f(\omega_1, \omega_2) = \left\{\begin{array}{lcr} \exp\left\{-\theta (|\omega_1| + |\omega_2|)\right\} & {\mathrm{if}} & \bm{\omega}\in(-\pi,\pi)^2\\
	0 & {\mathrm{o/w}} &
	\end{array}\right.,
\end{equation}
where $\theta\geq 0$.
The covariance function of this model is easily obtained analytically, 
and takes the form of,
\begin{equation}
	c_X(u_1, u_2) = 4 \mathcal{R}
	\left\{ 
	\frac{ 1 }{ i u_1 - \theta }
	\left(
	\exp\left[(i u_ 1 - \theta) \pi \right] - 1
	\right)
	\right\}
	\mathcal{R}
	\left\{ 
	\frac{ 1 }{ i u_2 - \theta }
	\left(
	\exp\left[(i u_ 2 - \theta) \pi \right] - 1
	\right)
	\right\},
\end{equation}
which is separable in \(u_1\) and \(u_2\), since the spectrum is
separable in \(\omega_1\) and \(\omega_2\). We display a simulated realization
in Figure~\ref{fig:sim_discrete}.
\begin{figure}
	\centering
		\includegraphics[width=0.8\textwidth]{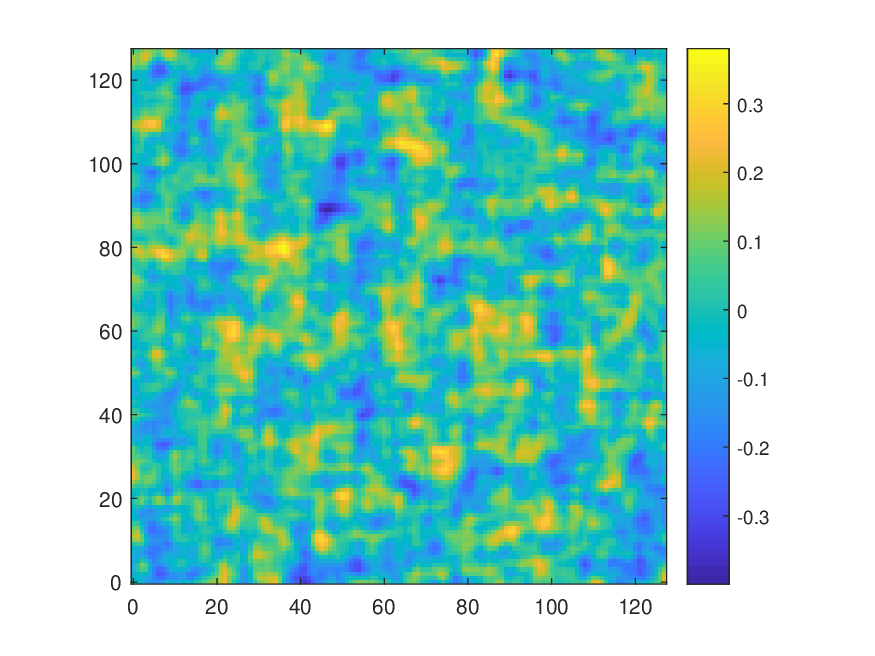}
		\caption{
		\label{fig:sim_discrete}
		Simulated sample from the discrete model defined by
		equation~\eqref{eq:discrete_spatial}, with $\theta=3$.}
\end{figure}

In our experiments we set $\theta = 3$ and initialize
estimates to $0.2$ for all estimation methods.  In a first experiment we
consider estimation on growing squares, see Figure~\ref{fig:fig_supp1}.
The tapered Whittle method performs very
well for this discrete model for large sizes, but suffers from bias for
smaller grid sizes. The tapered version of the Spatial Debiased Whittle
performs better than its non-tapered counter-part, due to remaining boundary
effects. However, it is notable that even without tapering, the Spatial
Debiased Whittle appears to perform at the expected square root $n$ rate.

\begin{figure}
\centering 
\includegraphics[width=\textwidth]{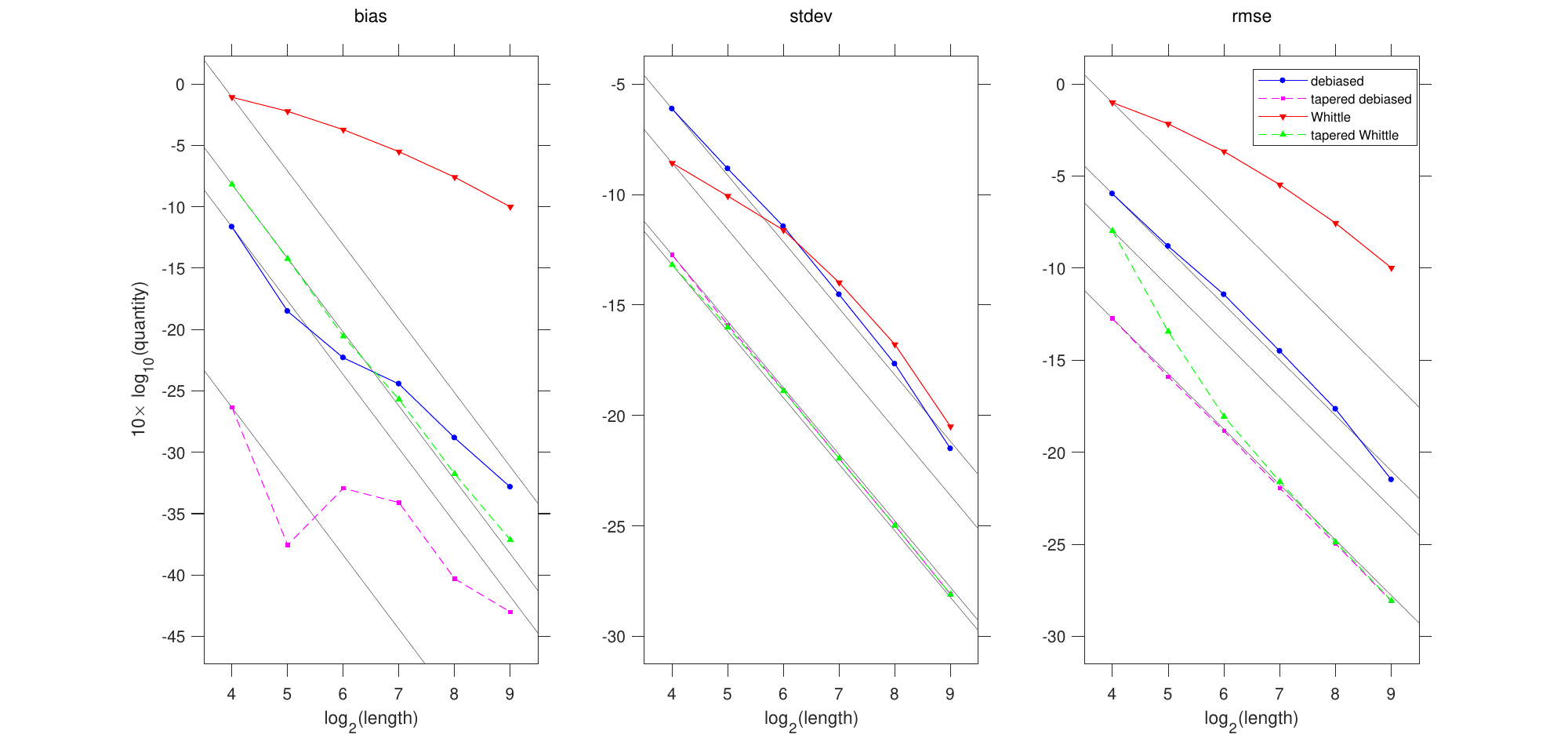}
\caption{Bias, standard deviation, and RMSE of estimates of $\theta=3$ for
  rectangular grids of size $N\times N$ where N increases in powers of $2$
  which are indicated by the values on the $x$-axis. All estimators are
  initialized to the value $0.2$.}
\label{fig:fig_supp1}
\end{figure}

In a second experiment, we demonstrate the ability of the Spatial Debiased
Whittle to perform well for rectangular but not square domains, see
Figure~\ref{fig:fig_discrete2}. We fix one
side of the domain to $16$ units, while the other side grows in powers of 4,
so that the sample sizes increase in the same way as in the previous
experiment.
In this configuration, the asymptotic bias of the tapered Whittle is
non-zero---this is because the expected periodogram never converges to the
spectral density, due to the bounded sample size along one dimension. In
contrast, the observed rate of the Spatial Debiased Whittle likelihood 
 remains of the order of square root the sample size.}

\begin{figure}[h!]
\centering
\includegraphics[width=\textwidth]{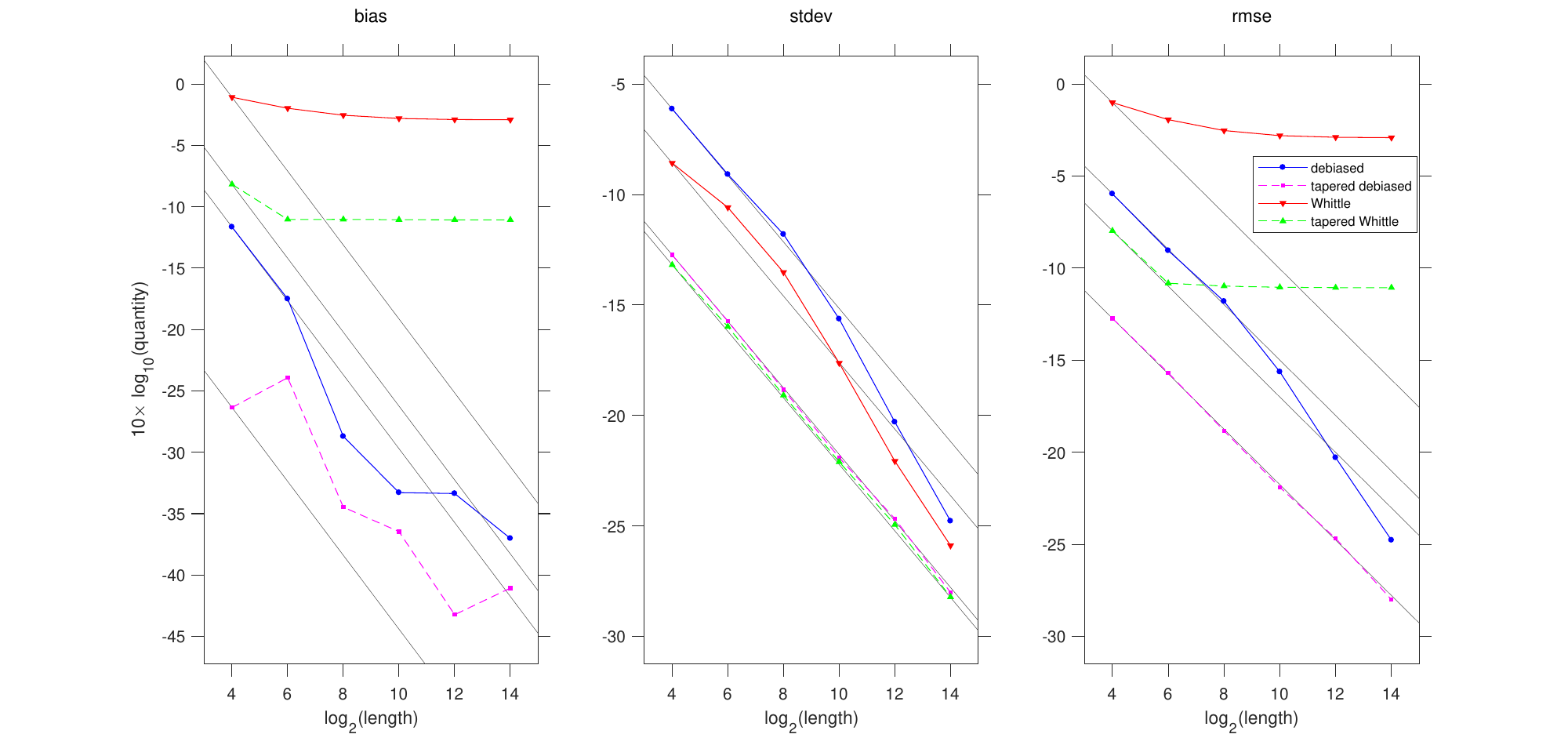}
\caption{Bias, standard deviation, and RMSE of estimates of $\theta=3$ for
  rectangular grids of size $16\times N$ where N increases in powers of $4$
  and is indicated on the $x$-axis. We observe how even for
  a simple discrete model, tapering has its limits and cannot fully account
  for the shape of the observational domain.}
\label{fig:fig_discrete2}
\end{figure}

{\color{black} \section*{Example of a violation of SCC}
\begin{figure}
\includegraphics[width=0.9\textwidth]{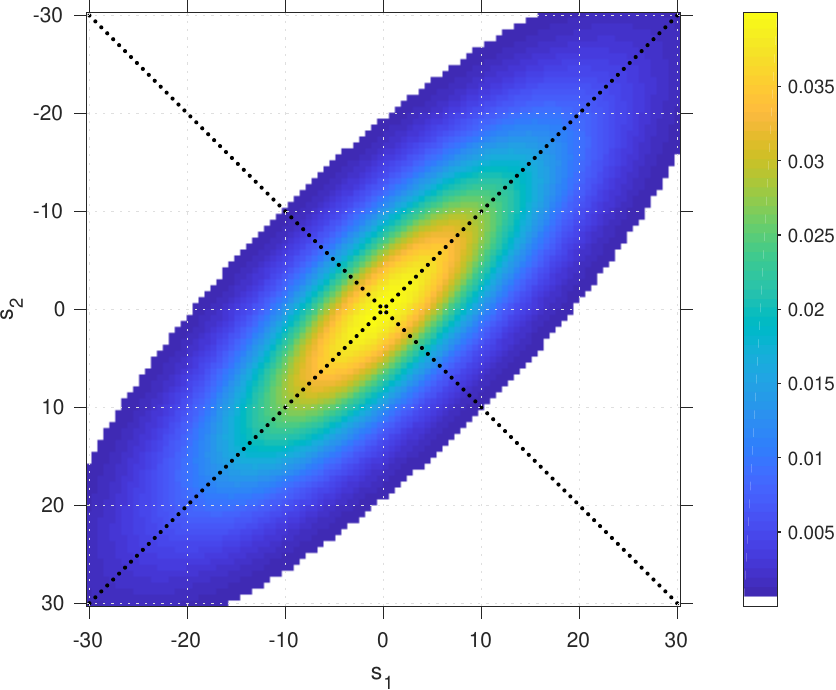}
\caption{This figure illustrates the geometric characteristics of
  Significant Correlation Contribution in 2D. We plot $c_X(\bu;\bm{\theta})$
  as a contour plot and superimpose $\mathbf{s}_{1,j}$ and
  $\mathbf{s}_{2,j}$, even if $\mathbf{s}_{1,j}$ and $\mathbf{s}_{2,j}$ are
  used to sample $X^{(r)}(\mathbf{s})$ to give $X^{(r)}_{\mathbf{s}}$ rather
  than sampling the covariance kernel.\label{fig:SCC}}
\end{figure}
We start by assuming that the autocovariance is
\begin{equation}
  \label{correlationfn}
  c_X(\bu;\bm{\theta})=0.04\times \exp\left\{-\frac{\theta_1}{2} (u_1+u_2)^2\right\}\exp\left\{-\frac{\theta_2}{2} (-u_1+u_2)^2\right\},
\end{equation}
and then we sample the process according to 
\[
\mathbf{s}_{1,j}=\begin{pmatrix} j& j\end{pmatrix},\quad
\mathbf{s}_{2,j}=\begin{pmatrix} -j& j\end{pmatrix}.\] It is fairly
straightforward, with either these line samplings, to convince oneself that
with one sampling we only learn about $\theta_1$ and with the other only
about $\theta_2$ as illustrated by Figure~\ref{fig:SCC}. Note that $c_X()$
is a valid auto--covariance, as the Fourier transform of Gaussians is
Gaussian and thus non--negative. Sampling the process $\mathbf{X}(\bs)$ with
$\mathbf{s}_{1,j}$ means that sums and differences of the sampling pattern
lives in the same linear subspace of ${\mathbb{R}}^2$. This means that we
only learn about one of the two functions in \eqref{correlationfn}.}

\section*{Proofs of lemmata, propositions and theorems}\label{sec:proof}
\subsection*{Proof of lemma 2}
\begin{proof}
    Let $\bkk=(k_0, \ldots, k_{d-1})\in\prod_{j=0}^{d-1}\left\{0,\ldots ,n_j-1\right\}$. We remind
    the reader that for $\bu\in\Z^d$, $\overline{c}_{\bn}(\bu) = c_{g,
      \bn}(\bu) c_X(\bu)$, where,
    \begin{equation*}
        c_{g,\bn}(\bu) = \frac{\sum_{\bs\in\Z^d}{g_\bs g_{\bs+\bu}}}
        {\sum_{\bs\in\Z^d}{g_\bs^2}}.
    \end{equation*}
    Using the fact that for any $\bq\in\left\{0,1\right\}^d$, 
    \begin{equation*}
        \overline{c}_{\bn}(\bu - \bq \circ \bn) 
        \exp\left(-i\sum_{j=0}^{d-1}{\frac{2k_j\pi}{n_j}(u_j-q_jn_j)}\right)
        = \overline{c}_{\bn}(\bu - \bq \circ \bn) 
        \exp\left(-i\sum_{j=0}^{d-1}{\frac{2k_j\pi}{n_j}u_j}\right),
    \end{equation*}
    where $\circ$ denotes the Hadamard product, i.e. element-wise
    multiplication,  and since $\overline{c}_{\bn}(\bu - \bq \circ \bn)$ is
    zero if any component of $\bu$ is zero and the corresponding component
    of $\bq$ is one (due to the definition of $c_{g, \bn_k}$), we obtain the
    proposed formula.  Indeed, any
    $\bu\in\prod_{j=0}^{d-1}\left\{-(n_j-1),\ldots ,n_j-1\right\}$ that
    contribute to the LHS of the proposed formula can be written as $\bu =
    \bu^+ - \bq \circ \bn$ for some unique
    $\bu^+\in\prod_{j=0}^{d-1}\left\{0,\ldots ,n_j-1\right\}$.  The extra
    terms in the RHS of the proposed formula take value zero according to
    the previous argument.
\end{proof}

\subsection*{Proof of Lemma~\ref{lemma:scc1}}

\begin{proof}
This comes as a consequence of the two following observations. First, two
continuous functions on $\T^d$ are equal if and only if their Fourier
coefficients are equal, see for instance~\citet{fourieranalysis}. Second,
for a sequence of full rectangular grids indexed by $k\in\N$
 that grow unbounded in all
directions, for any $\bu\in\Z^d$, we have $c_{g,n_k}(\bu)\rightarrow 1$ as
$k$ goes to infinity, see equation~\eqref{eq:cgcomplete} in the main body.
\end{proof}

\subsection*{Proof of Lemma~\ref{lemma:scc2}}

\begin{proof}
The argument is very similar to that of~Lemma~\ref{lemma:scc1}, with the
difference that for any $\bu\in\Z^d$ we have that $c_{g,n_k}(\bu)$ converges
to a positive constant (which might be strictly smaller than one) as $k$
goes to infinity.
\end{proof}

\subsection*{Proof of Theorem~\ref{th:consistency}}

\begin{proof}
We will show in Lemma~\ref{lemma:cvglkh} that $l_{\bn_k}(\cdot)$, 
as a random function, converges
uniformly to $\widetilde{l}_{\bn_k}(\cdot)$ in probability, i.e. their
difference converges uniformly to the zero function in probability. Hence
the difference $l_{\bn_k}(\hat{\btheta}_k) -
\widetilde{l}_{\bn_k}(\hat{\btheta}_k)$ converges to zero in
probability. Additionally, $l_{\bn_k}(\hat{\btheta}_k) -
\widetilde{l}_{\bn_k}(\btheta)$ converges to zero in probability. Indeed, by
definition, the parameter vector $\hat{\btheta}_k$ minimizes the function
$l_{\bn_k}(\cdot)$ over the parameter set $\Theta$, and according to
Lemma~\ref{lemma:minofexptedlkh}, the parameter vector $\btheta$ minimizes
the function $\widetilde{l}_{\bn_k}(\cdot)$. We therefore have, by the
triangle inequality,
\begin{equation*}
\left|\widetilde{l}_{\bn_k}(\btheta) -
\widetilde{l}_{\bn_k}(\hat{\btheta}_k)\right| \leq \left|
l_{\bn_k}(\hat{\btheta}_k) -
\widetilde{l}_{\bn_k}(\hat{\btheta}_k)\right| + \left |
l_{\bn_k}(\hat{\btheta}_k) - \widetilde{l}_{\bn_k}(\btheta) \right|,  
\end{equation*}
which converges to zero in probability. Making use of
Lemma~\ref{lemma:cvgtotheta} we conclude that $\hat{\btheta}_k$
converges in probability to $\btheta$.
\end{proof}

\subsection*{Proof of Proposition~\ref{prop=varianceLinearCombinations}}

\begin{proof}
We write the proof for the case $\Omega_{\bn}=\Omega_{\bn}^{(1)}$,
while the case $\Omega_{\bn}=\Omega_{\bn}^{(2)}$ is the same, up
to a constant factor.
We first write the proof of the proposition for the univariate Gaussian
case.  Let $a_\text{max}>0$ be a finite constant such that
$\left|a_\bn(\bk)\right|\leq a_\text{max}, \forall\bk\in\T,\forall
\bn\in\N^d$.  We first make the observation that the sum of the periodogram
values at the Fourier frequencies is the squared $L_2$ norm of the sample,
up to some multiplicative constant, since the Discrete Fourier Transform is
orthonormal, i.e.  \revision{
\begin{equation*}
	\sum_{\bk\in\Omega_{\bn}}{I_\bn(\bk)} = 
	\frac{\n}{(2\pi)^{d}\sum_{\bs\in\Z^d}{g_\bs^2}}
	\sum_{\bs\in\Z^d}{g_\bs^2 X_\bs^2}.
\end{equation*}
}
Therefore,
\revision{
\begin{align}
\label{eq:3006200832}
\var\left\{\n^{-1}\sum_{\bk\in\Omega_{\bn}}a_\bn(\bk)I_\bn(\bk)\right\}
&\nonumber\leq a_\text{max}^2 \var\left\{\n^{-1}\sum_{\bk\in\Omega_{\bn}}I_\bn(\bk)\right\} \\
& = 
\frac{a_\text{max}^2}{(2\pi)^{2d}\left(\sum{g_\bs^2}\right)^2}
\var\left\{
	\sum_{\bs\in\Z^d}{g_\bs^2 X_\bs^2}
\right\}.
\end{align}
} Note that the first inequality is valid since the covariance of the
periodogram at two Fourier frequencies $\bk, \bk'$ is non-negative for a
Gaussian process (as a consequence of Isserlis' theorem). Indeed, letting
\begin{equation*}
J(\bk) = \frac{(2\pi)^{-d/2}}{\sqrt{\sum_{\bs\in\Jx}{g_{\bs}^2}}}
\sum_{\bs\in\Jx}{g_\bs X_\bs
  \exp(-i\bk\cdot\bs}),
\end{equation*} we have, by Isserlis' theorem,
\begin{align*}
\cov\left\{I(\bk), I(\bk')\right\} 
&=\E\left\{J(\bk) J^*(\bk) J(\bk') J^*(\bk')\right\}-\E\left\{I(\bk)\right\} \E\left\{ I(\bk')\right\}\\\
&=\E\left\{J(\bk) J(\bk') \right\}\E\left\{J^*(\bk) J^*(\bk')\right\}+
  \E\left\{J(\bk) J^*(\bk') \right\}\E\left\{J^*(\bk)J(\bk') \right\}\\
&=\left| \E\left\{ J(\bk) J(\bk') \right\} \right|^2+
  \left|\E\left\{J^*(\bk) J(\bk') \right\} \right|^2,
\end{align*}
which is non-negative as the sum of two squares.  We study the term
\revision{$\var\left\{ \sum_{\bs\in\Z^d}{g_\bs^2 X_\bs^2} \right\}$}. We
have, again using Isserlis' theorem for Gaussian random variables,
\revision{
\begin{align}
\label{eq:3006200833}
\var\left\{
	\sum_{\bs\in\Z^d}{g_\bs^2 X_\bs^2}
\right\} &= \nonumber\E\left(\sum_{\bs\in\Jx}{g_\bs^2X_\bs
  ^2}\right)^2 - \left(\E\sum_{\bs\in\Jx}{g_\bs^2X_\bs ^2}\right)^2\\ 
& =\nonumber\sum_{\bs\in\Jx}{\sum_{\bs'\in\Jx}{\E\left\{g_\bs^2g_{\bs'}^2X_\bs^2X_{\bs'}^2\right\}
    - \E\left\{g_\bs^2X_\bs^2\right\}\E\left\{g_{\bs'}^2X_{\bs'}^2\right\}}}\\ 
	& = 2\sum_{\bs\in\Jx}{\sum_{\bs'\in\Jx}{
g_\bs^2 g_{\bs'}^2\left(\E\left\{X_\bs X_{\bs'}\right\}\right)^2}}
.
\end{align}
}
We now obtain, combining equations~\eqref{eq:3006200832}
and~\eqref{eq:3006200833},
\begin{align*}
\var\left\{|\bn_k|^{-1}\sum_{\bk\in\Omega_{\bn}}a_\bn(\bk)I_\bn(\bk)\right\} &\leq 
\frac{2a_\text{max}^2}{(2\pi)^{2d}\left(\sum{g_\bs^2}\right)^2}
\sum_{\bs\in\Jx}{\sum_{\bs'\in\Jx}{g_\bs^2 g_{\bs'}^2
    \left(\E\left\{X_\bs X_{\bs'}\right\}\right)^2}}\\
&\leq
\frac{2a_\text{max}^2}{(2\pi)^{2d}\left(\sum{g_\bs^2}\right)^2}
\sum_{\bu\in\Z^d}{c_X(\bu)^2\sum_{\bs\in\Jx}{g_\bs^2 g_{\bs+\bu}^2}
}\\
&\leq
\frac{2a_\text{max}^2}{(2\pi)^{2d}\left(\sum{g_\bs^2}\right)^2}
\sum_{\bu\in\Z^d}{c_X(\bu)^2\sum_{\bs\in\Jx}{g_\bs g_{\bs+\bu}}
}\\
& \leq
\frac{2a_\text{max}^2}{(2\pi)^{2d}\sum{g_\bs^2}}
\sum_{\bu\in\Z^d}{c_X(\bu)^2 c_g(\bu)},
\end{align*}
where we have made use of the assumption that $0\leq g_\bs\leq 1, \forall
\bs\in\Z^d$.
Therefore, we obtain the stated result, i.e.,
\begin{equation*}
\var\left\{|\bn_k|^{-1}\sum_{\bk\in\Omega_{\bn_k}} a_k(\bk)I_{\bn_k}(\bk)
\right\} = \mathcal{O}\left\{
\frac{\sum_{\bu\in\Z^d}{c_X(\bu)^2 c_{g,k}(\bu)}}
{\sum{g_\bs^2}}
\right\},
\end{equation*}
where the big $O$ is with respect to $k$ going to infinity. This concludes
the proof for the univariate Gaussian case.\qed

\subsection*{Proof of Corollary~\ref{cor=varianceLinearCombinations}}

We now treat the extension to the univariate but non-Gaussian case. This
requires defining the fourth-order cumulant according to,
\begin{align}
\nonumber
    \E\{X_{\mathbf{s}_1}X_{\mathbf{s}_2}X_{\mathbf{s}_3}X_{\mathbf{s}_4}\}&=c_4(\mathbf{s}_2-\mathbf{s}_1,\mathbf{s}_3-\mathbf{s}_1,\mathbf{s}_4-\mathbf{s}_1)+c_X(\mathbf{s}_3-\mathbf{s}_1)c_X(\mathbf{s}_4-\mathbf{s}_2)\\
    \nonumber
    &+
    c_X(\mathbf{s}_4-\mathbf{s}_1)c_X(\mathbf{s}_3-\mathbf{s}_2)+c_X(\mathbf{s}_2-\mathbf{s}_1)c_X(\mathbf{s}_4-\mathbf{s}_3).
\end{align}
Note that in the Gaussian case this equality holds with
$C_4(\mathbf{s}_2-\mathbf{s}_1,\mathbf{s}_3-\mathbf{s}_1,\mathbf{s}_4-\mathbf{s}_1)
= 0$ (trivially) always. With this definition, we can study the covariance
of the periodogram at two Fourier frequencies as follows,
\begin{align*}
\nonumber
\cov\{ I_{\mathbf{n}}(\bm{\omega}_1), I_{\mathbf{n}}(\bm{\omega}_2) \}
&=
\frac{(2\pi)^{-2d}}{(\sum_{\mathbf{s}}g^2_{\mathbf{s}})^2}
\cov\left\{
   \sum_{\bs_1, \bs_2} g_{\mathbf{s}_1}
   g_{\mathbf{s}_2}X_{\mathbf{s}_1}X_{\mathbf{s}_2}e^{-i\bm{\omega_1}^T(\mathbf{s}_1-\mathbf{s}_2)},
   \sum_{\bs_3, \bs_4} g_{\mathbf{s}_3}
   g_{\mathbf{s}_4}X_{\mathbf{s}_3}X_{\mathbf{s}_4} e^{-i\bm{\omega_2}^T(\mathbf{s}_3-\mathbf{s}_4)}\right\}\\
   &= \frac{(2\pi)^{-2d}}{(\sum_{\mathbf{s}}g^2_{\mathbf{s}})^2}\sum_{\bs_1, \bs_2, \bs_3, \bs_4} g_{\mathbf{s}_1}
   g_{\mathbf{s}_2}g_{\mathbf{s}_3}
   g_{\mathbf{s}_4}\cov\{X_{\mathbf{s}_1}X_{\mathbf{s}_2},X_{\mathbf{s}_3}X_{\mathbf{s}_4}\}
   e^{-i\bm{\omega_1}^T(\mathbf{s}_1-\mathbf{s}_2)}
   e^{-i\bm{\omega_2}^T(\mathbf{s}_3-\mathbf{s}_4)}.
\end{align*}
We write $C_k=\frac{(2\pi)^{-d}}{\sum_{\mathbf{s}}g^2_{\mathbf{s}}}$,
where the dependence on $k$ comes from the implicit dependence of 
$\{g_\bs\}$ on $k$.
We note that we can determine directly that
\begin{align}
\label{eq:cum4}
\nonumber
\cov\{X_{\mathbf{s}_1}X_{\mathbf{s}_2},X_{\mathbf{s}_3}X_{\mathbf{s}_4} \}&=
\E\{X_{\mathbf{s}_1}X_{\mathbf{s}_2}X_{\mathbf{s}_3}X_{\mathbf{s}_4}\}-\E\{X_{\mathbf{s}_1}X_{\mathbf{s}_2}\}\E\{X_{\mathbf{s}_3}X_{\mathbf{s}_4}\}\\
\nonumber
&=c_4(\mathbf{s}_2-\mathbf{s}_1,\mathbf{s}_3-\mathbf{s}_1,\mathbf{s}_4-\mathbf{s}_1)+c_X(\mathbf{s}_3-\mathbf{s}_1)c_X(\mathbf{s}_4-\mathbf{s}_2)\\
&+c_X(\mathbf{s}_4-\mathbf{s}_1)c_X(\mathbf{s}_3-\mathbf{s}_2).
\end{align}
We additionally define,
\begin{equation*}
\mathcal{G}(\bs_1, \bs_2, \bs_3) = c_X(\mathbf{s}_3-\mathbf{s}_1)c_X(\mathbf{s}_4-\mathbf{s}_2)
+c_X(\mathbf{s}_4-\mathbf{s}_1)c_X(\mathbf{s}_3-\mathbf{s}_2),
\end{equation*}
where the choice of the letter $\mathcal{G}$ comes from the fact that in the
Gaussian case
$\cov\{X_{\mathbf{s}_1}X_{\mathbf{s}_2},X_{\mathbf{s}_3}X_{\mathbf{s}_4} \}$
simplifies to this quantity.  Now summing over $2$-combinations of Fourier
frequencies, we can apply the triangular inequality,
{\tiny \begin{align*}
\nonumber
	&\left|
		\sum_{\bm{\omega_1}, \bm{\omega_2}}
		{
			a_{\bm{\omega_1}} a_{\bm{\omega_2}}
			\cov\{ I_{\mathbf{n}}(\bm{\omega}_1), I_{\mathbf{n}}(\bm{\omega}_2) \}
		}
	\right|
	=\\
	\nonumber
		&\left|
			\sum_{\bm{\omega_1}, \bm{\omega_2}}
				a_{\bm{\omega_1}} a_{\bm{\omega_2}}
				C_k^2\sum_{\mathbf{s}_1, \bs_2, \bs_3, \bs_4} 
		  			 g_{\mathbf{s}_1}g_{\mathbf{s}_2}g_{\mathbf{s}_3}g_{\mathbf{s}_4}
		  			 \left\{
		  			 c_4(\mathbf{s}_2-\mathbf{s}_1,\mathbf{s}_3-\mathbf{s}_1,\mathbf{s}_4-\mathbf{s}_1)
		  			 + \mathcal{G}(\bs_1, \bs_2, \bs_3) 
		  			 \right\}
		  			 e^{-i\bm{\omega_1}^T(\mathbf{s}_1-\mathbf{s}_2)}
   					e^{-i\bm{\omega_2}^T(\mathbf{s}_3-\mathbf{s}_4)}
		\right|\\
		\nonumber
	&\leq
	\left|
		\sum_{\bm{\omega_1}, \bm{\omega_2}}
		{
			a_{\bm{\omega_1}} a_{\bm{\omega_2}}
			C_k^2\sum_{\bs_1, \bs_2, \bs_3, \bs_4}
	  			 g_{\mathbf{s}_1}g_{\mathbf{s}_2}g_{\mathbf{s}_3}g_{\mathbf{s}_4}
	  			 c_4(\mathbf{s}_2-\mathbf{s}_1,\mathbf{s}_3-\mathbf{s}_1,\mathbf{s}_4-\mathbf{s}_1)
	  			 e^{-i\bm{\omega_1}^T(\mathbf{s}_1-\mathbf{s}_2)}
   				e^{-i\bm{\omega_2}^T(\mathbf{s}_3-\mathbf{s}_4)}
		}
	\right|\\ 
	&+
	\left|
			\sum_{\bm{\omega_1}, \bm{\omega_2}}
				a_{\bm{\omega_1}} a_{\bm{\omega_2}}
				C_k^2\sum_{\bs_1, \bs_2, \bs_3, \bs_4}
		  			 g_{\mathbf{s}_1}g_{\mathbf{s}_2}g_{\mathbf{s}_3}g_{\mathbf{s}_4}
		  			 \mathcal{G}(\bs_1, \bs_2, \bs_3) 
		  			 e^{-i\bm{\omega_1}^T(\mathbf{s}_1-\mathbf{s}_2)}
   					e^{-i\bm{\omega_2}^T(\mathbf{s}_3-\mathbf{s}_4)}
	\right|
\end{align*}}
The second term in the sum has already been studied in
the proof of Proposition~\ref{prop=varianceLinearCombinations} 
where we assumed Gaussianity. As
for the first term, again using the triangular inequality,
we may deduce that
{\small \begin{align*}
\nonumber
&\left|
		\sum_{\bm{\omega_1}, \bm{\omega_2}}
			a_{\bm{\omega_1}} a_{\bm{\omega_2}}
			C_k^2\sum_{\mathbf{s}_1} \sum_{\mathbf{s}_2}
  			 \sum_{\mathbf{s}_3} \sum_{\mathbf{s}_4}
	  			 g_{\mathbf{s}_1}g_{\mathbf{s}_2}g_{\mathbf{s}_3}g_{\mathbf{s}_4}
	  			 c_4(\mathbf{s}_2-\mathbf{s}_1,\mathbf{s}_3-\mathbf{s}_1,\mathbf{s}_4-\mathbf{s}_1)
	  			 e^{-i\bm{\omega_1}^T(\mathbf{s}_1-\mathbf{s}_2)}
				   e^{-i\bm{\omega_2}^T(\mathbf{s}_3-\mathbf{s}_4)}
\right|\\
\nonumber
&\leq
\sum_{\bm{\omega_1}, \bm{\omega_2}}
			a_{\bm{\omega_1}} a_{\bm{\omega_2}}
			C_k^2\sum_{\mathbf{s}_1} \sum_{\mathbf{s}_2}
  			 \sum_{\mathbf{s}_3} \sum_{\mathbf{s}_4}
	  			 g_{\mathbf{s}_1}g_{\mathbf{s}_2}g_{\mathbf{s}_3}g_{\mathbf{s}_4}
	  			 \left|
	  			 	c_4(\mathbf{s}_2-\mathbf{s}_1,\mathbf{s}_3-\mathbf{s}_1,\mathbf{s}_4-\mathbf{s}_1)
	  			 	e^{-i\bm{\omega_1}^T(\mathbf{s}_1-\mathbf{s}_2)}
   					e^{-i\bm{\omega_2}^T(\mathbf{s}_3-\mathbf{s}_4)}
   				\right|\\
&\leq
\sum_{\bm{\omega_1}, \bm{\omega_2}}
			a_{\bm{\omega_1}} a_{\bm{\omega_2}}
			\left|\bn\right|
			C_k^2\sum_{\bm{\tau}_1} \sum_{\bm{\tau}_2}\sum_{\bm{\tau}_3}
	  			 \left|
	  			 	c_4(\bm{\tau}_1,\bm{\tau}_2,\bm{\tau}_3)
	  			 \right|.
	  			 \label{eqn-allterms}
\end{align*}}
We now make use of our assumption of absolute summability of fourth-order
cumulants. Defining the positive finite constant 
$K_4=\sum_{\bm{\tau}_1=0}^\infty \sum_{\bm{\tau}_2=0}^\infty
\sum_{\bm{\tau}_3=0}^\infty
	  			 \left|
	  			 	c_4(\bm{\tau}_1,\bm{\tau}_2,\bm{\tau}_3)
	  			 \right|
<\infty$, we obtain,
\begin{equation*}
\frac{1}{|\bn_k|^2}
	\left|
		\sum_{\bm{\omega_1}, \bm{\omega_2}}
		{
			a_{\bm{\omega_1}} a_{\bm{\omega_2}}
			\cov\{ I_{\mathbf{n}}(\bm{\omega}_1), I_{\mathbf{n}}(\bm{\omega}_2) \}
		}
	\right|
\leq
\frac{\sum_{\bu\in\Z^d}{c_X(\bu)^2 c_{g,k}(\bu)}}
{\sum{g_\bs^2}}
+
|\bn_k| C_k^2 K_4,
\end{equation*}
where the first term is the one obtained also for Gaussian random fields.
This allows us to conclude, under our assumption of absolute summability of
fourth-order cumulants, that in the non-Gaussian case,
\begin{equation}
\var\left\{|\bn_k|^{-1}\sum_{\bk\in\Omega_{\bn_k}} a_k(\bk)I_{\bn_k}(\bk)
\right\} = \mathcal{O}\left\{
\frac{\sum_{\bu\in\Z^d}{c_X(\bu)^2 c_{g,k}(\bu)}}
{\sum{g_\bs^2}}
+ 
\frac
{
\left|\bn_k\right|
}
{
\left(\sum g_{\bs}^2\right)^2
}
\right\}.
\label{eqn:nonlin-var}
\end{equation}\qed

\subsection*{Proof of Corollary~\ref{cor=varianceSesLinearCombinations}}
\begin{proof}
For a multivariate random field we proceed much in the same way as the proof of
 Proposition~\ref{prop=varianceLinearCombinations}. We study
the variance of the quadratic form
\begin{equation}
\label{eq:03030107}
|\bn_k|^{-1}\sum_{\bk\in\Omega_{\bn_k}}
{\bJ_{\bn_k}^*(\bk) A_k(\bk) \bJ_{\bn_k}(\bk) }.
\end{equation}
For all $\bk\in\Omega_{\bn_k}$ we perform an orthonormal eigendecomposition
of $A_k(\bk)$,
\begin{equation*}
	A_k(\bk) = \sum_{j=1}^p{\lambda_j(\bk) \be_j(\bk) \be_j(\bk)^H},
\end{equation*}
where we do not indicate the dependence on $k$ to avoid complicating the
notation.  We then define the complex-valued scalars,
\begin{equation*}
	Z_j = \bJ^H(\bk) \be_j(\bk), \quad j= 1,\ldots,p,
\end{equation*}
and note that, due to the orthonormality of the basis $\be_1, \ldots,
\be_p$,
\begin{equation*}
	\bJ_{\bn_k}(\bk) = \sum_{j=1}^p{Z_j(\bk) \be_j(\bk)}.
\end{equation*}
We have,
\begin{align*}
\var
\left\{
|\bn_k|^{-1}\sum_{\bk\in\Omega_{\bn_k}}
{ \bJ_{\bn_k}^H(\bk) A_k(\bk) \bJ_{\bn_k} (\bk)}
\right\}&=\var
\left\{|\bn_k|^{-1}\sum_{\bk} \sum_{j=1}^p \lambda_{j}(\bk)|Z_{j}(\bk)|^2
\right\}\\
&=|\bn_k|^{-2}\sum_{\bk_1, \bk_2} \sum_{j_1, j_2} \lambda_{j_1}(\bk_1)\lambda_{j_2}(\bk_2)\cov\{ |Z_{j_1}(\bk_1)|^2,|Z_{j_2}(\bk_2)|^2\}.
\end{align*}
Using Isserliss' theorem we deduce, for any $\bk_1, \bk_2\in\Omega_{\bn_k}^2$, $j_1, j_2 = 1, \ldots, p$, 
\begin{align*}
\nonumber
\cov\{ |Z_{j_1}(\bk_1)|^2,|Z_{j_2}(\bk_2)|^2\}&=\E\{Z_{j_1}(\bk_1)Z_{j_2}(\bk_2)\}\E\{Z_{j_1}^\ast(\bk_1)Z_{j_2}^\ast(\bk_2)\}\\
&+\E\{Z_{j_1}(\bk_1)Z_{j_2}^\ast(\bk_2)\}\E\{Z_{j_1}^\ast(\bk_1)Z_{j_2}(\bk_2)\}\geq 0.
\label{isser}
\end{align*}
Therefore it follows that 
\begin{align*}
\var
\left\{
|\bn_k|^{-1}\sum_{\bk\in\Omega_{\bn_k}}
{ \bJ_{\bn_k}^H(\bk) A_k(\bk) \bJ_{\bn_k} (\bk)}
\right\}
&\leq \lambda_{\max}^2|\bn_k|^{-2}\sum_{\bk_1} \sum_{\bk_2} \sum_{j_1} \sum_{j_2} \cov\{ |Z_{j_1}(\bk_1)|^2,|Z_{j_2}(\bk_2)|^2\}
.
\end{align*}
Besides,
\begin{align*}
\var
\left\{
|\bn_k|^{-1}\sum_{\bk\in\Omega_{\bn_k}}
{ \bJ_{\bn_k}^H(\bk) \bJ_{\bn_k} (\bk)}
\right\}
&=
\var
\left\{
|\bn_k|^{-1}\sum_{\bk\in\Omega_{\bn_k}}
{ 
\left(
\sum_{j=1}^p{Z_j^* \be_j^H(\bk)}
\right)
\left(
\sum_{j=1}^p{Z_j \be_j(\bk)}
\right)
}
\right\}\\
&=
\var
\left\{
|\bn_k|^{-1}\sum_{\bk\in\Omega_{\bn_k}}
{ 
\left(
\sum_{j_1, j_2=1}^p{Z_{j_1}^* Z_{j_2} \be_{j_1}^H(\bk) \be_{j_2}(\bk)}
\right)
}
\right\}\\
&=
\var
\left\{
|\bn_k|^{-1}\sum_{\bk\in\Omega_{\bn_k}}
{ 
\sum_{j=1}^p{\left|Z_{j}\right|^2}
}
\right\},
\end{align*}
after we again use the orthonormality of the basis $\be_1, \ldots, \be_p$.
Hence we deduce that,
\begin{equation*}
\var
\left\{
|\bn_k|^{-1}\sum_{\bk\in\Omega_{\bn_k}}
{ \bJ_{\bn_k}^H(\bk) A_k(\bk) \bJ_{\bn_k} (\bk)}
\right\}
\leq \lambda_{\max}^2
\var
\left\{
|\bn_k|^{-1}\sum_{\bk\in\Omega_{\bn_k}}
{ \bJ_{\bn_k}^H(\bk) \bJ_{\bn_k} (\bk)}.
\right\}
\end{equation*}
As in the univariate case, we use the isometry property of the discrete
Fourier transform to write this in the form of,
\begin{align*}
\var
\left\{
|\bn_k|^{-1}\sum_{\bk\in\Omega_{\bn_k}}
{ \bJ_{\bn_k}(\bk)^H A_k(\bk) \bJ_{\bn_k}(\bk) }
\right\}
&\leq
\lambda_\mathrm{max}^2
\var
\left\{
\sum_{q=1,\ldots,p}
\left\{
\frac
{
	(2\pi)^{-d}
}
{
	{
		\sum{{g_{\bs'}^\pq}^2}
	}
}
\sum
_{\bs}
{
	{g_{\bs}^\pq}^2
	{X_{\bs}^\pq}^2
}
\right\}
\right\}.
\end{align*}
By applying the Isserlis theorem, we obtain the following upper-bound,
\begin{align*}
&\var
\left\{
|\bn_k|^{-1}\sum_{\bk\in\Omega_{\bn_k}}
{ {\bJ_{\bn_k}(\bk)}^H A_k(\bk) \bJ_{\bn_k}(\bk) }
\right\}
\\&\leq
2 \lambda_\mathrm{max}^2
\sum_{q,r}
\frac
{
	(2\pi)^{-2d}
}
{
	{
		\sum{{g_{\bs'}^\pq}^2}
		\sum{{g_{\bs'}^\pr}^2}
	}
}
\left\{
	\sum_{s,s'}
	{
		{g_\bs^\pq}^2
		{g_{\bs'}^\pr}^2
		\left(
			\E
			\left[
				X_\bs^\pr
				X_{\bs'}^\pq
			\right]
		\right)^2
	}
\right\}.
\end{align*}
By a manipulation similar to the one we used earlier for the univariate case,
we obtain,
\begin{equation*}
	\var
\left\{
|\bn_k|^{-1}\sum_{\bk\in\Omega_{\bn_k}}
{ \bJ_{\bn_k}(\bk)^* A_k(\bk) \bJ_{\bn_k}(\bk) }
\right\}
=
\mathcal{O}
\left\{
\sum_{q,r}
\frac{
	\sum_{\bu}
	{	
		c_X^\pqr(\bu)^2 c_g^\pqr(\bu)
	}
}
{
\sqrt{
\sum{{g_\bs^\pq}^2}
\sum{{g_\bs^\pr}^2}
}
}
\right\},
\end{equation*}
which determines the order of the variance of such quadratic forms.\qed
\end{proof}

\if1\BiasStuff{
\subsection*{Proof of Proposition~\ref{propcov}}\label{proof:varianceprop}

\begin{proof}
Let us assume that the deterministic matrix ${\bm
  {\mathcal{H}}}(\bm{\theta})$ is invertible. Following on from
Theorem~1 of~\citep{sykulski2016biased} this result follows directly
from brute force calculations. Then it follows from~\eqref{eqn:score}
\begin{align*}
  \widehat{\bm{\theta}}-\bm{\theta}&\approx
  -{\bm   {\mathcal{H}}}^{-1}(\bm{\theta})
  \nabla  \ell_M(\bm{\theta}).
\end{align*}
It is clear from this equation that for large random fields the
estimate $\widehat{\bm{\theta}}$ is unbiased. We then note that
\begin{equation*}
  \var\{  \widehat{\bm{\theta}}\}\approx
  - {\bm   {\mathcal{H}}}^{-1}(\bm{\theta})\var \{ \nabla  \ell_M(\bm{\theta})\}
  (-  {\bm   {\mathcal{H}}}^{-T}(\bm{\theta})).
\end{equation*}
The $\approx$ sign above signifies that as the random field becomes
unbounded in spread then the ratio of the entries on the left hand
side, by those of the right hand side eventually equates to 1. We can
estimate the Hessian at the value of the MLE by taking the observed
Hessian:
\begin{equation}
  \widehat{  {\bm   {\mathcal{H}}}}(\bm{\theta})
  = {\mathbf{H}}(\widehat{\bm{\theta}}).
\end{equation}
First we recall that the $p$th entry of $\nabla  \ell_M(\bm{\theta})\}$ is
\begin{equation}
  \frac{\partial \ell_M(\bm{\theta})}{\partial
    \theta_p}=
  \n^{-1}\sum_{\bk\in\Omega_\bn}\left\{\frac{\frac{\partial
      \overline{I}_{\bn}(\bk;\btheta)}{\partial \theta_p}
  }{\overline{I}_{\bn}(\bk;\btheta)} - 
  \frac{\partial \overline{I}_{\bn}(\bk;\btheta)}{\partial
    \theta_p}\frac{I_{\bn}(\bk)}{\overline{I}_{\bn}^2(\bk;\btheta)}\right\}. 
\end{equation}
This takes the form of a linear combination of the periodogram with a
deterministic weight. We can note that the variance of this object
takes the form of
\begin{align}\label{eq:stdformula1}
  \var\left\{ \frac{\partial \ell_M(\bm{\theta})}{\partial
    \theta_p}\right\}&=\n^{-2}\sum_{\bk_1\in\Omega_\bn}
  \sum_{\bk_2\in\Omega_\bn}
  \frac{\partial \overline{I}_{\bn}(\bk_1;\btheta)}{\partial
    \theta_p}\frac{\partial
    \overline{I}_{\bn}(\bk_2;\btheta)}{\partial
    \theta_p}\frac{\cov\left\{I_{\bn}(\bk_1),I_{\bn}(\bk_2)\right\}}
{\overline{I}_{\bn}^2(\bk_1;\btheta)\overline{I}_{\bn}^2(\bk_2;\btheta)}. 
\end{align}
The above equation helps us compute the standard errors but we need to
estimate\\ $\cov\left\{I_{\bn}(\bk_1),I_{\bn}(\bk_2)\right\}$. We
start by writing down the spectral representation
\begin{equation}
  X_j=\int dZ(\bk)\exp(\bi\bk\cdot\bs_{j}),\quad
  \bk\in[-\frac{\pi}{\Delta},\frac{\pi}{\Delta}]^d. 
\end{equation}
We then have
\begin{align*}
  \cov\left\{J_\bn(\bk),J_\bn(\bk')\right\} &= \n^{-1}\sum_{j_1=0}^{n-1}
  \sum_{j_2=0}^{n-1}{g_{j_1}g_{j_2}^\ast \cov\{X_{j_1},X_{j_2}\}e^{-\bi(\bk\cdot\bs_{j_1}-\bk'\cdot
      \bs_{j_2})}}\\
  &= \n^{-1}\sum_{j_1=0}^{n-1}
  \sum_{j_2=0}^{n-1}g_{j_1}g_{j_2}^\ast \cov\{\int
  dZ(\bk_1)e^{\bi\bk_1\cdot\bs_{j_1}},\int
  dZ(\bk_2)e^{\bi\bk_2\cdot\bs_{j_2}}\}\\ 
  &\times e^{-\bi(\bk\cdot\bs_{j_1}-\bk'\cdot
    \bs_{j_2})}.
\end{align*}
We now define 
\begin{equation}
  \mathcal{D}_{\bn}(\lambda) = \sum_{\bs\in\Jx}{g_\bs e^{i\lambda\cdot\bs}}.
\end{equation}
This permits us to write
\begin{align*}
  \cov\left\{	J_\bn(\bk),J_\bn(\bk')\right\} &=\n^{-1}\iint 
  \mathcal{D}_{\bn}\left(\bk_1-\bk\right)\mathcal{D}_{\bn}^\ast \left(\bk_2-\bk'\right)
  \cov\left\{dZ(\bk_1),dZ(\bk_2) \right\}\\
  &=\iint 
  \mathcal{D}_{\bn}\left(\bk_1-\bk\right)\mathcal{D}_{\bn}^\ast \left(\bk_2-\bk'\right)
  S(\bk_1)\delta(\bk_1-\bk_2)d\bk_1d\bk_2.
\end{align*}
We assume Gaussianity and using Isserlis' theorem we have
\begin{equation}
  \label{isserlis}
  \cov\left\{I_{\bn}(\bk_1),I_{\bn}(\bk_2)\right\}=\left|
  \cov\left\{J_\bn(\bk_1),J_\bn(\bk_2)\right\}\right|^2. 
\end{equation}
\end{proof}
}\fi
\end{proof}

\subsection*{Proof of Lemma~\ref{lemma:minofexptedlkh}}

\begin{proof}
The difference between the expected likelihood function at the true
parameter vector and any parameter vector $\bgamma\in\Theta$ takes the form
\begin{equation*}
  \widetilde{l}_\bn(\bgamma)-\widetilde{l}_\bn(\btheta) =
  |\bn|^{-1}\sum_{\bk\in\Omega_\bn}{\phi\left(\frac{\Ink{\btheta}}{\Ink{\bgamma}}\right)}, 
  \end{equation*}
with $\phi: x\mapsto x - \log x - 1$. This function is non-negative and
attains it minimum uniquely at $x=1$.

The proof in the multivariate case requires a bit more care than the
univariate case but follows the same pattern.  Following Taniguchi (1979)
and Guillaumin et al (2017) for 1-d and the multivariate version provided
in~\cite{hosoya1982central} we define the function
\[ \widetilde{l}_\bn(\bgamma)= |\mathbf{n}|^{-1}\sum_{\bm{\omega}}
    \left\{\log {\mathrm{det}}\{ \overline{\mathbf{I}}(\bm{\omega};\bm{\gamma})\} +{\mathrm{trace}}\left\{\overline{\mathbf{I}}^{-1}(\bm{\omega};\bm{\gamma})\overline{\mathbf{I}}(\bm{\omega};\bm{\theta}\right\}\right\}
.
\]
We now note that
\begin{align}
\nonumber
\widetilde{l}_\bn(\bgamma)-\widetilde{l}_\bn(\btheta) &=|\mathbf{n}|^{-1}\sum_{\bm{\omega}}
    \left\{{\mathrm{trace}}\left\{\overline{\mathbf{I}}^{-1}(\bm{\omega};\bm{\gamma})\overline{\mathbf{I}}(\bm{\omega};\bm{\theta})\right\}-
    \log\frac
    {\mathrm{det}\overline{\mathbf{I}}(\bm{\omega};\bm{\theta})}
    {\mathrm{det}\overline{\mathbf{I}}(\bm{\omega};\bm{\gamma})}
    -p
    \right\}.
\end{align}
We define
$\mathbf{B}_{\bm{\omega}}(\bm{\theta},\bm{\gamma})=\overline{\mathbf{I}}(\bm{\omega};\bm{\theta})\overline{\mathbf{I}}^{-1}(\bm{\omega};\bm{\gamma})$,
and assume this matrix has positive eigenvalues
$\{\beta_i(\omega)\}_{i=1}^p$. We then obtain,
\begin{align}
\nonumber
\widetilde{l}_\bn(\bgamma)-\widetilde{l}_\bn(\btheta) &=|\mathbf{n}|^{-1}\sum_{\bm{\omega}}
   \sum_j \{\beta_j -\log \beta_j -1\}.
\end{align}
From here, like in the univariate case we make use of the fact that $\phi:
x\mapsto x - \log x - 1$ is non-negative and attains it minimum uniquely at
$x=1$, which corresponds to
$\mathbf{B}_{\bm{\omega}}(\bm{\theta},\bm{\gamma})$ being the identity
matrix.\qed
\end{proof}

\subsection*{Proof of Lemma~\ref{lemma:bounds_periodogram}}

\begin{proof}
By combining equations~\eqref{eq:definitionPeriodogram} and
\eqref{eq:cgdef} in the main body then the periodogram can be
expressed as
\begin{equation*}
I_{\bn}(\bk) = \frac{(2\pi)^{-d}}{\sum{g_\bs^2}}\left|
\sum_{\bs\in\Jx}{g_\bs X_\bs \exp(-i\bk\cdot\bs)} \right|^2,
\qquad\bk\in\T^d.
\end{equation*}
Making use of equation~\eqref{barI} of the main body,
we therefore have,
\begin{align*}
\In{\bgamma} = 
\int_{\T^d}{f_{\delta,X}(\bk-\blambda;\bgamma)\mathcal{F}_{\bn}(\blambda)
d\blambda}.
\end{align*}
Also,
\begin{align*}
 \int_{\T^d}{\mathcal{F}_\bn(\bk)d\bk} &=
\frac{(2\pi)^{-d}}{\sum{g_\bs^2}}\int_{\T^d}{\left|\sum_{\bs\in\Jx}{g_\bs
    \exp(i\bk\cdot\bs}\right|^2d\,\bk}\\ \nonumber &=
\frac{(2\pi)^{-d}}{\sum{g_\bs^2}}\int_{\T^d}{\sum_{\bs\in\Jx}{\sum_{\bs'\in\Jx}{g_\bs g_{\bs'}
      \exp\{i\bk\cdot(\bs'-\bs)\}}}\,d\bk}\\ \nonumber &=
\frac{(2\pi)^{-d}}{\sum{g_\bs^2}}\sum_{\bs\in\Jx}{\sum_{\bs'\in\Jx}{\int_{\T^d}{g_\bs g_{\bs'}
      \exp\{i\bk\cdot(\bs'-\bs)\}}}\,d\bk}\\ \nonumber &=
\frac{1}{\sum{g_\bs^2}}\sum_{\bs\in\Jx}{\sum_{\bs'\in\Jx}{g_\bs g_{\bs'}
    \delta_{\bs,\bs'}}}\\ &= 1,
\end{align*}
which is a direct adaptation of a standard result for the F\'ejer
kernel. Hence,
\begin{align*}
\left| \In{\bgamma} \right|  &\leq 
\int_{\T^d}{\left|
  f_{\delta,X}(\bk-\blambda;\bgamma)\mathcal{F}_{\bn}(\blambda)
  \right|
  d\blambda}\\
&\leq 
f_{\delta,\text{max}}
\int_{\T^d}{\left|
  \mathcal{F}_{\bn}(\blambda)
  \right|
  d\blambda}\\
&\leq f_{\delta,\text{max}}.
\end{align*}
Similarly, we obtain the other inequality, i.e.
$
	\In{\bgamma} \geq f_{\delta,\text{min}},
$
which concludes the proof.\qed
\end{proof}

\subsection*{Proof of Lemma~\ref{lemma=061120182}}
We shall need the following intermediary result in our proof.
\begin{lemma}
	\label{lemma:intermediaryresult}
	We have, for a growing domain,
	\begin{equation*}
	\nk{k}^{-1}\sum_{\bk\in\Omega_{\bn_k}}\left\{
	\Ink{\btheta} - \Ink{\bgamma}
	\right\}^2=
	\sum_{\bu\in\Z^d}\left\{\overline{c}_{\bn_k}(\bu;\btheta)-\overline{c}_{\bn_k}(\bu;\bgamma)\right\}^2 + o(1).
	\end{equation*}
\end{lemma}
\begin{proof}
	We distinguish two cases:
	\begin{enumerate}
		\item In the case where the domain is unbounded,
		we have set $\Omega_{\bn}=\Omega_{\bn}^{(1)}$, see the
		discussion following~\eqref{eq:pseudolkh} in the main document.
		 Then the
		result in obtained by application of Parseval's equality,
		according to which,
		\begin{equation*}
			\sum_{\bu\in\Z^d}\left\{\overline{c}_{\bn_k}(\bu;\btheta)-\overline{c}_{\bn_k}(\bu;\bgamma)\right\}^2=
			\int_{\T^d}\left\{
			\Ink{\btheta} - \Ink{\bgamma}
			\right\}^2d\bk,
		\end{equation*}
		and application of the Dominated Convergence Theorem.
		\item In the case where one or more dimensions of the domain
		are bounded, we use $\Omega_{\bn}=\Omega_{\bn}^{(2)}$, 
		see the
		discussion following~\eqref{eq:pseudolkh} in the main document,
		 and
		in that case we have exactly,
		\begin{equation*}
			\nk{k}^{-1}\sum_{\bk\in\Omega_{\bn_k}}\left\{
			\Ink{\btheta} - \Ink{\bgamma}
			\right\}^2=
			\sum_{\bu\in\Z^d}\left\{\overline{c}_{\bn_k}(\bu;\btheta)-\overline{c}_{\bn_k}(\bu;\bgamma)\right\}^2.
		\end{equation*}
		This can be established by direct calculations using the expression of
		the expected periodogram as a Fourier series provided in
		 Lemma~\ref{eq:computeExpectedPeriodogram}.
	\end{enumerate}
\end{proof}

We can now establish the proof for Lemma~\ref{lemma=061120182}.
\begin{proof}
We start by providing a proof in the scalar case (the non--Gaussian but
scalar case requires no adjustment).  We first observe, given
equation~\eqref{eq:formulaexpectedlikelihood} of the main body, that
\begin{equation*}
  \widetilde{l}_{\bn_{k}}(\bgamma) - \widetilde{l}_{\bn_{k}}(\btheta)
  = \nk{k}^{-1}\sum_{\bk\in\Omega_{\bn_k}}
  \left\{
  \frac{\Ink{\btheta}}{\Ink{\bgamma}}
  - \log\frac{\Ink{\btheta}}{\Ink{\bgamma}}
  -1
  \right\}.
\end{equation*}
As before, denoting $\phi:x\mapsto x - \log x - 1, \ x > 0$, and
$g_\bn(\bomega)$ the piece-wise continuous function that maps any
frequency of $\T^d$ to the closest smaller Fourier frequency
corresponding to the grid $\mathcal{J_\bn}$, we have
\begin{equation*}
  \widetilde{l}_{\bn_{k}}(\bgamma) - \widetilde{l}_{\bn_{k}}(\btheta)
  = (2\pi)^{-d}\int_{\T^d}
  {
    \phi\left(
    \frac{\Inkb{g(\bk)}{\btheta}}{\Inkb{g(\bk)}{\bgamma}}
    \right)
  }d\bk.
\end{equation*}
A Taylor expansion of $\phi(\cdot)$ around 1 gives, with
$\psi(x)=(x-1)^2$,
\begin{equation*}
\label{eq:Taylorexpansionphi}
\phi(x) = \psi(x)(1+\epsilon(x)),
\end{equation*}
where $\epsilon(x)\rightarrow 0$ as $x\rightarrow 1$. Therefore for
any $\delta>0$ there exists $\mu>0$ such that for all $x$ such that
$|x-1|\leq \mu$, $|\epsilon(x)| < \delta$. Now let, for all $k\in\N$,
\begin{equation*}
  \Pi_k = \left\{
  \bk\in \mathcal{T}^d: \left|\frac{\Inkb{g(\bk)}{\btheta}}{\Inkb{g(\bk)}{\bgamma}}-1\right| \leq \mu
  \right\}.
\end{equation*}
We distinguish two cases:
\begin{enumerate}
\item If for some $\delta > 0$, the Lebesgue measure of $\Pi_k$ does not
  converge to $(2\pi)^d$, equation~\eqref{eq:05112018} of the main body
  holds.
\item Otherwise, if for any $\delta>0$ the Lebesgue measure of $\Pi_k$
  does converge to $(2\pi)^d$, we then have 
  \begin{equation*}
    \left|\widetilde{l}_{\bn_{k}}(\bgamma) - \widetilde{l}_{\bn_{k}}(\btheta)\right| 
    = \int_{\Pi_k\cup\Pi_k^C}
    {
      \psi\left(
      \frac{\Inkb{g(\bk)}{\btheta}}{\Inkb{g(\bk)}{\bgamma}}
      \right)\left\{1+\epsilon\left(\frac{\Inkb{g(\bk)}{\btheta}}{\Inkb{g(\bk)}{\bgamma}}
      \right)\right\}
    }d\bk,
  \end{equation*}
where $\Pi_k^C$ denotes the complementary of $\Pi_k$ as a subset of
$\mathcal{T}^d$ and where the function $\epsilon(\cdot)$ was defined in
equation~\eqref{eq:Taylorexpansionphi}. Denoting
$h(\bk;\btheta,\bgamma) =
\frac{\Inkb{g(\bk)}{\btheta}}{\Inkb{g(\bk)}{\bgamma}}$ (note that this
quantity also depends on $k$),
\begin{align*}
\widetilde{l}_{\bn_{k}}(\bgamma) - \widetilde{l}_{\bn_{k}}(\btheta)
&= \int_{\mathcal{T}^d}
  {
    \psi\left(
    h(\bk;\btheta,\bgamma)
    \right)
  }d\bk\\
  &+ 
  \int_{\Pi_k}
      {
	\psi\left(
	h(\bk;\btheta,\bgamma)\right)\epsilon\left(h(\bk;\btheta,\bgamma)
	\right)
      }d\bk\\
      &+
      \int_{\Pi_k^C}
	  {
	    \psi(h(\bk;\btheta,\bgamma)\epsilon(h(\bk;\btheta,\bgamma))d\bk
	  }.
\end{align*}
We shall now show that the two last terms of the right-hand side of this
equation are asymptotically vanishing, so that we can limit our study to the
first term, which will turn out to take a simple form in relation to our
definition of significant correlation contribution (SCC) in the main
body. Given the definition of $\Pi_k$ we have,
\begin{equation*}
\left|
\int_{\Pi_k}
{
\psi\left(
h(\bk;\btheta,\bgamma)\right)\epsilon\left(h(\bk;\btheta,\bgamma)
\right)
}d\bk
\right|
\leq
\delta\int_{\Pi_k}
{
  \psi\left(
  h(\bk;\btheta,\bgamma)\right)
}d\bk
\leq
\delta\int_{\mathcal{T}^d}
  {
    \psi\left(
    h(\bk;\btheta,\bgamma)\right)
  }d\bk,
\end{equation*}
where the two inequalities come from the fact that the function
$\psi(\cdot)$ is non-negative. We also have
\begin{equation*}
\left|
\int_{\Pi_k^C}
{
  \psi(h(\bk;\btheta,\bgamma)\epsilon(h(\bk;\btheta,\bgamma))d\bk
}
\right|
= o(1),
\end{equation*}
since the integrand is upper-bounded given Assumption~1.\ref{Ass:sdf}
and since the measure of the set $\Pi_k^C$ goes to zero. Hence we
obtain, by the triangle inequality,
\begin{equation*}
  \left|
  \widetilde{l}_{\bn_{k}}(\bgamma) - \widetilde{l}_{\bn_{k}}(\btheta)
  \right|
  \geq
  \left(
  \int_{T^d}
      {
	\psi\left(
	h(\bk;\btheta,\bgamma)
	\right)
      }d\bk
      \right)
      \left(1-\delta\right)
      +o(1).
\end{equation*}
We now study the term $(2\pi)^{-d}\int_{\T^d}
{
  \psi\left(
  h(\bk;\btheta,\bgamma)
  \right)
}d\bk=
\nk{k}^{-1} \sum_{\bk\in\Omega_{\bn_k}}\left\{
\frac{\Ink{\btheta}}{\Ink{\bgamma}}-1
\right\}^2.$
We observe that
\begin{eqnarray*}
  \nk{k}^{-1}\sum_{\bk\in\Omega_{\bn_k}}\left\{
  \Ink{\btheta} - \Ink{\bgamma}
  \right\}^2 &=& 
  \nk{k}^{-1}\sum_{\bk\in\Omega_{\bn_k}}\Ink{\bgamma}^2\left\{
  \frac{\Ink{\btheta}}{\Ink{\bgamma}}-1
  \right\}^2\\
  &\leq& \nk{k}^{-1}  f_{\text{max}, \delta}^2 
  \sum_{\bk\in\Omega_{\bn_k}}\left\{
  \frac{\Ink{\btheta}}{\Ink{\bgamma}}-1
  \right\}^2.
\end{eqnarray*}
Additionally, according to Lemma~\ref{lemma:intermediaryresult},
\begin{align*}
\nk{k}^{-1} \sum_{\bk\in\Omega_{\bn_k}}\left\{
	\Ink{\btheta} - \Ink{\bgamma}
	\right\}^2&=
 \sum_{\bu\in\Z^d}\left\{\overline{c}_{\bn_k}(\bu;\btheta)-\overline{c}_{\bn_k}(\bu;\bgamma)\right\}^2 + o(1)\\
	&=\sum_{\bu\in\Z^d}{c_{g,\bn_k}(\bu)^2\left\{c_X(\bu;\btheta)-c_X(\bu;\bgamma)\right\}^2} + o(1)\\
	&\geq \frac{1}{2}\underline{\lim}_{k\rightarrow\infty}S_k(\btheta, \bgamma) + o(1),
\end{align*}
where the last inequality holds for $k$ sufficiently large, given the SCC
assumption, see Definition~\ref{def=significantCorrelation}.  Therefore we
obtain for $k$ sufficiently large,
\begin{equation*}
\left|
\widetilde{l}_{\bn_{k}}(\bgamma) - \widetilde{l}_{\bn_{k}}(\btheta)
\right|
\geq
\frac{1}{2
  f_{\text{max},\delta}^2}\left(1-\delta\right)\underline{\lim}_{k\rightarrow\infty}S_k(\btheta,
\bgamma) + o(1). 
\end{equation*}
Choosing $\delta=1/2$, we obtain the inequality stated in
equation~\eqref{eq:05112018} of the main body. This concludes the proof in
the univariate case, as we have shown the absolute difference of the
expected log-likelihood is lower bounded by the assumption of SCC.
\end{enumerate}

We now extend the proof of Lemma~\ref{lemma=061120182} to the multivariate
case. In the multivariate case, we first observe that we may write the
difference of the expected log--likelihood for different parameter values as
\begin{align}
\nonumber
    \widetilde{\ell}_{\mathbf{n}}(\bm{\gamma})- \widetilde{\ell}_{\mathbf{n}}(\bm{\theta})&=
    |\mathbf{n}|^{-1}\sum_{\bm{\omega}}
    \left\{\log {\mathrm{det}}\{ \overline{\mathbf{I}}(\bm{\omega};\bm{\gamma})\} +{\mathrm{trace}}\left\{\overline{\mathbf{I}}^{-1}(\bm{\omega};\bm{\gamma})\overline{\mathbf{I}}(\bm{\omega};\bm{\theta}\right\}\right\}\\
    \nonumber
    &-|\mathbf{n}|^{-1}\sum_{\bm{\omega}}
    \left\{\log {\mathrm{det}}\{ \overline{\mathbf{I}}(\bm{\omega};\bm{\theta})\} +{\mathrm{trace}}\left\{\overline{\mathbf{I}}^{-1}(\bm{\omega};\bm{\theta})\overline{\mathbf{I}}(\bm{\omega};\bm{\theta}\right\}\right\}\\
    &=|\mathbf{n}|^{-1}\sum_{\bm{\omega}}
    \left\{-\log {\mathrm{det}}\{ \overline{\mathbf{I}}^{-1}(\bm{\omega};\bm{\gamma})\overline{\mathbf{I}}(\bm{\omega};\bm{\theta}\} +{\mathrm{trace}}\left\{\overline{\mathbf{I}}^{-1}(\bm{\omega};\bm{\gamma})\overline{\mathbf{I}}(\bm{\omega};\bm{\theta}\right\}-p\right\}.
\end{align}
We define
$\tilde{\mathbf{B}}_{\bm{\omega}}(\bm{\theta},\bm{\gamma})=\overline{\mathbf{I}}^{-1}(\bm{\omega};\bm{\gamma})\overline{\mathbf{I}}(\bm{\omega};\bm{\theta})$,
and assume this matrix has positive eigenvalues
$\{\tilde{\beta}_i(\omega)\}_{i=1}^p$. Rewriting this expression in terms of
the eigenvalues we get
\begin{align}
\nonumber
    \widetilde{\ell}_{\mathbf{n}}(\bm{\gamma})- \widetilde{\ell}_{\mathbf{n}}(\bm{\theta})&=
    |\mathbf{n}|^{-1}\sum_{\bm{\omega}}
    \left\{-\sum_j\log\tilde{\beta}_j(\omega)+\sum_j\tilde{\beta}_j(\omega)-p
\right\}\\
&=|\mathbf{n}|^{-1}\sum_{\bm{\omega}}
\sum_j \phi\{\tilde{\beta}_j(\bm{\omega})\}.
\end{align}
We define $g_\mathbf{n}(\bm{\omega})$ as the piece-wise continuous function
that maps any frequency of ${\cal T}^d$ to the closest smaller Fourier
frequency corresponding to the grid of ${\cal I}_n,$ we have
\begin{align}
   \widetilde{\ell}_{\mathbf{n}}(\bm{\gamma})- \widetilde{\ell}_{\mathbf{n}}(\bm{\theta})&=(2\pi)^{-d}
   \int_{{\cal T}^d} \sum_{j=1}^p\phi\left( \tilde{\beta}_j(g_\mathbf{n}(\bm{\omega}))\right)\; d  \bm{\omega}.
\end{align}
A Taylor expansion of $\phi(\cdot)$ around 1 gives with
$\phi(x)=\psi(x)(1+\epsilon(x))$ with $\psi(x)=(x-1)^2$, where $\epsilon(x)$
is going to zero as $x\rightarrow 1$. Most of this proceeds exactly like for
the univariate case, but we shall now proceed to study what SCC means in
this context.  Unlike the univariate case we now have to propose a new
approximation that works also in this case. Given we have
\begin{align}
\nonumber
   \widetilde{\ell}_{\mathbf{n}}(\bm{\gamma})- \widetilde{\ell}_{\mathbf{n}}(\bm{\theta})&=\sum_{\bm{\omega}} \sum_j \left( \tilde{\beta}_j(g_\mathbf{n}(\bm{\omega}))-1\right)^2\\
   \nonumber
   \label{eqn:misfit}
   &=\sum_{\bk}{
   \mathrm{trace}
   \left[
   	\tilde{\mathbf{B}}_{\bm{\omega}}(\bm{\theta},\bm{\gamma}) - I_p
   \right]^2
   }\\
   \nonumber
   &=\sum_{\bk}{
   \mathrm{trace}
   \left[
   	\overline{\mathbf{I}}^{-1}
   	(\bm{\omega};\bm{\gamma})
   	\left(
   	\overline{\mathbf{I}}(\bm{\omega};\bm{\theta}) -\overline{\mathbf{I}}(\bm{\omega};\bm{\gamma})
   	\right)
   \right]^2
   }\\
      &\geq f_{\mathrm{max}, \delta}^{-2} \sum_{\bk}{
   \mathrm{trace}
   \left[
   	\overline{\mathbf{I}}(\bm{\omega};\bm{\theta}) - \overline{\mathbf{I}}(\bm{\omega};\bm{\gamma})
   \right]^2
   }\\
   &=\sum_{\bk}{
   	\sum_{q=1}^p{
   		\sum_{r=1}^p{
   			\left|
   				\overline{I}^\pqr(\bk;\btheta)
   				-
   				\overline{I}^\pqr(\bk;\bgamma)
			\right|^2
   		}
   	}
   },
\end{align}
where the inequality results from Lemma~\ref{lemma=traceinequality}.  We can
now relate the above quantity to the multivariate version of SCC via the use
of Parseval's identity, just like we did in the univariate case.\qed
\end{proof}

\subsection*{Proof of Lemma~\ref{lemma=traceinequality}}
\begin{proof}
	Since $H_1$ is Hermitian positive definite it admits $p$ real positive eigenvalues
	$0 < \lambda_1\leq \ldots \leq \lambda_p$ and there exits a unitary matrix $U$
	such that $H_1 = U^* D U$, where $D$ is the diagonal matrix with 
	elements $\lambda_1, \ldots, \lambda_p$ on the diagonal.
	We then have,
	\begin{align*}
		\mathrm{trace}\left[H_1 H_2 \right]^2 & 
		= \mathrm{trace} \left[ U^* D U H_2 U^* D U H_2 \right]
		= \mathrm{trace} \left[ D U H_2 U^* D U H_2 U^* \right]
		=  \mathrm{trace} \left[ D Z D Z \right],
	\end{align*}
	where $Z = U H_2 U^*$, which is Hermitian positive definite just like
	$H_2$ is.
	Therefore,
	\begin{align*}
		\mathrm{trace}\left[H_1 H_2 \right]^2
		&= \sum_{j=1}^p
		{
			\sum_{k=1}^p
			{
				\lambda_{j} Z_{j, k}
				\lambda_{k} Z_{k, j}
			}
		}
		=
		 \sum_{j=1}^p
		{
			\sum_{k=1}^p
			{
				\lambda_{j}
				\lambda_{k}
				 \left|Z_{j, k} \right|^2
			}
		}\\
		&\geq
		\lambda_1^2
		 \sum_{j=1}^p
		{
			\sum_{k=1}^p
			{
				 \left|Z_{j, k} \right|^2
			}
		}
		=
		\lambda_1^2
		 \mathrm{trace}{\ Z}^2
		 = \lambda_1^2
		 \mathrm{trace}{\ H_2}^2.
	\end{align*}
	This concludes the proof.\qed
\end{proof}

\subsection*{Proof of Lemma~\ref{lemma=06112018}}

\begin{proof}
First we observe that for any fixed $\bk\in\T^d$, 
 $\Ink{\bgamma_k}$ converges to $\Ink{\bgamma}$ as
$k$ goes to infinity.
This comes from Assumption 1.\ref{Ass:sdf}, where we have assumed an 
upper-bound on the derivative of the spectral density with respect to
the parameter vector. In that case,
\begin{align*}
\left| \Ink{\bgamma_k} - \Ink{\bgamma} \right| &\leq
\left| (2\pi)^{-d}\int_{\mathcal{T}^d}{
\left\{
  f_{X,\bdelta}(\bk-\bk';\bgamma_k)
	-f_{X,\bdelta}(\bk-\bk';\bgamma)
\right\}
	\mathcal{F}_{\bn}(\bk')d\bk'}
\right|\\
&\leq
(2\pi)^{-d}\int_{\T^d}{
\left|
  f_{X,\bdelta}(\bk-\bk';\bgamma_k)
	-f_{X,\bdelta}(\bk-\bk';\bgamma)
\right|
	\mathcal{F}_{\bn}(\bk')d\bk'}\\
&\leq
(2\pi)^{-d}\int_{\T^d}{
M_{\partial_\theta}\|\bgamma_k-\bgamma\|_2
	\mathcal{F}_{\bn}(\bk')d\bk'}\\
&\leq
M_{\partial_\theta}\|\bgamma_k-\bgamma\|_2
\end{align*}
which converges to zero as $\|\bgamma_k-\bgamma\|_2$ converges to zero by 
assumption.

Now using equation~\eqref{eq:formulaexpectedlikelihood}, we can 
apply the Dominated Convergence Theorem to
$(\widetilde{l}_{\bn_k}(\bgamma_k)-\widetilde{l}_{\bn_k}(\bgamma))_{k\in\N}$,
using the bounds established in Lemma~\ref{lemma:bounds_periodogram},
and the $\bk$-pointwise convergence of
$\left|\overline{I}_{\bn_k}(\bk;\bgamma_k) -
\overline{I}_{\bn_k}(\bk;\bgamma_k)\right|$ to zero. Hence
$(\widetilde{l}_{\bn_k}(\bgamma_k)-\widetilde{l}_{\bn_k}(\bgamma))_{k\in\N}$
converges to zero, which concludes the proof.\qed
\end{proof}

\subsection*{Proof of Lemma~\ref{lemma:cvgtotheta}}
\begin{proof}
Assume, with the intent to reach a contradiction, that $(\bgamma_k)$
does not converge to $\btheta$. By compactness of $\Theta$, there
exists $\bgamma\in\Theta$ distinct from $\btheta$ and
$(\bgamma_{j_k})$ a subsequence of $(\bgamma_k)$ such that
$\bgamma_{j_k}$ converges to $\bgamma$. We then have, using the
inverse triangle inequality,
\begin{align*}
  |\widetilde{l}_{\bn_{j_k}}(\bgamma_{j_k})-\widetilde{l}_{\bn_{j_k}}(\btheta)|
	&\geq  \left| \widetilde{l}_{\bn_{j_k}}(\bgamma)-\widetilde{l}_{\bn_{j_k}}(\btheta)\right|
	- \left|\widetilde{l}_{\bn_{j_k}}(\bgamma_{j_k})-\widetilde{l}_{\bn_{j_k}}(\bgamma)\right|.
\end{align*}
The second term on the right-hand side of the above equation converges
to zero according to Lemma~\ref{lemma=06112018} whereas the first term
is asymptotically lower bounded according to
Lemma~\ref{lemma=061120182}. Therefore the quantity
$|\widetilde{l}_{\bn_{j_k}}(\bgamma_{j_k})-\widetilde{l}_{\bn_{j_k}}(\btheta)|$
is asymptotically lower bounded, which contradicts the initial
assumption that
$\widetilde{l}_{\bn_k}(\bgamma_k)-\widetilde{l}_{\bn_k}(\btheta)$
converges to zero. This concludes the proof, by obtaining a
contradiction.\qed
\end{proof}

\subsection*{Proof of Lemma~\ref{lemma:cvglkh}}
\begin{proof}
We have, for $\bgamma\in\Theta$,
\begin{align*}
  \widetilde{l}_{\bn_k}(\bgamma) - l_{\bn_k}(\bgamma) &=
  |\bn_k|^{-1}\sum_{\bk\in\Omega_\bn}\left\{\log{\Ink{\bgamma}}+
\frac{\Ink{\btheta}}{\Ink{\bgamma}}-\log{\Ink{\bgamma}}-\frac{I_{\bn_k}(\bk)}{\Ink{\bgamma}}\right\}\\ 
  &=|\bn_k|^{-1}\sum_{\bk\in\Omega_\bn}
  {\frac{\Ink{\btheta}-I_{\bn_k}(\bk)}{\Ink{\bgamma}}}.
\end{align*}
In order to show that 
$\widetilde{l}_{\bn_k}(\bgamma) - l_{\bn_k}(\bgamma)$ converges uniformly
in probability to the zero function over $\Theta$,
we need to show that,
\begin{equation}
	\sup_{\bgamma\in\Theta}
	\left|
		\widetilde{l}_{\bn_k}(\bgamma) - l_{\bn_k}(\bgamma)
	\right|
	=
	o_p(1),
\end{equation}
as $k$ goes to infinity.

We first observe that, 
given that the quantity $\Ink{\bgamma}^{-1}$
 is deterministic and upper-bounded
independently of $\bgamma$ by $f_{\text{min}, \delta}^{-1}$, 
we can use Proposition~\ref{prop=varianceLinearCombinations} 
to write that
\revision{
	\begin{equation*}
	\var\left\{\widetilde{l}_{\bn_k}(\bgamma) - l_{\bn_k}(\bgamma)
	\right\} =
	\mathcal{O}\left\{
	\frac{\sum_{\bu\in\Z^d}{c_X(\bu)^2 c_g(\bu)}}
	{\sum{g_\bs^2}}
	\right\}, 
	\end{equation*}
	where the big $\mathcal{O}$ does not depend on $\bgamma$. 
	Thus using Chebychev's inequality
	\[	\widetilde{l}_{\bn_k}(\bgamma) - l_{\bn_k}(\bgamma) 
	=\mathcal{O}_P\left\{
	\left(
	\frac{\sum_{\bu\in\Z^d}{c_X(\bu)^2 c_g(\bu)}}
	{\sum{g_\bs^2}}
	\right)^{1/2}
	\right\}
	.
	\]

This holds for any fixed $\bgamma\in\Theta$. In order to establish uniform 
convergence in probability we shall also use smoothness properties
of the expected periodogram.
Let $\epsilon>0$ and $\eta>0$. 
Define the events,
\begin{equation*}
	A_k = 
	\left(
		\sup_{\bgamma\in\Theta}
		\left|
		\widetilde{l}_{\bn_k}(\bgamma) - l_{\bn_k}(\bgamma)
		\right|
		\geq \epsilon
	\right), \quad\forall k\in\N.
\end{equation*}
We wish to show that there exists $k_A\in\N$ such that for all
integer $k\geq k_A, P(A_k)\leq\eta$.
We note that,
\begin{equation*}
A_k = 
\bigcup_{\bgamma\in\Theta}
\left(
\left|
\widetilde{l}_{\bn_k}(\bgamma) - l_{\bn_k}(\bgamma)
\right|
\geq \epsilon
\right), \quad\forall k\in\N.
\end{equation*}
Indeed, inclusion $\supset$ is obvious, whereas 
inclusion $\subset$ follows from the sup being reached
due to the continuity w.r.t $\bgamma$ and the compacity of
$\Theta$.
Let
\begin{equation*}\Delta_{\bn_k}(\bgamma, \bgamma')= 
\widetilde{l}_{\bn_k}(\bgamma) - l_{\bn_k}(\bgamma)
-
(\widetilde{l}_{\bn_k}(\bgamma') - l_{\bn_k}(\bgamma')).
\end{equation*}
We have, by Taylor-expension, 
\begin{align*}
	\Delta_{\bn_k}(\bgamma, \bgamma')
	&=
	|\bn_k|^{-1}\sum_{\bk\in\Omega_\bn}
	{
	\left(
	\frac{1}{\Ink{\bgamma}}
	-
	\frac{1}{\Ink{\bgamma'}}\right)
	\left(\Ink{\btheta}-I_{\bn_k}(\bk)\right)
	}	
	\\
	&=
	|\bn_k|^{-1}\sum_{\bk\in\Omega_\bn}
	\left\{
	\frac{1}{\Ink{\bgamma}^2}\left(\bgamma'-\bgamma\right)^T
	\nabla_\theta\Ink{\widetilde{\bgamma}_{\bk}}
	\left(\Ink{\btheta}-I_{\bn_k}(\bk)\right)
	\right\},
	\\
\end{align*}
where each $\widetilde{\bgamma}_{\bk}$ depends on $\bk$.
Hence, by the triangle inequality,
\begin{align*}
\left|
\Delta_{\bn_k}(\bgamma, \bgamma')
\right|
&
\leq
|\bn_k|^{-1}
\left|
\sum_{\bk\in\Omega_\bn}
{
\frac{1}{\Ink{\bgamma}^2}
\left(\bgamma'-\bgamma\right)^T
\nabla_\theta\Ink{\widetilde{\bgamma}_{\bk}}
I_{\bn_k}(\bk)
}
\right|
\\
&+
|\bn_k|^{-1}
\left|\sum_{\bk\in\Omega_\bn}
{
	\frac{1}{\Ink{\bgamma}^2}
	\left(\bgamma'-\bgamma\right)^T
	\nabla_\theta\Ink{\widetilde{\bgamma}_{\bk}}
	\Ink{\btheta}
}
\right|.
\end{align*}
Using the upper-bound for the norm of the
derivative of the spectral density with respect to the parameter vector,
as well as the lower bound for the spectral density,
we obtain,
\begin{align*}
\left|
\Delta_{\bn_k}(\bgamma, \bgamma')
\right|
&\leq
|\bn_k|^{-1}
f_{\delta, \text{min}}^{-2}M_{\partial\theta}\|\gamma' - \gamma\|
\sum_{\bk\in\Omega_\bn}
\left\{
	I_{\bn_k}(\bk)+ \Ink{\btheta}
\right\}
\\
&=
|\bn_k|^{-1}
f_{\delta, \text{min}}^{-2}M_{\partial\theta}\|\gamma' - \gamma\|
\left(
2\sum_{\bk\in\Omega_\bn}
{
	\Ink{\btheta}
}
+ o_P(1)
\right),
\end{align*}
according to Proposition~\ref{prop=varianceLinearCombinations},
and using SCC.
This implies that we can choose $\delta>0$ small enough such
that there exists a natural integer $k_C$ such that,
\begin{equation*}
	\forall k\geq k_C, \ \forall \bgamma, \bgamma'\in\Theta, \
	\|\bgamma' - \bgamma\| \leq \delta \Longrightarrow
	P\left(
	\left|
	\Delta_{\bn_k}(\bgamma, \bgamma')
	\right|
	 \geq \frac{\epsilon}{2}\right) 
	\leq 
	\frac{\eta}{2.}.
\end{equation*}
Now, 
let $\left\{\balpha_j^\delta\right\}_{j=1, \ldots, J}$ be a finite family
of elements of $\Theta$ such that,
\begin{equation*}
	\bigcup_{j=1}^J B(\balpha_j^\delta, \delta) \supset \Theta,
\end{equation*}
with $B(\balpha_j^\delta, \delta)$ denoting the
ball centered on $\balpha_j^\delta$ with radius $\delta$.
Existence here follows from the compacity of $\Theta$, and the 
positiveness of $\delta$.
Define the events,
\begin{equation*}
B_k = 
\bigcup_{j=1}^J
\left(
|\bn_k|^{-1}
\left|
\sum_{\bk\in\Omega_\bn}
{\frac{\Ink{\btheta}-I_{\bn_k}(\bk)}{\Ink{\balpha_j}}}
\right|
\geq \frac{\epsilon}{2}
\right), \quad\forall k\in\N,
\end{equation*}
and $C_k=A_k\setminus B_k$.
Clearly $B_k\subset A_k$ so that $A_k = B_k \cup C_k$, and 
therefore $P(A_k) \leq P(B_k) + P(C_k)$.
Again by Proposition~\ref{prop=varianceLinearCombinations},
and because $J$ is finite,
there exists $k_B$ such that for any integer
$k\geq k_B$, $P(B_k)\leq\frac{\eta}{2}$.
Finally, for an outcome in $C_k$, there exists $\bgamma'\in\Theta$
such that 
$\left|\widetilde{l}_{\bn_k}(\bgamma') - l_{\bn_k}(\bgamma')\right|
 \geq \epsilon$. By construction, there exists
 $j\in\left\{1,\ldots, J\right\}$ such that
 $\|\balpha_j - \bgamma'\|\leq \delta$, but at the same time
 we have 
 $
 \left|\widetilde{l}_{\bn_k}(\balpha_j) - l_{\bn_k}(\balpha_j)\right|
 \leq \frac{\epsilon}{2}.
 $
 By inverse triangle inequality, we therefore have,
 $\Delta_{\bn}(\balpha_j, \gamma')\geq \frac{\epsilon}{2}$.
 Hence for integer $k\geq k_C$, $P(C_k)\leq \frac{\eta}{2}$.
 We conclude that, with $k_A=\max(k_B, k_c)$,
  for $k\geq k_A, P(A_k)\leq \eta$. Since this can be 
 achieved for any choice of $\eta$, this concludes the proof.

The extension to univariate non-Gaussian random fields follows from
Corollary~\ref{cor=varianceLinearCombinations}.
Similarily, for a Gaussian multivariate random field,
\begin{align*}
\widetilde{l}_{\bn_k}(\bgamma) - l_{\bn_k}(\bgamma) &=
|\bn_k|^{-1}\sum_{\bk\in\Omega_\bn}\left\{
\mathrm{trace}\left[\Ink{\btheta}\Ink{\bgamma}^{-1}\right] - J^H(\bk) {\Ink{\bgamma}}^{-1} J(\bk)\right\}
\end{align*}
and we use Corollary~\ref{cor=varianceSesLinearCombinations}.
}\qed
\end{proof}

\subsection*{Proof of Lemma~\ref{proposition=upperboundnormcovmat}}

\begin{proof}
The proof is adapted from the one-dimensional case,
see~\cite{guillaumin2017analysis} and \cite{sykulski2019debiased}. We
first define the following isomorphism from $\prod_{i=1}^d\{1,\ldots, n_i\}$
to $\{1,\ldots,|\bn|\}$, that will be used for a change of variable:
\begin{equation*}
	j(j_1, \ldots, j_d) = \sum_{k=1}^d\left\{(j_k-1)\prod_{j=1}^{k-1}{n_j}\right\},
\end{equation*}
and $j_1(j), \ldots, j_d(j)$ the component functions of its
inverse. This isomorphism gives the index in the column vector $\bX$
of the observation at location $(j_1, \ldots, j_d)$ on the grid, given
our choice of ordering.

Let $\balpha$ be any complex-valued vector of $\C^n$, and denote $\balpha^*$ its Hermitian transpose. 
We then have, using the above isomorphism for a change of variables,
\begin{align*}
  \balpha^* \Cx \balpha &= \sum_{j,k=1}^{|\bn|}{\balpha_{j}^*(C_{\bX})_{j,k} \balpha_{k}}\\
  &= \sum_{j_1=0}^{n_1-1}\ldots \sum_{j_d=1}^{n_d-1}{
    \sum_{k_1=0}^{n_1-1}\ldots \sum_{k_d=1}^{n_d-1}{
      \balpha_{j(j_1,\ldots, j_d)}^*(C_{\bX})_{j(j_1,\ldots, j_d),k(k_1,\ldots, k_d)}\balpha_{k(k_1,\ldots, k_d)}.
    }
  }
\end{align*}
Here we use the fact that
\begin{equation*}
  (C_{\bX})_{j(j_1,\ldots, j_d),k(k_1,\ldots, k_d)} = c_\bX(k_1-j_1, \ldots, k_d-j_d),
\end{equation*}
so that
\begin{align*}
  \balpha^* \Cx \balpha &= \sum_{j_1=0}^{n_1-1}\ldots \sum_{j_d=1}^{n_d-1}{
    \sum_{k_1=0}^{n_1-1}\ldots \sum_{k_d=1}^{n_d-1}{
      \balpha_{j(j_1,\ldots, j_d)}^*\balpha_{k(k_1,\ldots, k_d)} \int_{\T^d}{
	f_{X,\delta}(\bomega)e^{i\left((k_1-j_1)\omega_1+\ldots+(k_d-j_d)\omega_d\right)}
      }d\bk
    }
  }\\
  &= 
  \int_{\T^d}{
    f_{X,\delta}(\bomega)
    \sum_{j_1=0}^{n_1-1}\ldots \sum_{j_d=1}^{n_d-1}{
      \sum_{k_1=0}^{n_1-1}\ldots \sum_{k_d=1}^{n_d-1}{
	\balpha_{j(j_1,\ldots, j_d)}^*\balpha_{k(k_1,\ldots, k_d)}
	e^{i\left((k_1-j_1)\omega_1+\ldots+(k_d-j_d)\omega_d\right)}
    }}
  }d\bk\\
  &=
  \int_{\T^d}{
    f_{X,\delta}(\bomega)
    \left|
    \sum_{j_1=0}^{n_1-1}\ldots \sum_{j_d=1}^{n_d-1}{
      \balpha_{j(j_1,\ldots, j_d)}
      e^{i\left(j_1\omega_1+\ldots+j_d\omega_d\right)}
    }
    \right|^2
  }d\bk\\
  &\leq f_{\delta,\text{max}}
  \int_{\T^d}{
    \left|
    \sum_{j_1=0}^{n_1-1}\ldots \sum_{j_d=1}^{n_d-1}{
      \balpha_{j(j_1,\ldots, j_d)}
      e^{i\left(j_1\omega_1+\ldots+j_d\omega_d\right)}
    }
    \right|^2
  }d\bk.
\end{align*}
By Parseval's equality, we obtain,
\begin{equation*}
  0\leq \balpha^* \Cx \balpha \leq f_{\delta,\text{max}} \|\balpha\|_2^2,
\end{equation*}
where $\|\balpha\|_2$ is the $l_2$ vector norm of the vector $\balpha$.
This concludes the proof of the upper bound.
The lower bound can be derived in the same way, which concludes the
proof.\qed
\end{proof}

\subsection*{Proof of
  Proposition~\ref{prop=normalityBoundedLinearCombinationsPeriodogram}} 
\label{proof=normalityBoundedLinearCombinationsPeriodogram}

\begin{proof}
We only treat the scenario where $g_\bs = 1,
\forall\bs\in\mathcal{J}_{n_k}$, i.e., we do not consider the situation of
missing observations for this proposition. The proof is adapted
from~\citet[p.~217]{grenander1958toeplitz}. We write it for the 
case of $\Omega_{\bn}=\Omega_{\bn}^{(1)}$, the case
$\Omega_{\bn}=\Omega_{\bn}^{(2)}$ being almost identical. Define
\begin{equation*}
L_k = |\bn_k|^{-1}\sum_{\bk\in\Omega_{\bn_k}}
{
  w_k(\bk)I_{\bn_k}(\bk),
}
\end{equation*}
as a weighted sum of periodogram values, and $U_{\bn_k}$ the
multi-dimensional Fourier matrix corresponding to $\Jx$. We have
\begin{equation*}
L_k = |\bn_k|^{-1}\bX^* U_{\bn_k}^*\text{diag}(w_k(\bk_0), \ldots, w_k(\bk_{|\bn_k|-1}))  
U_{\bn_k} \bX.
\end{equation*}
Writing $W_k = |\bn_k|^{-1} U_{\bn_k}^*\text{diag}(w_k(\bk_0),
\ldots, w_k(\bk_{|\bn_k|-1})) U_{\bn_k}$, we then have
\begin{equation*}
L_k = \bX^*  W_k \bX,
\end{equation*}
which we regard as a quadratic form in the vector $\bX$.
Following~\citet[p.~134]{Cramer1946}, in particular his formula 11.12.2, 
the characteristic
function of the random variable $L_k$ therefore takes the form of
{\small \begin{align*}
\phi_{L_k}(\alpha) &= \E\left\{\exp(i\alpha L_k)\right\}\\ 
& = (2\pi)^{-n/2}
\left|C_X(\btheta)\right|^{-1/2}
\int_{-\infty}^\infty{\ldots
  \int_{-\infty}^\infty{
    \exp\left\{-x^*\left(-i\alpha  W_k 
    +\frac{1}{2}
    C^{-1}_X(\btheta)
    \right)
    x
    \right\}dx_1\ldots dx_n
  }
},
\end{align*}}
where for a square matrix $A$, $|A|$ denotes its determinant.  Using a
known result~\citep{matrixanalysis} for complex-valued symmetric
matrices, there exists a diagonal matrix $D_k$ and a unitary matrix
$V_k$ such that
\begin{equation}
  \label{eq:100620191609}
  -i\alpha W_k
  +\frac{1}{2}
  C^{-1}_X(\btheta) = VD_kV^T.
\end{equation}
By posing the change of variables $y = V^Tx$ we obtain,
\begin{equation*}
\phi_{L_k}(\alpha) = 
(2\pi)^{-n/2}
\left|C_X(\btheta)\right|^{-1/2}
\prod_{j=1}^n{
  \int_{-\infty}^\infty{
    \exp\left\{
    -y^2 d_{j,k}
    \right\}	
    dy,
  }
}
\end{equation*}
where the $d_{j,k}, j=1,\ldots,n$ are the complex-valued elements of
the diagonal matrix $D_k$ from equation~\eqref{eq:100620191609}, and
where we remind the reader that $|V|=1$ since $V$ is unitary.  As we
recognize integrals of the form $\int_{-\infty}^\infty{\exp(-y^2)dy}$
we obtain,
\begin{align*}
\phi_{L_k}(\alpha) &= 
2^{-n/2}
\left|C_X(\btheta)\right|^{-1/2} 
\left|
-i\alpha  W_k
+\frac{1}{2}
C_X(\btheta)^{-1}
\right|^{-1/2}\\
&=
\left|
-2i\alpha C_X(\btheta)W_k
+
I_{\n}
\right|^{-1/2}
\end{align*}
Hence,
\begin{equation*}
\log \phi_{L_k}(\alpha) = -\frac{1}{2}\log \left|I_{|\bn_k|}-2i\alpha C_X(\btheta) 
W_k \right|.
\end{equation*}
Denoting with $\nu_{1,k},\ldots,\nu_{|\bn_k|,k}$ the eigenvalues of 
$C_X(\btheta)  W_k$, we therefore have
\begin{equation*}
\log \phi_{L_k}(\alpha) = -\frac{1}{2}\sum_{j=1}^{|\bn_k|}{
  \log\left(1 - 2i\alpha\nu_{j,k}\right).
}
\end{equation*}
According to Proposition~\ref{proposition=upperboundnormcovmat} the
spectral norm of $C_X$, the covariance matrix of $\bX$, is
upper-bounded by $f_{\text{max},\delta}$. The spectral norm of $W_k$
is clearly upper-bounded by $|\bn_k|^{-1} M_W$, as from the
definition of $W_k$ its eigenvalues are exactly
\begin{equation*}
	|\bn_k|^{-1} w_k(\bk_0), |\bn_k|^{-1} w_k(\bk_1), \ldots, 
|\bn_k|^{-1} w_k(\bk_{|\bn_k|-1}).
\end{equation*}
By property of the spectral norm of a product of matrices, we obtain,
\begin{equation}
\label{eq:1006192035}
|\bn_k|^{-1} m_W f_{\text{min},\delta} \leq |\nu_{j,k}| \leq
|\bn_k|^{-1} M_W f_{\text{max},\delta}, \ \ \forall
j=1,\ldots,|\bn_k|, k\in\N. 
\end{equation}
The variance of $L_k$ is given by
\begin{equation*}
	\sigma_k^2 = \var\left\{L_k\right\} = 
	2\sum_{j=1}^{|\bn_k|}{\nu_{j,k}^2},
\end{equation*}
and therefore satisfies
\begin{equation}
\label{eq:1006192036}
2|\bn_k|^{-1} (m_W f_{\text{min}})^2 \leq \sigma_k^2 \leq 2
|\bn_k|^{-1} (M_W f_{\text{max},\delta})^2. 
\end{equation}
We also observe that
\begin{equation*}
\frac{\nu_{j,k}}{\sigma_k}\rightarrow 0, \ (k\rightarrow\infty),
\end{equation*}
uniformly, given the bounds determined in
equations~\eqref{eq:1006192035} and~\eqref{eq:1006192036}. Denote
$\Ll_k$ the standardized quantity $\left(L_k - \E\{L_k\}\right) /
\sigma_k$. After Taylor expansion of the logarithm terms to third
order, its characteristic function takes the form of
\begin{align}
  \nonumber\log \phi_{\Ll_k}(\alpha) &= -\frac{1}{2}\sum_{j=1}^{|\bn_k|}{
    \log\left(1 - \frac{2i\alpha\nu_{j,k}}{\sigma_k}\right)
  } 
  - i\frac{\alpha \sum_{j=1}^{|\bn_k|}{\nu_{j,k}}}{\sigma_k}\\
  & = -\frac{1}{2}\alpha^2 + 
  \sum_{j=1}^{|\bn_k|}\left[
    \frac{4}{3}\left(\frac{i\alpha\nu_{j,k}}{\sigma_k}\right)^3
    +o\left\{\left(\frac{i\alpha\nu_{j,k}}{\sigma_k}\right)^3\right\}
    \right]
  \label{eq:1006192039},
\end{align}
where the small o is uniform and is denoted $\epsilon_k$ in what
follows, to make it clear that it does not depend on $j$. The second
term in equation~\eqref{eq:1006192039} can be shown to become
negligible as $k$ goes to infinity, since
\begin{align*}
\left|
\sum_{j=1}^{|\bn_k|}\left[
  \frac{4}{3}\left(\frac{i\alpha\nu_{j,k}}{\sigma_k}\right)^3
  +o\left\{\left(\frac{i\alpha\nu_{j,k}}{\sigma_k}\right)^3\right\}
  \right]
\right|
&\leq \alpha^3 \sigma_k^{-3}\left(\frac{4}{3}+\epsilon_k\right)
\sum_{j=1}^{|\bn_k|}
    {
      \left|\nu_{j,k}\right|^3
    }\\
    &\leq 
    \alpha^3 \left(\frac{4}{3}+\epsilon_k\right)
    \frac{|\bn_k|^{-2}M_W^3 f_{\text{max}}^3}
	 {|\bn_k|^{-3/2}m_W^3 f_{\text{min}}^3}\\
	 &=\mathcal{O}(|\bn_k|^{-1/2}).
\end{align*}
We conclude that $\phi_{\Ll_k}(\alpha)$ converges to
$\exp(-\frac{1}{2}\alpha)$, and therefore $L_k$ is asymptotically
standard normally distributed after appropriate normalization.\qed
\end{proof}

\subsection*{Proof of Theorem~\ref{th=asymptNormality}}
\begin{proof}
Direct calculations show that the gradient of our quasi-likelihood function
at the true parameter
vector is given by,
\begin{equation}
\label{eq:gradientoflkh}
  \nabla_\theta l_{\bn_k}(\btheta) = 
  |\bn_k|^{-1}\sum_{\bk\in\Omega_{\bn_k}}{
    \overline{I}_{\bn_k}(\bk;\btheta)^{-2}
    \left(
    \overline{I}_{\bn_k}(\bk;\btheta)
    -
    I(\bk)
    \right)
    \nabla_\theta \overline{I}_{\bn_k}(\bk;\btheta)
  }.
\end{equation}
By expanding this gradient function at the
true parameter value, and noting that
$\nabla_\theta l_{\bn_k}(\bk;\hat{\btheta})=0$ by definition of $\hat{\btheta}$
and given
Assumption~\ref{assumption:rate}.\ref{ass:interior}, we obtain
\begin{equation*}
  \nabla_\theta l_{\bn_k}(\bk;\btheta) = 
  H(\btheta_k')(\btheta-\hat{\btheta}_k),
\end{equation*}
where $H(\cdot)$ is the Hessian of $l_{\bn_k}(\cdot)$ and $\btheta_k'$
is a parameter vector that converges in probability to the true
parameter vector, since $\hat{\btheta}_k$ is consistent as per
Theorem~\ref{th:consistency}. Therefore,
\begin{equation}
  \label{eq:21001405}
  \hat{\btheta}_k - \btheta = -H^{-1}(\btheta_k')
  \nabla_\theta l_{\bn_k}(\bk;\btheta).
\end{equation}
We now study the expected Hessian of the likelihood
function taken at the true parameter vector, 
$\mathcal{H}(\btheta)$.
Direct calculations lead to
\begin{equation*}
  \mathcal{H}(\btheta) = |\bn_k|^{-1}\sum_{\bk\in\Omega_{\bn_k}}{
    \overline{I}_{\bn_k}(\bk;\btheta)^{-2}
    \nabla_\theta \overline{I}_{\bn_k}(\bk;\btheta)
    \nabla_\theta \overline{I}_{\bn_k}(\bk;\btheta)^T.
  }
\end{equation*}

It can be shown, see \citet[p.~17 of their supplementary
  document]{sykulski2019debiased} for instance, that in
equation~\eqref{eq:21001405} the quantity $H(\btheta_k')$ satisfies, if
Assumption~\ref{assumption:rate}.\ref{ass:rate2} holds,
\begin{equation*}
H(\btheta_k') = \mathcal{H}(\btheta) + \mathcal{O}_P(r_k) + o_P(1).
\end{equation*}
Hence we have, asymptotically,
\begin{equation}
\label{eq:12321806}
H^{-1}(\btheta_k') = \mathcal{H}^{-1}(\btheta) + o_P(1).
\end{equation}
Since equation~\eqref{eq:gradientoflkh} follows the conditions required for
Proposition~\ref{prop=varianceLinearCombinations} to apply, the gradient at
the true parameter vector $\nabla_\theta l_{\bn_k}(\bk;\btheta)$ is itself
$\mathcal{O}_P(r_k)$.  Further more, Lemma~\ref{lemma:mineigenvalueHessian}
tells us that the minimum eigenvalue of $\mathcal{H}$ is lower-bounded by
$S(\btheta)$, independently of $k$. We finally obtain the stated result,
\begin{equation*}
\hat{\btheta}_k - \btheta= \mathcal{O}_P(r_k).
\end{equation*}
In the case of a sequence of full grids, 
$|\bn_k|^{1/2}\nabla_\theta l_{\bn_k}(\bk;\btheta)$
is additionally shown to follow a standard normal distribution via
Proposition~\ref{prop=normalityBoundedLinearCombinationsPeriodogram}, and
we conclude to the asymptotic normality of our estimator.\qed
\end{proof}

\subsection*{Definitions, notation and lemmatas required for the proof of Proposition~\ref{prop:asympNormalNonGaussian}}
	First we introduce some notation for cumulants and remind 
	the reader about 
	their basic properties.
	For integer $L\geq1$ and random variables $Y_1, \ldots, Y_L$,
	all having finite $L$-th order moments,
	the cumulant of $Y_1, \ldots, Y_L$ is defined by,
	\begin{equation*}
	\cum\left[Y_1, \ldots, Y_L\right]
	= \sum_{\nu\in \mathcal{P}\left\{1, \ldots, L\right\}}{
		(-1)^{\#\nu - 1}\left(\#\nu - 1\right)! \prod_{\mathcal{S}\in\nu}\E\left[\prod_{j\in\mathcal{S}}Y_j\right]
	},
	\end{equation*}
	where $ \mathcal{P}\left\{1, \ldots, L\right\}$ denotes the set of partitions of $\left\{1, \ldots, L\right\}$, 
	and $\#\nu$ denotes the cardinality of the partition $\nu$, i.e. the number of sets it contains.
	The cases $L=1$ and $L=2$ correspond to expectation and covariance respectively. Higher-order
	cumulants vanish for multivariate normal $Y_1, \ldots, Y_L$.
	For a given random variable $Y$, we denote $\cum_L\left\{ Y \right\}$ its
	$L$-th order cumulant,
	i.e. $\cum_L\left\{ Y \right\} = \cum(Y, \ldots, Y)$ with $Y$ repeated 
	$L$ times. In our proof we shall make use of the two following lemmas, that
	can be found in \cite{brillinger2001time}.
	\begin{lemma}[Basic properties of cumulants]
		\label{lemma:cumulantsbasics}
		Let $L$ be a positive integer, $Z$, $Y_1, \ldots Y_{L}$ be
		random variables all having finite $L$-th order
		moments, and $a\in\R$. We have the following properties;
		\begin{enumerate}
			\item \emph{Symmetry.} The cumulant $\cum\left\{
			Y_1, \ldots, Y_{L}
			\right\}
			$ does not depend on the order of the variables.
			\item \emph{Multi-linearity.} The cumulant is linear with 
			respect to each of its variables, i.e.
			\begin{equation*}
			\cum\left\{
			aZ + Y_1, Y_2, \ldots, Y_L
			\right\}
			=
			a \ \cum\left\{
			Z, Y_2 \ldots, Y_L
			\right\}
			+
			\cum\left\{
			Y_1, \ldots, Y_L
			\right\}.
			\end{equation*}
		\end{enumerate}
	\end{lemma}
	\begin{lemma}[Cumulant of products of random variables]
		\label{lemma:cumofproducts}
		Let $L$ be a positive integer. \\ Let $Y_1, \ldots, Y_{2L}$ be random variables, all having finite $2L$-th order moments. We have,
		\begin{equation*}
		\cum\left[
		Y_{1}Y_{L+1}, Y_{2}Y_{L+2}, \ldots, Y_{L}Y_{2L}
		\right]
		=	\sum_{\nu}
		\cum
		\left[
		Y_{j}: j\in\nu_1
		\right]
		\ldots
		\cum
		\left[
		Y_{j}: j\in\nu_p
		\right],
		\end{equation*}
		where the left-hand side is the cumulant of $L$ products of
		pairs of random variables, and
		where the summation on the right-hand side
		 is over indecomposable---as defined by~\cite{brillinger2001time}---partitions $\nu=(\nu_1, \ldots,\nu_p)$ of the $L\times 2$ table below.
		\begin{align}
		\label{eq:table}
		\nonumber 1 &\ \  & L+1 \\
		\nonumber 2 &\ \  & L+2 \\
		\ldots& \  \ & \ldots \\
		\nonumber L-1 &  \ \ & 2L-1 \\
		\nonumber L &  \ \ & 2L
		\end{align}
		A partition $\nu$ of the above $L\times 2$ table is indecomposable if and only if any two elements of the table can be joined by a path where two consecutive elements on said path are either within a same set $\mathcal{S}\in\nu$ or on 
		the same row. We give an example of an indecomposable partition in
		Figure~\ref{fig:indecomposable}.
	\end{lemma}

	\begin{figure}
		\centering
		\includegraphics[width=0.25\linewidth]{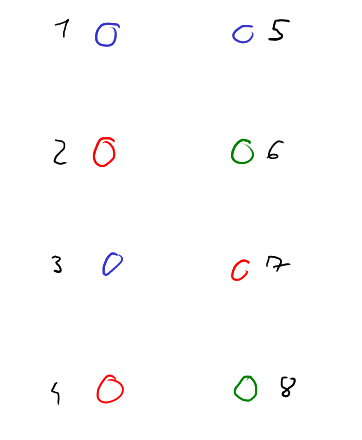}
		\caption{Example of an indecomposable partition
			of the $L\times 2$ table~\eqref{eq:table} in the case $L=4$, with sets of the
			partition indicated by colors red, green and blue.
			The partition is indecomposable because any
			 two elements of the table can be joined by a path where two consecutive elements on said path are either within a same set $\mathcal{S}\in\nu$ or on 
			the same row. For instance, here, such a path between $1$ and $8$ is
			$1 \rightarrow 3 \rightarrow 7 \rightarrow 4 \rightarrow 8$.}
		\label{fig:indecomposable}
	\end{figure}

	To establish the proof of Proposition~\ref{prop:asympNormalNonGaussian}
	we shall follow the line of proof from~\cite{brillinger2001time}
	for the analysis of time series. We introduce some
	additional notation in order to extend to random fields of
	any dimensionality $d$.
	Let $\Delta_\bn(\bk)$ denote the $d$-dimensional separable
	Dirichlet kernel, i.e.,
	\begin{equation*}
	\Delta_\bn(\bk) = \prod_{j=1}^d{
		\left(
		\sum_{t=0}^{n_j-1}e^{i \omega_j t}
		\right)
		=\prod_{j=1}^d{\Delta_{n_j}(\omega_j)}
		,
	}
	\end{equation*}
	where for a positive integer $n$, and scalar $\omega\in\R$,
	\begin{equation*}
	\Delta_n(\omega) = 
	\sum_{t=0}^{n-1}e^{i \omega t},
	\end{equation*}
	is the usual Dirichlet kernel.
	We define, for $0\leq q \leq d$, and 
	for any $l_1, \ldots, l_q\in\{1, \ldots, d\}$,
	\begin{equation}
	\label{eq:deltak}
	\Delta_\bn^{(l_1, \ldots, l_q)}(\bk) = 
	\prod_{\substack{j=1\\j\neq l_1, \ldots, l_q}}^d{
		\Delta_{n_j}(\omega_j).
	}
	\end{equation}
	Note that when $\bk\in\Omega_{\bn}$, i.e. is a Fourier frequency,
	$\Delta_\bn(\bk) = \Delta_\bn^{(k)}(\bk) = 0$ except if
	$\omega_j \equiv 0 \ [2\pi]$, $j=1,\ldots,d$ (where we write
	$a\equiv b \ [c]$ for real numbers $a,b,c$ if there exists an
	integer $k$ such that $a - b = kc$) in which case $\Delta(\bk) = |\bn|$,
	and except if 
	$\omega_j \equiv 0 \ [2\pi]$, $j=1,\ldots,d, \ j\neq k$
	in which case
	$\Delta_\bn^{(k)}(\bk) = \prod_{\substack{j=1\\j\neq k}}^d{n_j}$.
	
	Let $\widetilde{J}_\bn(\bomega) = |\bn|^{\frac{1}{2}}J_\bn(\bomega)$.
	The following lemma is an adaptation of~\citet[Lemma P4.1]{brillinger2001time} to higher dimensions.
	
	\begin{lemma}
		\label{lemma:ineqDelta}
		Let $d\geq 1$ be an integer. Let $\bn\in\left(\N\backslash\{0\}\right)^d$, 
		$\bu\in\N^d$ and 
		$\blambda\in\R^d$. Let $\left\{g_{\bs}\right\}_{\bs\in\Z^d}$ take value $1$ for $\bs\in\mathcal{J}_\bn$,
		and value $0$ otherwise.
		We have the following inequality,
		\begin{equation*}
		\left|
		\sum_{\bs\in\Z^{d}}
		{
			g_{\bs}g_{\bs+\bu}
			e^{-i\bs\cdot\blambda}
		}
		-
		\Delta_{\bn}(\blambda)
		\right|
		\leq
		\sum_{j=1}^{d}{u_j\left|\Delta_\bn^{(j)}(\blambda)\right|}
		+
		\sum_{\substack{j,k=1 \\ k > j}}^{d}{u_j u_k
			\left|\Delta_\bn^{(j, k)}(\blambda)\right|}
		+
		\ldots
		+u_1\cdots u_{d}\left|\Delta_\bn^{(1, \ldots, d)}(\blambda)\right|,
		\end{equation*}
	where we note that $\Delta_\bn^{(1, \ldots, d)}(\blambda)=1, \
	 \forall\blambda\in\R^d$.
	\end{lemma}
	\begin{proof}
		We write,
		\begin{equation*}
		\left|
		\sum_{\bs\in\Z^{d}}
		{
			g_{\bs}g_{\bs+\bu}
			e^{-i\bs\cdot\blambda}
		}
		-
		\Delta_{\bn}(\blambda)
		\right|
		=
		\left|
		\sum_{\bs\in\Z^d}
		g_{\bs}g_{\bs+\bu}
		e^{-i\blambda \cdot \bs}
		- \sum_{\bs\in\Z^d}
		g_{\bs}
		e^{-i\blambda \cdot \bs}
		\right|
		=
		\left|
		\sum_{\bs\in\Z^d}
		g_{\bs}
		(1 - g_{\bs+\bu})
		e^{-i\blambda \cdot \bs}
		\right|.
		\end{equation*}
		We first consider the cases $d=1$ and $d=2$ as examples, before proving 
		the result for any dimensionality $d\geq 1$ by induction.
		For $d=1$, we have $u\in\N$  and, by applying the triangle inequality,
		\begin{align*}
		\left|
		\sum_{s}
		g_{s}(1 - g_{s+u})
		e^{-i\lambda s}
		\right|
		\leq
		\sum_{s}
		\left|g_{s}
		(1 - g_{s+u})
		e^{-i\lambda s}
		\right|
		=\sum_{s}
		|g_{s}|
		\left|
		(1 - g_{s+u})
		\right|
		= u.
		\end{align*}
		The last equality holds because each term $|g_{s}|
		\left|
		(1 - g_{s+u})
		\right|$ is non-zero if and only if $s$ is a point on the grid
		but $s+u$ is not, which occurs for a total number of
		$u$ locations --- more specifically for $s\in\Z$ such
		that
		$n_1-u\leq s \leq n_1 - 1$.
		In dimension $d=2$, we \emph{split} the problem along both
		dimensions.
		\begin{align*}
		&\left|
		\sum_{\bs\in\Z^2}
		g_{\bs}g_{\bs+\bu}
		e^{-i\blambda \cdot\bs}
		-
		\Delta_n\left(\blambda\right)
		\right|
		= \left|
		\sum_{\bs}
		g_{\bs}
		(1 - g_{\bs+\bu})
		e^{-i\blambda \cdot \bs}
		\right|
		\\
		&=
		\left|
		\sum_{s_1=0}^{n_1-1}{
			\sum_{s_2=0}^{n_2-1}
			(1 - 
			g_{s_1+u_1, s_2+u_2})
			e^{-i(\lambda_1 s_1 + \lambda_2 s_2)}
		}
		\right|
		\\
		&=
		\left|
		\sum_{s_1=0}^{n_1-1}\sum_{s_2=n_2-u_2}^{n_2-1}{
			e^{-i(\lambda_1 s_1 + \lambda_2 s_2)}
		}
		+
		\sum_{s_2=0}^{n_2}\sum_{s_1=n_1-u_1}^{n_1-1}{
			e^{-i(\lambda_1 s_1 + \lambda_2 s_2)}
		}
		-
		\sum_{s_2=n_2-u_2}^{n_2-1}\sum_{s_1=n_1-u_1}^{n_1-1}{
			e^{-i(\lambda_1 s_1 + \lambda_2 s_2)}
		}
		\right|,
		\end{align*}
		where we split the sum over non-zero terms, using the 
		fact that,
		\begin{align*}
		&\left\{(s_1, s_2)\in\Z^2:
		g_{s_1, s_2}
		(1 - 
		g_{s_1+u_1, s_2+u_2})=1
		\right\}
		=\\& \left\{
		(s_1, s_2)\in\Z^2: 0\leq s_1 < n_1
		\right\}
		\cap
		\left\{
		(s_1, s_2)\in\Z^2: 0\leq s_2 < n_2
		\right\}
		\cap\\&
		\left(
		\left\{
		(s_1, s_2)\in\Z^2: s_1\geq n_1-u_1 
		\right\}
		\cup
		\left\{
		(s_1, s_2)\in\Z^2: s_2\geq n_2-u_2 
		\right\}
		\right),
		\end{align*} and that 
		\begin{equation*}
		\sum_{A\cup B} = \sum_A + \sum_B - \sum_{A\cap B}.
		\end{equation*}
		Then by
		applying the triangle inequality we obtain,
		\begin{align*}
		&\left|
		\sum_{s}
		g_{\bs}g_{\bs+\bu}
		e^{-i\blambda \cdot\bs}
		-
		\Delta_n\left(\blambda\right)
		\right|
		\\
		&
		\leq
		\left|
		\sum_{s_1=0}^{n_1-1}\sum_{s_2=n_2-u_2}^{n_2-1}{
			e^{-i(\lambda_1 s_1 + \lambda_2 s_2)}
		}
		\right|
		+
		\left|
		\sum_{s_2=0}^{n_2-1}\sum_{s_1=n_1-u_1}^{n_1-1}{
			e^{-i(\lambda_1 s_1 + \lambda_2 s_2)}
		}
		\right|
		+
		\left|
		\sum_{s_2=n_2-u_2}^{n_2-1}\sum_{s_1=n_1-u_1}^{n_1}{
			e^{-i(\lambda_1 s_1 + \lambda_2 s_2)}
		}
		\right|
		\\
		&
		=
		\left|
		\sum_{s_2=n_2-u_2}^{n_2-1}{
			e^{-i \lambda_2 s_2}
			\sum_{s_1=0}^{n_1-1}
			e^{-i\lambda_1 s_1 }
		}
		\right|
		+
		\left|
		\sum_{s_1=n_1-u_1}^{n_1-1}{
			e^{-i\lambda_1 s_1}
			\sum_{s_2=0}^{n_2-1}
			e^{-i \lambda_2 s_2}
		}
		\right|
		+
		\left|
		\sum_{s_2=n_2-u_2}^{n_2-1}\sum_{s_1=n_1-u_1}^{n_1}{
			e^{-i(\lambda_1 s_1 + \lambda_2 s_2)}
		}
		\right|
		\\
		&
		=
		\left|
		\Delta_{n_1}(\lambda_1)
		\sum_{s_2=n_2-u_2}^{n_2-1}{
			e^{-i \lambda_2 s_2}
		}
		\right|
		+
		\left|
		\Delta_{n_2}(\lambda_2)
		\sum_{s_1=n_1-u_1}^{n_1-1}{
			e^{-i\lambda_1 s_1}
		}
		\right|
		+
		\left|
		\sum_{s_2=n_2-u_2}^{n_2-1}\sum_{s_1=n_1-u_1}^{n_1}{
			e^{-i(\lambda_1 s_1 + \lambda_2 s_2)}
		}
		\right|
		\\
		&
		\leq
		\left|\Delta_{n_1}(\lambda_1)\right|
		\sum_{s_2=n_2-u_2}^{n_2-1}{
			\left|
			e^{-i \lambda_2 s_2}
			\right|
		}
		+
		\left|\Delta_{n_2}(\lambda_2)\right|
		\sum_{s_1=n_1-u_1}^{n_1-1}{
			\left|
			e^{-i\lambda_1 s_1}
			\right|
		}
		+
		\sum_{s_2=n_2-u_2}^{n_2-1}\sum_{s_1=n_1-u_1}^{n_1}{
			\left|
			e^{-i(\lambda_1 s_1 + \lambda_2 s_2)}
			\right|
		}
		\\
		&
		=
		u_2
		\left|\Delta_{n_1}\left(\lambda_1\right)\right|
		+
		u_1
		\left|\Delta_{n_2}\left(\lambda_2\right)\right|
		+
		u_1u_2
		=
		u_1\left|\Delta_{\bn}^{(1)}(\blambda)\right|+
		u_2\left|\Delta_{\bn}^{(2)}(\blambda)\right|+
		u_1u_2.
		\end{align*}

		We now prove the result for any dimensionality $d\geq 1$ by induction on
		$d$.
		\begin{itemize}
			\item We already proved the result for the case $d=1$.
			\item Assume the property holds up to a given $d\geq 1$.
			Let $\bu\in\N^{d+1}$.
			Given any $\mathbf{v}\in\Z^{d+1}$, we denote $\mathbf{v}^{(d+1)}\in\Z^d$ the vector
			with components $v_1, \ldots, v_d$.
			We will make use of this notation for several vectors in the rest of 
			the proof.
			 We observe that,
			\begin{align*}
			&\left\{
			\bs\in\Z^{d+1}:
			g_{\bs}(1 - g_{\bs+\bu}) = 1
			\right\}
			=\\
			&
			\left\{
			\bs\in\Z^{d+1}:
			g_{\bs} = 1
			\right\}
			\cap \left(
			\left\{
			\bs\in\Z^{d+1}:
			g_{s_{d+1}+u_{d+1}} = 0
			\right\}
			\cup
			\left\{
			\bs\in\Z^{d+1}:
			g_{\bs^{(d+1)}+\bu^{(d+1)}} = 0
			\right\}
			\right)=\\
			&
			\left(
			\left\{
			\bs\in\Z^{d+1}:
			g_{\bs} = 1
			\right\}
			\cap 
			\left\{
			\bs\in\Z^{d+1}:
			g_{s_{d+1}+u_{d+1}} = 0
			\right\}
			\right)
			\cup\\&
			\left(
			\left\{
			\bs\in\Z^{d+1}:
			g_{\bs} = 1
			\right\}
			\cap
			\left\{
			\bs\in\Z^{d+1}:
			g_{\bs^{(d+1)}+\bu^{(d+1)}} = 0
			\right\}
			\right).
			\end{align*}
			Let 
			\begin{align*}
			A &= \left\{
			\bs\in\Z^{d+1}:
			g_{\bs}(1 - g_{\bs+\bu}) = 1
			\right\},\\
			B &= \left\{
			\bs\in\Z^{d+1}:
			g_{\bs} = 1
			\right\}
			\cap 
			\left\{
			\bs\in\Z^{d+1}:
			g_{s_{d+1}+u_{d+1}} = 0
			\right\},\\
			C &=
			\left\{
			\bs\in\Z^{d+1}:
			g_{\bs} = 1
			\right\}
			\cap
			\left\{
			\bs\in\Z^{d+1}:
			g_{\bs^{(d+1)}+\bu^{(d+1)}} = 0
			\right\}.
			\end{align*}
			The idea here is that we \emph{split} the problem
			between the last dimension (set $B$) and the 
			$d$ first dimensions taken altogether
			(set $C$).
			We then have, since $A=B\cup C$,
			\begin{equation*}
			\sum_{\bs\in A}{e^{i\bs\cdot\blambda}}
			=
			\sum_{\bs\in B}{e^{i\bs\cdot\blambda}}
			+
			\sum_{\bs\in C}{e^{i\bs\cdot\blambda}}
			-
			\sum_{\bs\in B\cap C}{e^{i\bs\cdot\blambda}},
			\end{equation*}
			and by the triangle inequality,
			\begin{equation*}
			\label{eq:dlficsdfd}
			\left|\sum_{\bs\in A}{e^{i\bs\cdot\blambda}}\right|
			\leq
			\left|\sum_{\bs\in B}{e^{i\bs\cdot\blambda}}\right|
			+
			\left|\sum_{\bs\in C}{e^{i\bs\cdot\blambda}}\right|
			+
			\left|\sum_{\bs\in B\cap C}{e^{i\bs\cdot\blambda}}\right|.
			\end{equation*}
			We consider each term separately.
			Firstly,
			\begin{align*}
			\left|\sum_{\bs\in B}{e^{i\bs\cdot\blambda}}\right|
			&=
			\left|
				\sum_{s_1=0}^{n_1-1}
				\ldots
				\sum_{s_d=0}^{n_d-1}
				\sum_{s_{d+1}=n_{d+1}- u_{d+1}}^{n_{d+1}-1}
				e^{i\sum_{j=1}^{d+1}{s_j\lambda_j}}
			\right|\\
			&=
			\left|
			\left(\sum_{s_{d+1}=n_{d+1}- u_{d+1}}^{n_{d+1}-1}
			e^{i{s_{d+1}\lambda_{d+1}}}
			\right)
			\left(
			\sum_{s_1=0}^{n_1-1}
			\ldots
			\sum_{s_d=0}^{n_d-1}
			e^{i\sum_{j=1}^{d}{s_j\lambda_j}}
			\right)
			\right|\\
			&=
			\left|
			\sum_{s_{d+1}=n_{d+1}- u_{d+1}}^{n_{d+1}-1}
			e^{i{s_{d+1}\lambda_{d+1}}}
			\right|
			\left|
			\sum_{s_1=0}^{n_1-1}
			\ldots
			\sum_{s_d=0}^{n_d-1}
			e^{i\sum_{j=1}^{d}{s_j\lambda_j}}
			\right|\\
			&\leq
			u_{d+1}\left|\Delta_\bn^{(d+1)}(\blambda)\right|.
			\end{align*}
			Secondly, using the fact that the property holds up to
			dimensionality
			$d$,
			\begin{align*}
			\left|\sum_{\bs\in C}{e^{i\bs\cdot\blambda}}\right|
			&=
			\left|\sum_{\bs\in\Z^{d+1}}{\mathbbm{1}_C(\bs) e^{i\bs\cdot\blambda}}\right|\\
			&=
			\left|
				\sum_{\bs\in\Z^{d+1}}
				{
				g_{\bs}(1 - g_{\bs^{(d+1)} + \bu^{(d+1)}})
				}
				e^{i\sum_{j=1}^{d+1}{s_j\lambda_j}}
			\right|
			\\
			&=
			\left|
			\left(
			\sum_{s_{d+1}=0}^{n_{d+1}}e^{i{s_{d+1}\lambda_{d+1}}}
			\right)
			\left(
			\sum_{\bs\in\Z^{d}}
			{
				g_{\bs}(1 - g_{\bs + \bu^{(d+1)}})
			}
			e^{i\sum_{j=1}^{d}{s_j\lambda_j}}
			\right)
			\right|
			\\
			&\leq
			\left|
			\sum_{s_{d+1}=0}^{n_{d+1}-1}e^{i s_{d+1} \lambda_{d+1}}
			\right|
			\left(
			\sum_{j=1}^d{u_j\left|\Delta_{\bn^{(d+1)}}^{(j)}(\blambda)\right|}
			+\sum_{\substack{j,k=1 \\ k > j}}^d{u_j u_k
				\left|\Delta_{\bn^{(d+1)}}^{(j, k)}(\blambda)\right|}
			+\ldots
			+u_1\cdots u_d
			\right)\\
			&=
			\sum_{j=1}^d{u_j\left|\Delta_{\bn}^{(j)}(\blambda)\right|}
			+\sum_{\substack{j,k=1 \\ k > j}}^d{u_j u_k\left|\Delta_{\bn}^{(j, k)}(\blambda)\right|}
			+\ldots
			+u_1\cdots u_d \left|\Delta_\bn^{(1, \ldots, d)}(\blambda)\right|,
			\end{align*}
			where in the last equality we used the fact that
			$	\left|
			\sum_{s_{d+1}=0}^{n_{d+1}-1}e^{i s_{d+1} \lambda_{d+1}}
			\right|\left|\Delta_{\bn^{(d+1)}}^{(j)}(\blambda)\right|
			=
			\left|\Delta_{\bn}^{(j)}(\blambda)\right|$.
			Thirdly, again using the fact that the property holds up to
			dimensionality
			$d$, 
			\begin{align*}
			\left|\sum_{\bs\in B\cap C}{e^{i\bs\cdot\blambda}}\right|
			&\leq
			\left|
			\sum_{s_{d+1}=n_{d+1} - u_{d+1}}^{n_{d+1}-1}e^{i s_{d+1} \lambda_{d+1}}
			\right|\\
			&\times
			\left(
			\sum_{j=1}^d{u_j\left|\Delta_{\bn}^{(j)}(\blambda)\right|}
			+\sum_{\substack{j,k=1 \\ k > j}}^d{u_j u_k
				\left|\Delta_{\bn^{(d+1)}}^{(j, k)}(\blambda)\right|}
			+\ldots
			+u_1\cdots u_d
			\right)\\
			&\leq
			u_{d+1}
			\left(
			\sum_{j=1}^d{u_j\left|\Delta_{\bn^{(d+1)}}^{(j)}(\blambda)\right|}
			+\sum_{\substack{j,k=1 \\ k > j}}^d{u_j u_k
				\left|\Delta_{\bn^{(d+1)}}^{(j, k)}(\blambda)\right|}
			+\ldots
			+u_1\cdots u_d
			\right)\\
			&=
			u_{d+1}
			\left(
			\sum_{j=1}^d{u_j\left|\Delta_{\bn}^{(j, d+1)}(\blambda)\right|}
			+\sum_{\substack{j,k=1 \\ k > j}}^d{u_j u_k\left|\Delta_{\bn}^{(j, k, d+1)}(\blambda)\right|}
			+\ldots
			+u_1\cdots u_d
			\right)
			\\
			&=
			\sum_{j=1}^d{u_ju_{d+1}\left|\Delta_{\bn}^{(j, d+1)}(\blambda)\right|}
			+\sum_{\substack{j,k=1 \\ k > j}}^d{u_j u_ku_{d+1}
				\left|\Delta_{\bn}^{(j, k, d+1)}(\blambda)\right|}
			+\ldots
			+u_1\cdots u_{d+1}.
			\end{align*}
			Substituting these expressions into~\eqref{eq:dlficsdfd}, we obtain,
			\begin{equation*}
			\left|
			\sum_{\bs\in\Z^{d+1}}
			{
				g_{\bs}g_{\bs+\bu}
				e^{i\bs\cdot\blambda}
			}
			-
			\Delta_{\bn}(\blambda)
			\right|
			\leq
			\sum_{j=1}^{d+1}{u_j\left|\Delta_\bn^{(j)}(\blambda)\right|}
			+
			\sum_{\substack{j,k=1 \\ k > j}}^{d+1}{u_j u_k
				\left|\Delta_\bn^{(j, k)}(\blambda)\right|}
			+
			\ldots
			+u_1\cdots u_{d+1},
			\end{equation*}
			which is exactly the desired property for dimensionality $d+1$.
		\end{itemize}
		By induction, we conclude that the property holds for any dimensionality $d$.
		\qed
	\end{proof}
	As an example, in dimension $d=3$, the inequality takes the following form,
	\begin{align*}
	\left|
	\sum_{\bs\in\Z^{3}}
	{
		g_{\bs}g_{\bs+\bu}
		e^{i\bs\cdot\blambda}
	}
	-
	\Delta_{\bn}(\blambda)
	\right|&\leq
	u_1\left|\Delta_{\bn}^{(1)}(\blambda)\right|
	+
	u_2\left|\Delta_{\bn}^{(2)}(\blambda)\right|
	+
	u_3\left|\Delta_{\bn}^{(3)}(\blambda)\right|\\&
	+
	u_2u_3 \left|\Delta_{\bn}^{(2, 3)}(\blambda)\right|
	+
	u_1u_3 \left|\Delta_{\bn}^{(1, 3)}(\blambda)\right|
	+
	u_1u_2 \left|\Delta_{\bn}^{(1, 2)}(\blambda)\right|
	\\&
	+u_1 u_2 u_3.
	\end{align*}
	
	We now use this result to approximate the $L$-th order cumulant of 
	the multi-dimensional DFT.
	
	\begin{lemma}[$L$-th order cumulants of the DFT]
		\label{lemma:cumulantsDFT}
		Suppose Assumption~\ref{ass:asymptoticNormality2} holds.
		For an integer $L\geq 2$, and
		$\bk_1, \ldots, \bk_L\in\R^d$, we have,
		\begin{align*}
		\cum_L\left\{
		\widetilde{J}_\bn(\bk_1),
		\ldots,
		\widetilde{J}_\bn(\bk_L)
		\right\}
		=
		\Delta_\bn\left(
		\sum_{j=1}^L{
			\bk_j
		}
		\right)
		f_L(\bk_1, \ldots, \bk_{L-1})
			+
			\mathcal{O}
			\left(
				\Lambda\left(\sum_{j=1}^L{
					\bk_j
				}\right)
			\right),
		\end{align*}
		where $f_L$ is the $L$-th cumulant spectral density and where
		we have defined,
		\begin{equation}
			\Lambda(\blambda) = 
			\sum_{j=1}^d{
				\left|\Delta_{\bn}^{(j)}(\blambda)\right|
			}
			+\sum_{\substack{j,k=1 \\ k > j}}^d{
				\left|\Delta_{\bn}^{(j, k)}(\blambda)\right|
			}
			+\ldots
			+1,
		\end{equation}
		and where the $\mathcal{O}(\cdot)$ does not depend on
		$\bk_1, \ldots, \bk_L$.
	\end{lemma}
	\begin{proof}
		By properties of cumulants, see Lemma~\ref{lemma:cumulantsbasics},
		 direct calculations give,
		\begin{align}
		\nonumber&\cum_L\left\{
		\widetilde{J}_\bn(\bk_1),
		\ldots,
		\widetilde{J}_\bn(\bk_L)
		\right\}
		=
		\sum_{\bs_1, \ldots, \bs_L\in\Z^d}{
			\cum(X_{\bs_1}, \ldots, X_{\bs_L})
			g_{\bs_1}\ldots g_{\bs_L}
			e^{-i\sum_{j=1}^L{\bk_j \cdot\bs_j}}
		}\\
		\nonumber&=
		\sum_{\bs_1}\sum_{\bu_1, \ldots, \bu_{L-1}}{
			c_L(\bu_1, \ldots, \bu_{L-1})
			g_{\bs_1}g_{\bs_1+\bu_1}\ldots g_{\bs_1+\bu_{L-1}}
			e^{-i\sum_{j=1}^{L-1}{\bk_j \cdot\bu_j}}
			e^{-i\sum_{j=1}^L{\bk_j \cdot\bs_1}}
		}\\
		&=
		\label{eq:formcumulantJ}
		\sum_{\bu_1, \ldots, \bu_{L-1}}{
			c_L(\bu_1, \ldots, \bu_{L-1})
			e^{-i\sum_{j=1}^{L-1}{\bk_j \cdot\bu_j}}
			\sum_{\bs_1}
			g_{\bs_1}g_{\bs_1+\bu_1}\ldots g_{\bs_1+\bu_{L-1}}
			e^{-i\sum_{j=1}^L{\bk_j \cdot\bs_1}}
		}.
		\end{align}
		Suppose for convenience that $\bu_1, \ldots, \bu_{L-1}$
		all have non-negative components. 
		The general case can be treated similarly, please
		see our comment on this at the end of this
		proof.
		Additionally, denote $\tilde{\bu}\in\N^d$ as the vector defined
		by 
		\begin{equation}
			\label{eq:utildadef}
			\tilde{u}_j=\max\{\bu_k \cdot \be_j: k=1, \ldots, L-1\},
		\end{equation}
		where $\be_j, j=1,\ldots,d$ denotes the $d$-vector with all components
		set to zero except for the $j$-th component which is set to $1$, 
		such that
		$\bu_k\cdot \be_j$ is the $j$-th component of $\bu_k$.
		The right-most term of~\eqref{eq:formcumulantJ}
		can be approximated
		using the fact that, for $\blambda\in\R^d$,
		\begin{align*}
		&\left|
		\sum_{\bs_1\in\Z^d}
		g_{\bs_1}g_{\bs_1+\bu_1}\ldots g_{\bs_1+\bu_{L-1}}
		e^{-i\blambda \cdot\bs_1}
		-
		\Delta_\bn\left(\blambda\right)
		\right|
		=
		\left|
		\sum_{\bs_1}
		g_{\bs_1}
		\left(
		g_{\bs_1+\bu_1}\ldots g_{\bs_1+\bu_{L-1}}
		-1
		\right)
		e^{-i\blambda \cdot\bs_1}
		\right|\\
		=
		&\left|
		\sum_{\bs_1}
		g_{\bs_1}
		\left(g_{\bs_1+\tilde{\bu}}
		-1
		\right)
		e^{-i\blambda \cdot\bs_1}
		\right|,
		\end{align*}
		due to assuming that the grid is fully observed
		and setting $g_\bs=1$ on the grid and $0$ otherwise.
		For instance, in the case $L=3$, 
		we have for $\bs_1\in\Z^d$,
		$
		g_{\bs_1} g_{\bs_1+\bu_1} g_{\bs_1+\bu_{2}} = 1
		\iff
		\bs_1\in\mathcal{J}_{\bn} \text{ and }
		\bs_1 + \bu_1 \in \mathcal{J}_\bn \text{ and } 
		\bs_1 + \bu_2 \in \mathcal{J}_\bn \iff 
		\bs_1\in\mathcal{J}_{\bn} \text{ and }
		\bs_1+\tilde{\bu}\in\mathcal{J}_{\bn}
		\iff
		g_{\bs_1}
		g_{\bs_1+\tilde{\bu}} = 1
		$.

		According to Lemma~\ref{lemma:ineqDelta}, we therefore have,
		\begin{equation*}
		\left|
		\sum_{\bs\in\Z^{d}}
		{
			g_{\bs}g_{\bs+\widetilde{\bu}}
			e^{i\bs\cdot\blambda}
		}
		-
		\Delta_{\bn}(\blambda)
		\right|
		\leq
		\sum_{j=1}^{d}{\widetilde{u}_j\left|\Delta_\bn^{(j)}(\blambda)\right|}
		+
		\sum_{\substack{j,k=1 \\ k > j}}^{d}{\widetilde{u}_j \widetilde{u}_k
			\left|\Delta_\bn^{(j, k)}(\blambda)\right|}
		+
		\ldots
		+\widetilde{u}_1 \cdots\widetilde{u}_{d}.
		\end{equation*}
		We use the inequality $\widetilde{u}_1\ldots\widetilde{u}_d
		\leq \left(\max_{i=1,\ldots,d}{\widetilde{u}_i}\right)^d
		\leq
		\widetilde{u}_1^d + \ldots + \widetilde{u}_d^d$ (by
		definition of $\widetilde{\bu}$ its components are non-negative)
		 and obtain,
		\begin{equation*}
		\left|
		\sum_{\bs\in\Z^{d}}
		{
			g_{\bs}g_{\bs+\widetilde{\bu}}
			e^{i\bs\cdot\blambda}
		}
		-
		\Delta_{\bn}(\blambda)
		\right|
		\leq
		\left(
		\widetilde{u}_1^d + \ldots + \widetilde{u}_d^d
		\right)
		\left(
		\sum_{j=1}^{d}{\left|\Delta_\bn^{(j)}(\blambda)\right|}
		+
		\sum_{\substack{j,k=1 \\ k > j}}^{d}{
			\left|\Delta_\bn^{(j, k)}(\blambda)\right|}
		+
		\ldots
		+1
		\right).
		\end{equation*}
		Now given our definition of
		$\tilde{\mathbf{u}}=\begin{pmatrix} \tilde{u}_1 & \ldots & \tilde{u}_d \end{pmatrix}^T$, see~\eqref{eq:utildadef},
		 we have $\widetilde{u}_1^d + \ldots + \widetilde{u}_d^d
		\leq \|\bu_1\|_1^d + \ldots + \|\bu_{L-1}\|_1^d$,
		and therefore,
		\begin{equation}
		\label{eq:finalupperbound}
			\left|
			\sum_{s}
			g_{\bs}g_{\bs+\tilde{\bu}}
			e^{-i\blambda \cdot\bs}
			-
			\Delta_n\left(\blambda\right)
			\right|
			\leq
			\left(
			\|\bu_1\|_1^d + \ldots + \|\bu_{L-1}\|_1^d
			\right)
			\left(
			\sum_{j=1}^{d}{\left|\Delta_\bn^{(j)}(\blambda)\right|}
			+
			\sum_{\substack{j,k=1 \\ k > j}}^{d}{
				\left|\Delta_\bn^{(j, k)}(\blambda)\right|}
			+
			\ldots
			+1
			\right)
		\end{equation}
		Finally, going back to~\eqref{eq:formcumulantJ}, 
		we write 
		\begin{equation*}
		\sum_{\bs_1}
		g_{\bs_1}g_{\bs_1+\bu_1}\ldots g_{\bs_1+\bu_{L-1}}
		e^{-i\sum_{j=1}^L{\bk_j \cdot\bs_1}} = 
		\Delta_\bn\left(\sum_{j=1}^L{\bk_j \cdot\bs_1}\right)
		+ 
		\mathcal{E}\left(
		\sum_{j=1}^L{\bk_j \cdot\bs_1}
		\right),
		\end{equation*}
		 with
		\begin{equation*}
		\mathcal{E}\left(
		\blambda\right)
		=
		\sum_{\bs_1}
		g_{\bs_1}g_{\bs_1+\bu_1}\ldots g_{\bs_1+\bu_{L-1}}
		e^{-i\blambda}
		-
		\Delta_\bn\left(\blambda\right),
		\end{equation*}
		where for simplicity we do not denote explicitly the dependence of 
		$\mathcal{E}(\cdot)$ on $\bu_1, \ldots, \bu_{L-1}$.
		We then use the upper-bound~\eqref{eq:finalupperbound} we derived for $\left|\mathcal{E}(\blambda)\right|$,
		and Assumption~\ref{ass:asymptoticNormality2} on the
		summability of cumulants to obtain,
		\begin{equation*}
		\left|
			\sum_{\bu_1, \ldots, \bu_{L-1}}{
				c_L(\bu_1, \ldots, \bu_{L-1})
				e^{-i \blambda \cdot\bu_j}
				\mathcal{E}\left(\blambda\right)
			}
		\right|
		=\mathcal{O}
		\left(
		\sum_{j=1}^d{\left|\Delta_{\bn}^{(j)}(\blambda)\right|}
		+\sum_{\substack{j,k=1 \\ k > j}}^d{\left|\Delta_{\bn}^{(j, k)}(\blambda)\right|}
		+\ldots
		+1
		\right).
		\end{equation*}
		This concludes the proof.
		We now comment on how to adapt the proof to
		the case where $\bu_1, \ldots, \bu_{L-1}$ are not
		restricted to having non-negative components.
		This is achieved by replacing~\eqref{eq:utildadef}
		with,
		\begin{align*}
		\tilde{u}_j^+&=\max\{0, 
		\max\{\bu_k \cdot \be_j: k=1, \ldots, L-1\}\}\\
		\tilde{u}_j^-&=\max\{0, 
		\max\{-\bu_k \cdot \be_j: k=1, \ldots, L-1\}\}.
		\end{align*}
		This is because when
		 allowing for negative components, we have to treat
		both boundaries of the domain along each dimension $j=1, \cdots, d$.
		This is accounted for in the final formula in the $\mathcal{O}(\cdot)$.
		\qed
	\end{proof}
	
	In the proof of Proposition~\ref{prop:ordercumulants} of 
	this Supplementary Material, when
	expressing the cumulant of order $L$ of the periodogram evaluated
	at Fourier frequencies $\bomega_1\ldots, \bomega_L\in\Omega_\bn$ in terms
	of cumulants of the DFT (which we studied in Lemma~\ref{lemma:cumulantsDFT} of 
	this Supplementary Material), we will need to understand the order
	of terms of the form
	\begin{equation}
	\sum_{\bomega_1,\ldots,\bomega_L\in\Omega_{\bn}}
	\prod_{\nu_r\in\nu}\Delta_{\bn}\left(\sum_{j\in\nu_r}{\bomega_j}\right),
	\end{equation}
	where $\nu$ is an indecomposable partition of the $L\times 2$ table
	given in~\eqref{eq:table} and where we set $\bomega_{k+L} = -\bomega_k,
	\quad k=1,\ldots,L$ (see Figure~\ref{fig:partitionexample} and compare
	to the $L\times 2$ table~\eqref{eq:table}).
	While the $\Delta_{\bn}(\cdot)$ function can take value $|\bn|$, this only occurs
	under linear constraints on the $\bomega_1, \ldots, \bomega_L$. 
	For example, in the case $L=4$ and for the partition of the $L\times 2$
	table represented
	in Figure~\ref{fig:partitionexample}, we get the following set of
	linear constraints on the Fourier frequencies,
	\begin{equation}
	\begin{cases}
	\bomega_1 - \bomega_1 + \bomega_2 + \bomega_3 \equiv 0 \ [2\pi]\\
	\bomega_4 - \bomega_2 \equiv 0 \ [2\pi]\\
	-\bomega_4 - \bomega_3 \equiv 0 \ [2\pi]
	\end{cases},
	\end{equation}
	two of which are linearly independent.
	\begin{figure}
		\centering
		\includegraphics[width=0.25\linewidth]{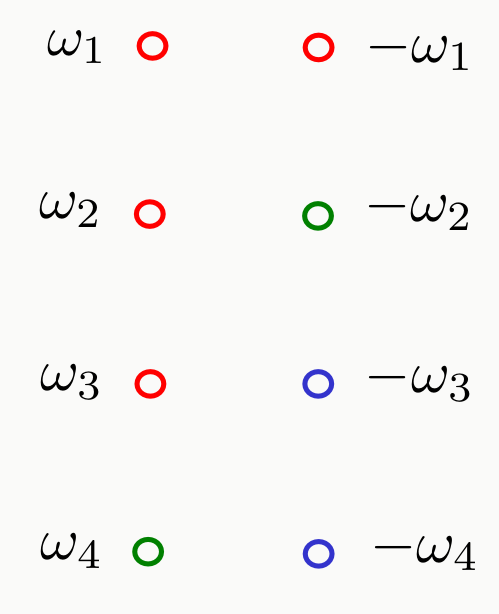}
		\caption{Example of an indecomposable partition of a $4\times 2$
			table that is used in expressing the $4$-th order
			cumulants of the periodogram
			at frequencies $\bk_1, \bk_2, \bk_3, \bk_4$ in terms of cumulants of the DFT at frequencies
			$\bk_1, -\bk_1, \bk_2, -\bk_2, \bk_3, -\bk_3, \bk_4,
			-\bk_4$. The chosen indecomposable
		partition has 3 sets, indicated by the colors red, green and
	blue.}
		\label{fig:partitionexample}
	\end{figure}
	The following
	lemma makes this property explicit.
	
	\begin{lemma}
		\label{lemma:constraints}
		Let $\nu=(\nu_1, \ldots, \nu_p)$ be an indecomposable partition
		of the $L\times 2$ table, where $L$ is a positive integer.
		The following system of linear equations in 
		$(\bomega_1, \ldots, \bomega_L)$
		\begin{equation}
		(\mathcal{S}_\nu)
		\begin{cases}
		\sum_{j\in\nu_1}{\bomega_j} &\equiv 0 \ [2\pi]\\
		&\vdots\\
		\sum_{j\in\nu_p}{\bomega_j} &\equiv 0 \ [2\pi]
		\end{cases}
		\end{equation}
		imposes $p-1$ linear constraints on the 
		$(\bomega_1, \ldots, \bomega_L)$.
	\end{lemma}
	\begin{proof}
		We remind the reader that we have defined 
		$\bomega_{k+L}\equiv - \bomega_k$ for $k=1, \ldots, L$.
		The proof is done by induction on the number of sets $p$ in
		the partition.
		\begin{itemize}
			\item In the case $p=1$, the partition consists of
			a unique set, and each $\bomega_j$ in the summation is cancelled out
			by $\bomega_{L+j}$. Hence
			$\sum_{j\in\nu_1}{\bomega_j} \equiv 0 \ [2\pi] 
			\iff 0 \equiv 0 \ [2\pi]$, so that there are no linear
			constraints and the property holds for $p=1$.
			\item Suppose the property holds up to a given positive integer
			$p$. We want to show that it also holds for any partition with
			$p+1$ sets. Therefore, let $\nu=(\nu_1, \ldots, \nu_{p+1})$
			be
			an indecomposable partition of the $L\times 2$ table with
			cardinality $p+1$. Without
			loss of generality, we assume that the ordering of the sets
			is such that $\nu_1$ communicates with $\nu_2$, i.e. there
			exists $k\in\{1, \ldots, L\}$ such that $k\in\nu_1$ and
			$L+k\in\nu_2$ (or the reverse case, but again we can treat
			either of these two cases without loss of generality),
			 i.e. the $k$-th row has one element
			that belongs to $\nu_1$ and the other one to $\nu_2$, since
			in the $L\times 2$ table $k$ and $L+k$ are on the same row.
			
			We then observe that we can rewrite the system
			\begin{equation}
			\label{eq:system1}
			(\mathcal{S}_\nu)
			\begin{cases}
			\sum_{j\in\nu_1}{\bomega_j} &\equiv 0 \ [2\pi]\\
			&\vdots\\
			\sum_{j\in\nu_p+1}{\bomega_j} &\equiv 0 \ [2\pi]
			\end{cases},
			\end{equation}
			as 
			\begin{equation}
			\label{eq:system2}
			(\mathcal{S}_\nu)
			\begin{cases}
			\sum_{j\in\nu_1}{\bomega_j} &\equiv 0 \ [2\pi]\\
			\sum_{j\in\nu_1\cup\nu_2}{\bomega_j} &\equiv 0 \ [2\pi]\\
			\sum_{j\in\nu_3}{\bomega_j} &\equiv 0 \ [2\pi]\\
			&\vdots\\
			\sum_{j\in\nu_p}{\bomega_j} &\equiv 0 \ [2\pi]
			\end{cases},
			\end{equation}
			where the second equation in~\eqref{eq:system2} is obtained by summing
			the first two equations in~\eqref{eq:system1}, using the fact
			that $\nu_1\cap\nu_2 = 0$, by definition of a partition.
			Based on this partition $\nu$, we define a new
			partition $\widetilde{\nu}
			= (\nu_1\cup\nu_2, \nu_3, \ldots, \nu_{p+1})$. The set
			$\widetilde{\nu}$ is clearly a partition of the $L\times 2$ table,
			and it has $p$ sets. Additionally, one can verify
			that this new partition $\widetilde{\nu}$ is also indecomposable.
			The solution space to $\mathcal{S}_\nu$ is therefore the intersection
			between the solution spaces to $\mathcal{S}_{\widetilde{\nu}}$ and 
			$\sum_{j\in\nu_1}{\bomega_j} \equiv 0 \ [2\pi]$.
			
			By assumption, $\mathcal{S}_{\widetilde{\nu}}$ 
			enforces $p-1$ linear constraints on $(\bomega_1, \ldots, \bomega_L)$. It therefore suffices to show that
			$\sum_{j\in\nu_1}{\bomega_j} \equiv 0 \ [2\pi]$ and the system $\mathcal{S}_{\widetilde{\nu}}$ are linearly independent.
			Or, equivalently, that
			there exists a set of values of $(\bomega_1, \ldots, \bomega_L)$
			that is a solution of
			$\mathcal{S}_{\widetilde{\nu}}$ but such that 
			$\sum_{j\in\nu_1}{\bomega_j} \not\equiv 0 \ [2\pi]$.
			Such a set of values is obtained by setting all components
			equal to zero modulo $2\pi$, except for the $k$-th and
			$L+k$-th components,
			where $k$ was defined earlier in this proof as the row on
			which the sets $\nu_1$ and $\nu_2$ communicate. More precisely
			we set $\bk_k=a$ and $\bk_{L+k}=-a$ where $a$ is chosen 
			such that $a \not\equiv 0 \ [2\pi]$.
			Hence the number of linear constraints enforced by $\mathcal{S}_\nu$
			on
			$(\bk_1, \ldots, \bk_L)$
			is $p - 1 + 1 = (p+1)-1$, so that the property also
			holds for any partition with $p+1$ sets.
		\end{itemize}
		By induction, since we proved the result for the partition of
		cardinality $p=1$,
		 we can conclude that the property holds for any indecomposable
		partition.\qed
	\end{proof}
	
	We can now proceed to determine an upper-bound for the higher-order
	cumulants of linear functionals of the periodogram, following the proof of ~\citet[Theorem~5.10.1]{brillinger2001time}.
	
	\begin{proposition}
		\label{prop:ordercumulants}
		Let $L$ be a positive integer. We have,
		\begin{equation}
		\label{eq:cumLperfunctionalresult}
		\cum_L\left\{
		|\bn|^{-1}\sum_{\bk\in\Omega_{\bn}}
		{
			w(\bk)
			I_{\bn}(\bk)
		}
		\right\}
		=
		\mathcal{O}\left(
		|\bn|^{
			1 - L
		}
		\right).
		\end{equation}
	\end{proposition}
	\begin{proof}
		Using the properties of cumulants given in
		Lemma~\ref{lemma:cumulantsbasics} from this
		Supplementary Material, we have
		\begin{align*}
		&\cum_L\left\{
		|\bn|^{-1}\sum_{\bk\in\Omega_{\bn}}
		{
			w(\bk)
			I_{\bn}(\bk)
		}
		\right\} =\\& \quad \quad \quad \quad \quad \quad \quad \quad
		|\bn_k|^{-L}
		\sum_{\bk_1, \ldots, \bk_L\in\Omega_{\bn_k}}
		{
			w_k(\bk_1)
			\ldots 
			w_k(\bk_L)
			\cum
			\left[
			I_{\bn_k}(\bk_1), 
			\ldots,
			I_{\bn_k}(\bk_L)
			\right]
		}.
		\end{align*}
		According to Lemma~\ref{lemma:cumofproducts} of this Supplementary
		material, we obtain,
		\begin{align}
		\label{eq:cumulantsperiodogram2}
		\nonumber\cum
		\left[
		I_{\bn_k}(\bk_1), 
		\ldots,
		I_{\bn_k}(\bk_L)
		\right]
		&=
		\cum
		\left[
		J_{\bn_k}(\bk_1)J_{\bn_k}(-\bk_1), 
		\ldots,
		J_{\bn_k}(\bk_L)J_{\bn_k}(-\bk_L)
		\right]
		\\
		&=
		\sum_\nu{
			\cum\left[
			J_{\bn_k}(\bk_j):j\in\nu_1
			\right]
			\ldots
			\cum\left[
			J_{\bn_k}(\bk_j):j\in\nu_p
			\right]
		},
		\end{align}
		where the summation is over indecomposable partitions
		$\nu=(\nu_1, \ldots, \nu_p)$
		of the $L\times 2$ table~\eqref{eq:table}, and where we define 
		$\bk_{j+L}\equiv-\bk_j, j=1,\ldots,L$.
		Hence, reminding the reader that we write
		$\widetilde{J}_\bn(\bomega) = |\bn|^{\frac{1}{2}}J_\bn(\bomega)$,
		\begin{align}
		\label{eq:cumLperfunctional}
		&\nonumber\cum_L\left\{
		|\bn|^{-1}\sum_{\bk\in\Omega_{\bn}}
		{
			w(\bk)
			I_{\bn}(\bk)
		}
		\right\} =
		\\&\nonumber |\bn_k|^{-L}
		\sum_{\bk_1, \ldots, \bk_L\in\Omega_{\bn_k}}
		{
			w_k(\bk_1)
			\ldots 
			w_k(\bk_L)
			\sum_\nu{
				\cum\left[
				J_{\bn_k}(\bk_j):j\in\nu_1
				\right]
				\ldots
				\cum\left[
				J_{\bn_k}(\bk_j):j\in\nu_p
				\right]
			}
		}
		\\&\nonumber=|\bn_k|^{-2L}
		\sum_{\bk_1, \ldots, \bk_L\in\Omega_{\bn_k}}
		{
			w_k(\bk_1)
			\ldots 
			w_k(\bk_L)
			\sum_\nu{
				\cum\left[
				\widetilde{J}_{\bn_k}(\bk_j):j\in\nu_1
				\right]
				\ldots
				\cum\left[
				\widetilde{J}_{\bn_k}(\bk_j):j\in\nu_p
				\right].
			}
		}
		\\&=|\bn_k|^{-2L}
		\sum_\nu
		\sum_{\bk_1, \ldots, \bk_L\in\Omega_{\bn_k}}
		{
			w_k(\bk_1)
			\ldots 
			w_k(\bk_L)
			\prod_{r=1}^p{
				\cum\left[
				\widetilde{J}_{\bn_k}(\bk_j):j\in\nu_p
				\right]
			}
		}.
		\end{align}
		We now make use of Lemma~\ref{lemma:cumulantsDFT}
		in which we obtained an expression for the terms
		\begin{equation*}
		\cum\left[
		\widetilde{J}_{\bn_k}(\bk_j):j\in\nu_p
		\right], \quad r=1,\ldots,p,
		\end{equation*} which appear in the product in~\eqref{eq:cumLperfunctional}.
		This leads us to,
		\begin{align*}
			&\nonumber\cum_L\left\{
			|\bn|^{-1}\sum_{\bk\in\Omega_{\bn}}
			{
				w(\bk)
				I_{\bn}(\bk)
			}
			\right\} =
			\\&
			|\bn_k|^{-2L}
			\sum_\nu
			\sum_{\bk_1, \ldots, \bk_L\in\Omega_{\bn_k}}
				w_k(\bk_1)
				\ldots 
				w_k(\bk_L)
				\\
				&
				\times
				\prod_{r=1}^p\left\{
						f_{m_r+1}(\bk_l: l\in\nu_r)
						\Delta_\bn\left(
						\sum_{l\in\nu_r}{
							\bk_j
						}
						\right)
						\right.
						+
						\left.
						\mathcal{O}
						\left(
						\sum_{j=1}^d{\left|\Delta_{\bn}^{(j)}
							\left(
							\sum_{l\in\nu_r}{
								\bk_j
							}
							\right)
							\right|}
						+\sum_{\substack{j,k=1 \\ k > j}}^d{
							\left|\Delta_{\bn}^{(j, k)}
							\left(
							\sum_{l\in\nu_r}{
								\bk_j
							}
							\right)
							\right|}
						+\ldots
						+1
						\right)
				\right\}
			,
		\end{align*}
		where $p$ is the cardinality of the partition $\nu$, and for each
		set $\nu_r, \ r=1,\cdots,p$, of the partition, $m_r$ is the the 
		cardinality of the set $\nu_r$. Additionally, 
		$f_{k}(\cdots)$ is the $k$-th order cumulant spectral density.
		Note that the slight abuse of notation $f_{m_r+1}(\bk_l: l\in\nu_r)$
		makes sense since the cumulant spectral densities are symmetric,
		due to the symmetry of the cumulants themselves.
		
		Now to determine the order of this term for a given indecomposable partition
		$\nu$, we introduce some additional
		notation, and follow the reasoning found in~\cite{brillinger2001time} 
		for the analysis of time series.
		For $r=1, \ldots, p$, let $q_r\in\{0,\ldots, d\}$ and
		$l^\pr = l_1^\pr,\ldots, l_{q_r}^\pr\in\{1, \ldots, d\}$. Expanding the previous
		expression for that given partition $\nu$ 
		will lead to a sum of terms of the form,
		\begin{equation}
			\label{eq:epofciefdopsi}
			\left\{
			\prod_{j=1}^d{n_j^{-2L}}
			\right\}
			\sum_{\bk_1, \ldots, \bk_L\in\Omega_\bn}
			\left\{
			\prod_{r=1}^p
			\left|\Delta_\bn^{(l_1^\pr, \ldots, l_{q_r}^\pr)}
			\left(
			\sum_{j\in\nu_r}{
				\bk_j
			}
			\right)\right|
			\right\},
		\end{equation}
		ignoring multiplicative constants and the $w_k(\cdot)$ terms for simplicity, 
		as the latter are upper-bounded in absolute value by assumption.
		
		Now for a given $r=1, \ldots, p$, 
		$\Delta_\bn^{(l_1^\pr, \ldots, l_{q_r}^\pr)}
		\left(
		\sum_{j\in\nu_r}{
			\bk_j
		}
		\right)$ will be zero (since the $\bk_j$'s are Fourier frequencies)
		unless $\sum_{j\in\nu_r}\omega_{j,k} \equiv 0 \ [2\pi], \ \forall k\in\overline{l^\pr}$,
		where $\overline{l^\pr}$ denotes the complementary of $l^\pr$ within the 
		set $\{1, \ldots, d\}$. In the latter case,  
		$\Delta_\bn^{(l_1^\pr, \ldots, l_{q_r}^\pr)}
		\left(
		\sum_{j\in\nu_r}{
			\bk_j
		}
		\right)$
		will take value $\prod_{j\in\overline{l^\pr}} n_j$.
		For each dimension $j=1, \ldots, d$, denote $\mathcal{S}_j$ the system
		of linear equations expressing the constraints on the $j$-th dimension
		between $(\bk_1, \ldots, \bk_L)$ due to 
		$
		\prod_{r=1}^p
		\Delta_\bn^{(l_1^\pr, \ldots, l_{q_r}^\pr)}
		\left(
		\sum_{j\in\nu_r}{
			\bk_j
		}
		\right)
		$. We also define 
		$\kappa_j = \sum_{r=1}^p \mathbbm{1}_{j\in\overline{l^\pr}}$
		for each dimension $j=1,\cdots, d$,
		and note that $\mathcal{S}_j$ is a system of $\kappa_j$ linear equations, 
		$0\leq \kappa_j \leq p$.
		Then~\eqref{eq:epofciefdopsi} becomes,
		\begin{align}
			\label{eq:dfclidugdi}
			&\nonumber\left\{\prod_{j=1}^d{n_j^{-2L}}\right\}
			\sum_{\bk_1, \ldots, \bk_L\in\Omega_\bn}
			\left\{\prod_{j=1}^d
			{
				n_j^{\kappa_j}
			}
			\right\}
			\left\{\prod_{j=1}^d
			{
				\mathbbm{1}_{(\omega_{1,j}, \ldots, \omega_{L, j})\in \mathcal{S}_j}
			}
			\right\}
		\\&=
		\left\{\prod_{j=1}^d{n_j^{-2L}}
			n_j^{\kappa_j}
		\right\}
		\sum_{\bk_1, \ldots, \bk_L\in\Omega_\bn}
		\prod_{j=1}^d
		\mathbbm{1}_{(\omega_{1,j}, \ldots, \omega_{L, j})\in \mathcal{S}_j}
		\end{align}
		where we make a slight abuse of notation by confounding $\mathcal{S}_j$ and its
		solution set. Finally,~\eqref{eq:dfclidugdi} becomes, with $\#\mathcal{S}_j$ the 
		cardinality of $S_j\cap\Omega_{n_j}$,
		\begin{equation*}
			\prod_{j=1}^d{n_j^{-2L}
				n_j^{\kappa_j}
				\#\mathcal{S}_j.
			}
		\end{equation*}
		However, we have $\#\mathcal{S}_j \leq n_j^{L - \kappa_j + 1}$, by generalization of
		Lemma~\ref{lemma:constraints} of this Supplementary Material,
		according to which $\mathcal{S}_j$ imposes at least $\kappa_j - 1$
		independent constraints.
		Thus the term of interest is at most of order
		\begin{equation*}
		\prod_{j=1}^d{n_j^{-2L}
			n_j^{\kappa_j}
			n_j^{L - \kappa_j + 1}
		} 
		=	\prod_{j=1}^d{n_j^{1-L}}
		= \left(\prod_{j=1}^d{n_j}\right)^{1-L}
		= |\bn|^{1-L},
		\end{equation*}
		which concludes the proof.\qed
	\end{proof}

\subsection*{Proof of Proposition~\ref{prop:asympNormalNonGaussian}}
\begin{proof}
	\begin{enumerate}
		\item \textit{Asymptotic normality.}
	We first consider the case of a grid growing to infinity in all 
	directions, i.e. $\Omega_{\bn}=\Omega_{\bn}^{(1)}$. Under the considered set of assumptions, i.e.
	 Assumption~\ref{ass:asymptoticNormality2}, the variance of 
	\[  	|\bn|^{-1}\sum_{\bk\in\Omega_{\bn}}
	{
		w_k(\bk)
		I_{\bn}(\bk),
	}
	\]
	is $\Theta\left(|\bn|^{-1}\right)$. 
	In order to establish
	asymptotic normality we therefore wish to show that
	the rescaled quantity
	$	|\bn|^{-1/2}\sum_{\bk\in\Omega_{\bn}}
	{
		w_k(\bk)
		I_{\bn}(\bk)
	}
	$ has cumulants of order 3 or greater that all converge to zero.
	According to Proposition~\ref{prop:ordercumulants}, the $L$-th
	order cumulant of \\$	|\bn|^{-1/2}\sum_{\bk\in\Omega_{\bn}}
	{
		w_k(\bk)
		I_{\bn}(\bk)
	}
	$
	is
	$
	\mathcal{O}\left(
	|\bn|^{
		-\frac{L}{2}+1
	}
	\right)
	$, which indeed converges to zero for $L\geq 3$.
	Thus we conclude that $	|\bn|^{-1}\sum_{\bk\in\Omega_{\bn}}
	{
		w_k(\bk)
		I_{\bn}(\bk)
	}
	$ is asymptotically normally distributed.
	The proof readily extends to vector-valued functions $\mathbf{w}_k(\cdot)$.
	In the case where one or more dimensions of the domain are bounded,
	$\Omega_{\bn}=\Omega_{\bn}^{(2)}$, and we prove the result by 
	splitting the summation into the summation over $\Omega_{\bn}^{(1)}$
	and $\Omega_{\bn}^{(2)}\setminus\Omega_{\bn}^{(1)}$. Each term
	is treated as above, and we obtain a sum of two asymptotically normal
	random variables.
	
	\item \textit{Asymptotic form of the variance.}
	For this part, it is assumed that the grid grows to infinity
	in all directions, which is a constraint on the observation
	domain.
	We remind the reader that in that case we choose $\Omega_{\bn}=
	\Omega_\bn^{(1)}$.
	We treat the case of scalar-valued $w_k(\cdot)$, but again the
	proof readily extends to vector-valued functions.
	We have,
	\begin{align*}
	\var\left\{
	\frac{1}{|\bn|}\sum_{\bk\in\Omega_{\bn}}{
		w_k(\bk)I_{\bn}(\bk)
	}
	\right\}
	&=
	\frac{1}{|\bn|^2}
	\sum_{\bk_1, \bk_2\in\Omega_\bn}{
		w_k(\bk_1)
		w_k(\bk_2)
		\cov\left\{
		I_{\bn}(\bk_1)
		,
		I_{\bn}(\bk_2)
		\right\}
	}\\
	&=
	\frac{1}{|\bn|^2}
	\sum_{\bk_1, \bk_2\in\Omega_\bn}{
		w_k(\bk_1)
		w_k(\bk_2)
		\cov\left\{
		J_{\bn}(\bk_1)J_{\bn}(-\bk_1)
		,
		J_{\bn}(\bk_2)J_{\bn}(-\bk_2)
		\right\}
	}\\
	&=
	\frac{1}{|\bn|^4}
	\sum_{\bk_1, \bk_2\in\Omega_\bn}{
		w_k(\bk_1)
		w_k(\bk_2)
		\cov\left\{
		\widetilde{J}_{\bn}(\bk_1)\widetilde{J}_{\bn}(-\bk_1)
		,
		\widetilde{J}_{\bn}(\bk_2)\widetilde{J}_{\bn}(-\bk_2)
		\right\},
	}
	\end{align*}
	where we remind the reader that we defined
	$\widetilde{J}_\bn(\bomega) = |\bn|^{\frac{1}{2}}J_\bn(\bomega)$.
	Making use of Lemma~\ref{lemma:cumofproducts} from this Supplementary
	Material, we have,
	\begin{align}
	\label{eq:141021}
	\nonumber\cov\left\{
	\widetilde{J}_{\bn}(\bk_1)\widetilde{J}_{\bn}(-\bk_1)
	,
	\widetilde{J}_{\bn}(\bk_2)\widetilde{J}_{\bn}(-\bk_2)
	\right\}
	=&
	\cov\left\{
	\widetilde{J}_{\bn}(\bk_1), \widetilde{J}_{\bn}(\bk_2)
	\right\}
	\cov\left\{
	\widetilde{J}_{\bn}(-\bk_1), \widetilde{J}_{\bn}(-\bk_2)
	\right\}\\
	&\nonumber+
	\cov\left\{
	\widetilde{J}_{\bn}(\bk_1), \widetilde{J}_{\bn}(-\bk_2)
	\right\}
	\cov\left\{
	\widetilde{J}_{\bn}(-\bk_1), \widetilde{J}_{\bn}(\bk_2)
	\right\}\\
	&+\cum_4\{\widetilde{J}_{\bn}(\bk_1), \widetilde{J}_{\bn}(\bk_2),
	\widetilde{J}_{\bn}(-\bk_1), \widetilde{J}_{\bn}(-\bk_2)\},
	\end{align}
	the remaining terms being zero since 
	$\E\{\widetilde{J}_\bn(\bk_1)\} = \E\{\widetilde{J}_\bn(\bk_2)\} = 0$
	as the random field is zero-mean.
	With Lemma~\ref{lemma:cumulantsDFT} ,
	\begin{align*}
	&\cov\left\{
	\widetilde{J}_{\bn}(\bk_1), \widetilde{J}_{\bn}(\bk_2)
	\right\}
	=
	\cov\left\{
	\widetilde{J}_{\bn}(-\bk_1), \widetilde{J}_{\bn}(-\bk_2)
	\right\}=
	\\
		&f_{X,\delta}(\bk_1)
		\Delta_{\bn}\left(\bk_1 + \bk_2\right)
	+
	\mathcal{O}
	\left(
	\sum_{j=1}^d{\left|\Delta_{\bn}^{(j)}(\bk_1 + \bk_2)\right|}
	+\sum_{\substack{j,k=1 \\ k > j}}^d{\left|\Delta_{\bn}^{(j, k)}(\bk_1 + \bk_2)\right|}
	+\ldots
	+1
	\right),
	\end{align*}
	as well as,
	\begin{align*}
	&\cov\left\{
	\widetilde{J}_{\bn}(\bk_1), \widetilde{J}_{\bn}(-\bk_2)
	\right\}
	=
	\cov\left\{
	\widetilde{J}_{\bn}(-\bk_1), \widetilde{J}_{\bn}(\bk_2)
	\right\}=
	\\
	&f_{X,\delta}(\bk_1)
	\Delta_{\bn}\left(\bk_1 - \bk_2\right)
	+
	\mathcal{O}
	\left(
	\sum_{j=1}^d{\left|\Delta_{\bn}^{(j)}(\bk_1 - \bk_2)\right|}
	+\sum_{\substack{j,k=1 \\ k > j}}^d{\left|\Delta_{\bn}^{(j, k)}(\bk_1 - \bk_2)\right|}
	+\ldots
	+1
	\right),
	\end{align*}
	and,
	\begin{align*}
	\cum_4\{\widetilde{J}_{\bn}(\bk_1), &\widetilde{J}_{\bn}(\bk_2),
		\widetilde{J}_{\bn}(-\bk_1), \widetilde{J}_{\bn}(-\bk_2)\}\\
	&=
		f_4(\bk_1, \bk2, - \bk_1)
		\Delta_{\bn}\left(\mathbf{0}\right)
		+
		\mathcal{O}
		\left(
		\sum_{j=1}^d{\left|\Delta_{\bn}^{(j)}(\mathbf{0})\right|}
		+\sum_{\substack{j,k=1 \\ k > j}}^d{\left|\Delta_{\bn}^{(j, k)}(\mathbf{0})\right|}
		+\ldots
		+1
		\right).
	\end{align*}
	With the assumption of a grid that grows to infinity in all directions,
	one can verify that the contribution of any term involving $\Delta_{\bn}^{(j)}$,
	$\Delta_{\bn}^{(j, k)}$ and so on, will become negligible w.r.t
	that of the terms involving $\Delta_{\bn}$. 
	We therefore limit our study to the latter terms that appear 
	in~\eqref{eq:141021}.
	\begin{enumerate}
		\item 
	 We have, reminding the reader that the function $w(\cdot)$ defined on 
	 $\mathcal{T}^d$ is extended to $\R^d$ by $2\pi$-periodic extension,
		\begin{align*}
		&\frac{1}{|\bn|^4}
		\sum_{\bk_1, \bk_2\in\Omega_\bn}{
			w_k(\bk_1)
			w_k(\bk_2)
			\left[
				f_{X,\delta}(\bk_1)
				\Delta_{\bn}\left(\bk_1 + \bk_2\right)
			\right]
			\left[
			f_{X,\delta}(\bk_1)
				\Delta_{\bn}\left(-\bk_1 - \bk_2\right)
			\right]
		}\\
		&=\frac{1}{|\bn|^4}
		\sum_{\bk_1, \bk_2\in\Omega_\bn}{
			w_k(\bk_1)
			w_k(\bk_2)
			\left(
			f_{X,\delta}(\bk_1)
			\Delta_{\bn}\left(\bk_1 + \bk_2\right)
			\right)^2
		}
		\\
		&=\frac{1}{|\bn|^4}
		\sum_{\bk_1\in\Omega_\bn}{
			w_k(\bk_1)w_k(2\pi - \bk_1)
			f_{X,\delta}(\bk_1)^2
			|\bn|^2
		}\\
		&=\frac{1}{|\bn|^2}
		\sum_{\bk_1\in\Omega_\bn}{
		w_k(\bk_1)w_k(-\bk_1)
		f_{X,\delta}(\bk_1)^2
		},
		\end{align*}
		which is asymptotically equivalent to
		$\frac{(2\pi)^d}{|\bn|}\int_{{\cal T}^d}{w(\bk)w(-\bk) f_{X,\delta}(\bk)^2}
		d\bk$
		by application of the Dominated Convergence Theorem.
	\item We have, 
	\begin{align*}
	&\frac{1}{|\bn|^4}
	\sum_{\bk_1, \bk_2\in\Omega_\bn}{
		w_k(\bk_1)
		w_k(\bk_2)
		\left[
		f_{X,\delta}(\bk_1)
			\Delta_{\bn}\left(\bk_1 - \bk_2\right)
		\right]
		\left[
		f_{X,\delta}(\bk_1)
			\Delta_{\bn}\left(-\bk_1 + \bk_2\right)
		\right]
	}\\
	&=\frac{1}{|\bn|^4}
	\sum_{\bk_1, \bk_2\in\Omega_\bn}{
		w_k(\bk_1)
		w_k(\bk_2)
		\left(
		f_{X,\delta}(\bk_1)
		\Delta_{\bn}\left(\bk_1 - \bk_2\right)
		\right)^2
	}
	\\
	&=\frac{1}{|\bn|^4}
	\sum_{\bk_1\in\Omega_\bn}{
		w_k(\bk_1)^2
		f_{X,\delta}(\bk_1)^2
		|\bn|^2
	}\\
	&=\frac{1}{|\bn|^2}
	\sum_{\bk_1\in\Omega_\bn}{
		w_k(\bk_1)^2
		f_{X,\delta}(\bk_1)^2
	},
	\end{align*}
	which is asymptotically equivalent to
	$\frac{(2\pi)^d}{|\bn|}\int_{{\cal T}^d}{w(\bk)^2 f_{X,\delta}(\bk)^2}
	d\bk$
	again by application of the Dominated Convergence Theorem.
	\item As for the third term,
	\begin{align*}
	\frac{1}{|\bn|^4}
	\sum_{\bk_1, \bk_2\in\Omega_\bn}{
		w_k(\bk_1)
		w_k(\bk_2)
		f_4(\bk_1, \bk_2, -\bk_1)
		\Delta_{\bn}\left(\mathbf{0}\right)
	}
	\end{align*}
	is asymptotically equivalent to 
	\begin{equation*}
	\frac{(2\pi)^d}{|\bn|}
	\int_{\mathcal{T}^d}\int_{\mathcal{T}^d}w(\bk_1)
	w(\bk_2)f_{X, 4, \delta}(\bk_1, \bk_2, -\bk_1)
	d\bk_1d\bk_2,
	\end{equation*}
	again by application of the Dominated Convergence Theorem,
	and having noted that $\Delta_{\bn}(\mathbf{0})=|\bn|$.
\end{enumerate}
	\end{enumerate}
	By adding the three terms from (i), (ii) and (iii), we
	obtain the stated expression. This concludes the proof. \qed
\end{proof}

\appendix

\makeatletter
 \renewcommand{\@seccntformat}[1]{APPENDIX~{\csname the#1\endcsname}.\hspace*{1em}}
 \makeatother

\bibliographystyle{rss}
\bibliography{guillaumin}